\apptocmd{\sloppy}{\hbadness 10000\relax}{}{}
\numberwithin{equation}{section}
\newcommand{\EE}{\mathcal{E}}
\newcommand{\HH}{\mathcal{H}}
\newcommand{\FF}{\mathcal{F}}
\newcommand{\CC}{\mathcal{C}}
\newcommand{\VV}{\mathcal{V}}
\newcommand{\hh}{\mathfrak{h}}
\newcommand{\RR}{\mathcal{R}}%
\newcommand{\R}{{\mathord{\mathbb R}}}%
\newcommand{\Z}{{\mathord{\mathbb Z}}}%
\newcommand{\N}{{\mathord{\mathbb N}}}%
\newcommand{\C}{{\mathord{\mathbb C}}}%
\newtheorem{theorem}{Theorem}
\newtheorem{proposition}{Proposition}
\newtheorem{lemma}{Lemma}
\newtheorem{corollary}{Corollary}
\theoremstyle{remark}
\newtheorem{remark}{Remark}
\newtheorem{example}{Example}
\theoremstyle{definition}
\newtheorem{definition}{Definition}
\newtheorem{hypothesis}{Hypothesis}
\DeclareMathOperator*{\subst}{subst}
\newcommand{\ran}{{\rm Ran}}
\newcommand{\dom}{{\rm dom}}
\setlist[enumerate,1]{label=(\arabic*)}
\newcommand{\proofparagraph}[1]{\par\medskip\noindent{\bf\footnotesize
    #1}}
\newcommand{\inn}[1]{\langle {#1} \rangle }
\newcommand{\bignode}[2]{\tikz[baseline=(char.base)]{
    \node[fill=white,circle,draw,inner
    sep=0.8pt](char){\begin{minipage}[c][5mm]{5mm}\centering
        $#1$ \end{minipage}}; \node[above] at (char.north) {$#2$};}\ }
\newcommand{\node}[2]{\tikz[baseline=(char.base)]{
    \node[fill=white,circle,draw,inner sep=0.8pt](char){$#1$}; \node[above] at (char.north) {$#2$};}}
\newcommand{\edge}[1]{\mspace{-8mu}\frac{\quad\raisebox{.8ex}{$#1$}\quad}{}\mspace{-8mu}}
\newcommand*{\colorboxed}{}
\def\colorboxed#1#{%
  \colorboxedAux{#1}%
}
\newcommand*{\colorboxedAux}[3]{%
  \begingroup
    \colorlet{cb@saved}{.}%
    \color#1{#2}%
    \boxed{%
      \color{cb@saved}%
      #3%
    }%
  \endgroup
}
\begin{document}

\title{On Asymptotic Expansions in Spin Boson  \mbox{Models}}
\author{Gerhard Br\"aunlich, David Hasler, Markus Lange
  \\ \\  Mathematical Institute, University of Jena \\
  Ernst-Abbe-Platz 2, 07743 Jena, Germany}

\date{\today}

\maketitle

\begin{abstract}
We consider expansions of eigenvalues and eigenvectors of models of
quantum field theory. For a class of models known as generalized spin
boson model we prove the existence of asymptotic expansions of the
ground state and the ground state energy to arbitrary order. We need a
mild but very natural infrared assumption, which is weaker than the assumption
usually needed for other methods such as operator theoretic renormalization to
be applicable. The result complements
previously shown analyticity properties.
\end{abstract}


\section{Introduction}

Perturbation theory is widely used to calculate various
quantities in quantum mechanics.
As long as the perturbation is ``small'' compared to the unperturbed
system one expects to obtain  good  approximations to physical quantities.
In particular, in case of isolated eigenvalues analytic perturbation theory is
available  allowing the calculation of eigenvalues and eigenvectors
in terms of convergent power series, which are also known as
Rayleigh-Schr\"odinger perturbation series \cite{ReeSim4,Kat95}. However,
in many-body quantum systems and models of massless quantum fields the
ground state is typically not isolated from the rest of the spectrum and analytic
perturbation theory is not applicable.  Different methods to cope with these problems have been developed,
see for example \cite{GriHas09,AbdHas12,CatHai04,BacFroPiz06,BacFroPiz09} or references mentioned below.

In this paper  we consider models of massless quantum fields.
Specifically, we consider a  quantum mechanical
system with finitely many degrees of freedom, which is linearly coupled
to a field of relativistic massless bosons.
Such models are also known
as generalized spin-boson models.  They are  used to describe low energy aspects
of non-relativistic quantum mechanical matter interacting with a
quantized radiation field such as a field of phonons or a field of photons.
Various spectral properties of the Hamiltonians of such models have
been investigated.
In particular, we assume that the quantum field is  massless.
This implies that the  ground state energy as well as resonance energies are
 not isolated from the rest of the spectrum.
Existence of ground states and resonance states have been shown to exist for such models
\cite{Fro74,BacFroSig98-1,BacFroSig98-2,BacFroSig99,Ger00,GriLieLos01,AbdHas12}.
In spite that  the ground state energy is embedded in continuous spectrum and analytic perturbation theory is
not applicable, it has been shown  in various situations that  the ground state and the ground state energy are in fact  analytic
functions of the coupling constant
\cite{GriHas09,HasHer11-1,Abd11,AbdHas12}. To prove  these results
one uses operator theoretic renormalization
\cite{BCFS} and in some cases on can employ expansion techniques from  statistical mechanics.
The analyticity  results obtained by renormalization  are  rather  surprising.
The calculation of the Rayleigh-Schr\"odinger expansion coefficients involve sums of divergent expressions,
and it is at first sight not obvious in which  situations these infinities will eventually cancel each other.
On the other hand there exist situations where the ground state energy is not an analytic
function of the coupling constant \cite{BarCheVouVug10}.

In this paper we show that for a large class of generalized spin boson models
there exist   asymptotic expansions for the ground state and the ground state energy
to arbitrary order. Whereas the existence of  asymptotic expansion is
weaker than the existence of an analytic expansion, our result holds
in situations where analytic expansions have  not been shown.
We expect that our technique can be used to derive asymptotic
expansions in situations where analytic expansions in fact do not exist.
Such a situation may  occur when  the unperturbed
operator has a degenerate ground state energy, which is lifted once the
interaction is turned on. This will be addressed in a forthcoming paper by the authors.

We want to mention that for models which we consider asymptotic expansions
have been investigated in several papers. In particular expansions
of  the first few orders have been investigated in
 \cite{HaiSei02,BarCheVug03,BarCheVouVug08,BarCheVouVug10}.
More recently  in \cite{Ara14}  asymptotic expansion formulas
have been  studied  to   arbitrary order, provided the   infrared regularization
is sufficiently strong, i.e., the higher the order of  expansion
the stronger the infrared regularization.
In the present paper we relax this infrared assumption substantially.
Our main result of the paper,  Theorem~\ref{thm:groundstatee}, stated below,
shows the existence of an asymptotic expansion for a reasonable
infrared assumption. The key idea in the proof is to show
that the infinities involved in calculating the Rayleigh-Schr\"odinger
expansion coefficients cancel out.
Showing that these cancellations can be controlled to arbitrary order,
without any analyticity assumption, is the main new technical contribution of the present  paper.

The paper is organized as follows.
In the next section we introduce the model and state the main result.
In Section~\ref{sec:pertheory} we  derive for a general
class of models
formulas for   expansion coefficients  of the
ground state and the ground state energy in terms
of the coupling constant. Assuming that the
expansion coefficients are finite, which will be shown
in Sections~\ref{sec:groundstatee} and  \ref{sec:groundstate},
we determine general conditions for which these expansions coefficients yield
an asymptotic expansion.

In Section~\ref{sec:groundstatee}  we show  Theorem~\ref{thm:mainenergy}, i.e.,
the finiteness of the expansion coefficients of the ground
state energy.
To this end,  we first express the expansion coefficients as a sum  of
linked contractions involving renormalized propagators, which we call
renormalized Feynman  graphs.  The renormalized
propagators take into account the cancellations which
results in an improved infrared behaviour.
Finally we estimate the  renormalized
Feynman graphs  and prove  the finiteness of
each expansion coefficient.

Assuming a certain condition we show in
Section~\ref{sec:groundstate} the finiteness of the expansion coefficients for
the ground state. Similarly to Section~\ref{sec:groundstatee}
we first express the squared of the norm of the
expansion coefficients as a sum  of
linked contractions involving renormalized propagators, except the
one in the middle. We then use that formula to show the finiteness of the
expansion coefficients of the ground state.
\enlargethispage{1cm}

In Section~\ref{sec:proofofmain} we collect the results of the previous
sections and provide a proof of Theorem~\ref{thm:groundstatee}.

\section{Model and Statement of Main Results}
In this section we introduce the model and state the main result.
 Let $\HH_{\rm at}$ be a separable Hilbert
space and  let $H_{\rm at}$ be a selfadjoint operator in $\HH_{\rm at}$.
Assume that  $E_{\rm at} = \inf \sigma (H_{\rm at})$
is a nondegenerate  eigenvalue of $H_{\rm at}$, which is isolated
from the rest of the spectrum,  i.e.,
$$
E_{\rm at} < \epsilon_1 := \inf ( \sigma(H_{\rm at}) \setminus \{ E_{\rm at} \} ) .
$$
Let  $\varphi_{\rm at}$ denote the  normalized eigenvector
and let $P_{\rm at}$ denote the orthogonal  eigenprojection of $E_{\rm at}$.
For a separable Hilbert space $\hh$ we write
\begin{align*}
  L^2_s ((\R^3)^{n} ; \hh )  :=   \big\{ \psi \in L^2((\R^3)^{n} ; \hh ) : \psi(k_1,&\dotsc,k_n) =    \psi(k_{\pi(1)},\dotsc,k_{\pi(n)} )\\
  &\forall  \text{ permutations } \pi \text { of } \{1,\ldots,n
  \} \big\}.
\end{align*}
We introduce the symmetric Fock space
$$
\FF = \bigoplus_{n=0}^\infty \FF_n ,
$$
where the so called $n$-photon subspaces are  defined by
\begin{align*}
  \FF_0 & := \C   , \\
  \FF_n  & :=  L^2_s ((\R^3)^{n }; \C ).
\end{align*}
We introduce  the so called vacuum vector $ \Omega = (1,0,0,\cdots )  \in \FF$.
The free field Hamiltonian is defined by
\begin{align*}
  H_f : \dom(H_f) \subset \FF  &\to   \FF  \\
  (H_f \psi)_n(k_1,\dotsc,k_n)   &:=   (|k_1|+|k_2|+ \cdots + |k_n|)  \psi_n(k_1,\dotsc,k_n) ,
\end{align*}
where $\dom(H_f) := \{ \psi \in \FF : H_f \psi \in \FF \} $.  The total
Hilbert space is defined by
$$
\HH := \HH_{\rm at} \otimes \FF \simeq \bigoplus_{n=0}^\infty
L_s^2((\R^{3})^n ; \HH_{\rm at} ) .
$$
We shall identify the spaces on the right hand side and occasionaly drop the tensor sign in the notation.
For $G : \R^3 \to \mathcal{L}(\HH_{\rm at}) $ a strongly  measurable
function such that
$$
\int \| G(k) \|^2  dk < \infty ,
$$
we define the so called annihilation operator
\begin{align*}
 a(G) &: \dom(a(G)) \subset \HH  \to \HH  \\
  & \psi \mapsto  (a(G) \psi )_n(k_1,\dotsc,k_n): = \sqrt{n+1} \int G^*(k)\psi_{n+1}(k, k_1,\dotsc,k_n) dk     , 
\end{align*}
where $\dom(a(G)) := \{ \psi \in \HH : a(G) \psi \in \HH \}$. One readily verifies that  $a(G)$ is a densely defined
closed operator. We denote its adjoint by
$
a^*(G) := a(G)^* ,
$
and introduce the field operator by
$$
 \phi(G) :=  \overline{ a(G) + a^*(G) }   ,
$$
where the line denotes the closure. \\
To define  the total Hamiltonian  we assume in addition that
\begin{equation} \label{eq:assonG}
\int \| G(k) \|^2(1+|k|^{-1}) dk < \infty ,
\end{equation}
since then it is well known that $\phi(G)$ is
infinitesimally   small  with respect to
$1_{\HH_{\rm at}} \otimes H_f $. This allows us to
 define the total  Hamiltonian of the interacting system by
\begin{equation} \label{eq:defofh}
H(\lambda) = H_{\rm at} \otimes 1_\FF + 1_{\HH_{\rm at}} \otimes H_f +
\lambda  V   ,
\end{equation}
where $\lambda \in \R$ is the coupling constant and $V = \phi(G)$, as a
semibounded
selfadjoint operator on the domain  $\dom(H(0))$. Let
$$
E(\lambda) = \inf \sigma(H(\lambda)) .
$$
Below we shall make the following assumption
\begin{hypothesis}   \label{hyp:0}
  There exists a positive constant $\lambda_0$ such that for all
  $\lambda \in [0,\lambda_0]$ the number  $E(\lambda)$ is a simple  eigenvalue of $H(\lambda)$
  with eigenvector  $\psi(\lambda) \in \HH$.
\end{hypothesis}

\begin{remark} \label{rem:exofgs}
We note that  the  existence of ground states has  been verified in many
cases \cite{Fro74,Spo98,BacFroSig99,Ger00,GriLieLos01}.
In particular, it has been shown in \cite{Ger00} that
Hypothesis \ref{hyp:0} holds if $H_{\rm at}$ has compact resolvent and
the coupling function satisfies
\begin{equation} \label{eq:assonG2}
\int \| G(k) \|^2(1+|k|^{-2}) dk < \infty .
\end{equation}
\end{remark}

We will outline in the next  section, that
if one  formally expands the eigenvalue equation for  the ground state in powers of the
coupling constant $\lambda$ and inductively solves    for the  expansion coefficients of the
ground state energy   one  obtains the recursion relation   \eqref{eq:E_nee1def}, below.
One can show that these expansion coefficients are indeed finite,
which is  the content of the next theorem. To formulate it we introduce the following notations.
We write
$$
H_0 = H(0)
$$
and
$$\psi_0 = \varphi_{\rm at} \otimes \Omega, $$
and denote by  $P_0$ the projection onto $\psi_0$ and let $\bar{P}_0 = 1 - P_0$.
Let  $P_\Omega$ denote the orthogonal projection in $\FF$ onto $\Omega$.
Then we can write
\begin{equation} \label{decompofproj}
\bar{P}_0 = P_{\rm at} \otimes \bar{P}_\Omega + \bar{P}_{\rm at} \otimes 1_\FF ,
\end{equation}
where $\bar{P}_\Omega = 1_\FF - P_\Omega$ and $\bar{P}_{\rm at} = 1_{\HH_{\rm at}} - {P}_{\rm at}$.

\begin{theorem} \label{thm:mainenergy} Suppose that \eqref{eq:assonG2}  holds.
 Then there exists a
  unique sequence $(E_n)_{n \in \N}$ in $\R$ such that
  \begin{align} \label{eq:E_0ee1}
    E_0 &= E_{\rm at} \\
    E_1 &=  \inn{ \psi_0 , V \psi_0 } \label{eq:E_nee1-2}    \\
    E_n &= \lim_{\eta \downarrow 0} E_{n}(\eta ) , \quad n \geq
          2  , \label{eq:E_nee1-3}
  \end{align}
  where
  \begin{align}\label{eq:E_nee1def}
     &E_n(\eta) :=   \\ &\;\;\sum_{k=2}^{n}
     \sum_{\substack{j_1 + \cdots + j_k = n \\ j_s \geq 1 }} \!\!\!\!
    \langle \psi_0  , ( \delta_{1 j_1 }  V - E_{j_1} )
    \prod_{s=2}^k  \left\{ (E_0 - \eta - H_0)^{-1} \bar{P}_0   (  \delta_{1 j_s }  V - E_{j_s}   ) \right\} \psi_0  \rangle    \nonumber
  \end{align}
  In particular the limit on the right hand side of
  \eqref{eq:E_nee1def} exists and is a finite number. The sequence
  $(E_n)_{n \in \N}$ can be defined inductively using
  \eqref{eq:E_0ee1}--\eqref{eq:E_nee1-3}.
\end{theorem}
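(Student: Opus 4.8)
The statements about uniqueness and the inductive definition are immediate once the limit is known to exist: formula \eqref{eq:E_nee1def} expresses $E_n(\eta)$, and hence (after letting $\eta\downarrow 0$) $E_n$, in terms of $V$, the resolvent of $H_0$, and the lower-order coefficients $E_2,\dots,E_{n-1}$, which enter only through the counterterms $-E_{j_s}$. So the real content of the theorem is the existence and finiteness of the limit in \eqref{eq:E_nee1-3}. My plan is to convert the operator-theoretic expression \eqref{eq:E_nee1def} into a finite sum of explicit momentum integrals and to show that, after a suitable regrouping, each integrand is dominated uniformly in $\eta\in(0,1]$ by an integrable function; dominated convergence then supplies the limit.

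First I would insert $V=a(G)+a^*(G)$ into the product in \eqref{eq:E_nee1def} and expand. Using $a(G)\psi_0=0$ together with a pull-through computation — commuting each annihilation operator to the right through the resolvents, which shifts the argument of $H_0$ by the energy $|k|$ of the photon it carries — one reduces the vacuum expectation $\langle\psi_0,\,\cdots\,\psi_0\rangle$ to a finite sum over Wick pairings, i.e. Feynman graphs. In each graph the propagators $(E_0-\eta-H_0)^{-1}\bar P_0$ become scalar factors $(E_0-\eta-H_0-\omega)^{-1}$, where $\omega$ is the sum of the energies of the photon lines in flight at that point, and the whole contribution is an integral over the photon momenta weighted by matrix elements of $G$ in the state $\varphi_{\rm at}$. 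At this stage the counterterms $-E_{j_s}$ play a purely algebraic role: they cancel the unlinked contributions, leaving a sum of linked graphs, exactly as in the linked-cluster theorem.

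The genuine difficulty is infrared. A photon of momentum $k$ that spans several vertices produces the denominator $|k|$ (up to the regulator $\eta$ and the energies of the other lines) once for each intervening propagator, so a naive graph-by-graph estimate would require integrability of $\|G(k)\|^2|k|^{-p}$ with $p$ growing with the order $n$ — precisely the order-dependent regularization one wants to avoid. The remedy is to absorb the part of each reduced resolvent that is diagonal on the dangerous intermediate states of the form $\varphi_{\rm at}\otimes(\text{soft photons})$ into \emph{renormalized} propagators. Because $\bar P_0$ removes the exact unperturbed ground state while the energy counterterms subtract the corresponding self-energy, the renormalized propagator carries a numerator that vanishes as the soft momentum tends to $0$, gaining a power of $|k|$ per soft line and thereby reducing the effective infrared singularity of each photon line to a level integrable against $\|G(k)\|^2|k|^{-2}\,dk$. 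This last integral is finite by \eqref{eq:assonG2}, and the resulting bound is uniform in $\eta$, which is what dominated convergence requires.

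I expect the main obstacle to be showing that these cancellations can be organized to arbitrary order. One must verify that after the renormalized propagators are introduced no graph retains a residual soft singularity stronger than the $|k|^{-2}$ controlled by \eqref{eq:assonG2}, and that the bookkeeping matching each counterterm $E_{j_s}$ to the singular sub-block it cancels is consistent across all compositions $j_1+\dots+j_k=n$ simultaneously. This is an inductive, combinatorial argument on the graph structure, and carrying it out \emph{uniformly} in $\eta$ — so that a single integrable majorant covers the whole family $\{E_n(\eta)\}_{\eta\in(0,1]}$ — is the technical heart of the proof; once it is in place, the remaining estimates on the renormalized graphs are routine.
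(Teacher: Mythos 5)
Your overall strategy---Wick expansion via the pull-through formula, reorganization of the energy counterterms into renormalized propagators, an order-by-order induction, and dominated convergence uniformly in $\eta$---is the same as the paper's. However, two of your intermediate claims are wrong or incomplete in ways that would prevent the argument from closing.

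First, the linked-cluster step. You assert that the counterterms $-E_{j_s}$ ``cancel the unlinked contributions, leaving a sum of linked graphs, exactly as in the linked-cluster theorem'', and then, two sentences later, you use the \emph{same} counterterms to subtract the self-energy inside the renormalized propagators. Each counterterm occurs once in \eqref{eq:E_nee1def}; it cannot do both jobs. What actually happens (Proposition~\ref{thm:algebraicenergies}) is that the disconnected contractions \emph{combine} with the counterterms into products $\bigl[\prod_{i}\widehat{C}_{j_i}(r,\eta)R(r,\eta)\bigr]\widehat{C}_{j_k}(r,\eta)$ of renormalized connected blocks $\widehat{C}_m=C_m-\mathcal{E}_m$ separated by resolvents. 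After splitting $R=R^\perp+R^\parallel$, the part of these products containing only $R^\perp$ \emph{survives} into the final formula for $E_n$ (Corollary~\ref{cor:energyformula}: $E_n$ is not a sum of linked graphs alone), and only the part containing the singular $R^\parallel$ vanishes as $\eta\downarrow0$. That vanishing \emph{is} the self-energy cancellation, and it works only because $E_m$ is defined inductively so that $P_{\rm at}\widehat{G}_m(0,0)P_{\rm at}=0$.

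Second, the rate of the cancellation. A ``numerator that vanishes as the soft momentum tends to $0$'' is not enough to beat the bound $\|R^\parallel(r,\eta)\|\le (r+\eta)^{-1}$; one needs the quantitative estimate $\|P_{\rm at}\widehat{G}_m(r,\eta)P_{\rm at}\|\le d_m(r+\eta)$, i.e.\ a Lipschitz bound at the origin. The paper obtains this from a first-order Taylor expansion, which forces the induction hypothesis to carry not merely continuity but continuous differentiability of $C_m,G_m$ in $(r,\eta)$ with uniformly bounded derivatives; the derivative estimate (Lemma~\ref{lem:lemestimate2}) places an extra power $(|r|^{-1}+1)$ on one edge, and closing it under \eqref{eq:assonG2} again requires combinatorial input. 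Relatedly, the finiteness of the genuinely linked graphs under \eqref{eq:assonG2}---which you label routine---rests on a non-obvious lemma (Lemma~\ref{lem:remlast}: from any linked pair partition one can delete a pair so that the remainder stays linked), which is what guarantees that each photon momentum absorbs at most the two inverse powers that $\int\|G(k)\|^2(1+|k|^{-2})\,dk<\infty$ can pay for. Without these two ingredients the induction does not close, so they cannot be deferred as bookkeeping.
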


\begin{remark} \label{rem:div} We note that the positive number $\eta$ appearing in \eqref{eq:E_nee1def} serves as
a regularization. The theorem states that the limit exists as the regularization is removed.
We note that this is not obvious, as some of the individual terms  on
the right hand side of \eqref{eq:E_nee1def}
diverge. This can be illustrated as follows. Consider for $n=2m$ the summand where  $j_s = 1$  for all $s$.
Inserting   \eqref{decompofproj} and  $a^*(G) + a(G)$ for $V$, multiplying out the resulting
expression, using  Wicks theorem and the so called pull through formula \cite[Appendix A]{BacFroSig98-1}
one obtains various terms.
One of them being
\begin{align}
 (-1)^{n-1}\!\! \int \! & dk_1 \cdots d k_m
 \langle  \varphi_{\rm at} ,  G^*(k_1) \frac{ P_{\rm at}  }{|k_1| + \eta}
 \colorboxed{red}{G^*(k_2)  \frac{P_{\rm at}}{|k_1|+|k_2| + \eta} G(k_2)}
 \frac{P_{\rm at}}{|k_1| + \eta}  \nonumber \\
 &\; \cdots\,
  G^*(k_m)  \frac{P_{\rm at}}{|k_1|+|k_m| + \eta} G(k_m) \frac{P_{\rm at} }{|k_1| + \eta}   G(k_1)  \varphi_{\rm at}   \rangle    , \label{eq:remarkex}
\end{align}
which is obtained by contracting the first and the last entry of the interaction and
contracting the remaining  nearest neighbor pairs. This can be symbolically pictured  as follows
\begin{equation*}
 \contraction{1-1-1-1-1-1-1-1-1-1-1-1-1-1}{(1,14)_2,(2,3),(4,5),(6,7),(8,9),(10,11),(12,13)} .
   \end{equation*}
If $\eta \downarrow 0$ the integral over $k_1$ may  become divergent for large $m$. This is the case, for example,  if
$  \int dk |k|^{-m}  \| P_{\rm at} G(k) P_{\rm at} \|^2$  diverges for $m$ sufficiently large.
The convergence of   \eqref{eq:E_nee1def} can be restored using  cancellations originating from the  energy subtractions present  in the same  formula. To
illustrate this, consider the  summand where  $j_1 = 1$, $j_2=2$ and $j_3=\cdots=j_{n-1}=1$. As before one obtains
   various terms with one of them being the same as   \eqref{eq:remarkex} except for  the expression in the box which is replaced by $E_2 P_{\rm at}$.
   Thus adding these two terms  one can factor out
\begin{align}
&\int dk_2 P_{\rm at} G^*(k_2)  \frac{1}{|k_1|+|k_2| + \eta } G(k_2) P_{\rm at} +  E_2 P_{\rm at} \nonumber \\
&\qquad =  \int dk_2  P_{\rm at} G^*(k_2) \left( \frac{1}{|k_1|+|k_2| + \eta } -  \frac{1}{|k_2|} \right)  G(k_2) P_{\rm at} \nonumber \\
&\qquad = - (|k_1|+\eta)  \int dk_2P_{\rm at}  G^*(k_2)  \frac{1}{(|k_1|+|k_2| + \eta )|k_2|}  G(k_2)P_{\rm at} , \label{eq:exsingcanc}
\end{align}
where we used again  \eqref{eq:E_nee1def} to calculate $E_2$.
One sees that replacing the expression in the box in \eqref{eq:remarkex} by  \eqref{eq:exsingcanc}  remedies the
singularity  $k_1 \to 0$. To prove  Theorem  \ref{thm:mainenergy} we  will
show that similar cancellations can be carried out at every order.
\end{remark}

Once one has established the finiteness of the expansion coefficients of the ground state energy, we will
show that this yields an asymptotic expansion of the ground state energy. This is the content of
the following theorem.

\begin{theorem} \label{thm:groundstatee} Suppose \eqref{eq:assonG2} and Hypothesis~\ref{hyp:0} holds.
Then the  sequence
  $(E_n)_{n \in \N}$ defined in Theorem  \ref{thm:mainenergy}  yields an asymptotic
  expansion of the ground state energy, i.e.,
  \begin{align*}
    \lim_{\lambda \downarrow 0} \lambda^{-n} \left( E(\lambda)  - \sum_{k=0}^n  E_k \lambda^k \right)   &=  0 .
  \end{align*}
\end{theorem}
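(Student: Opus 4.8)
The plan is to reduce the embedded eigenvalue problem by a Feshbach/intermediate-normalization splitting and to recognize that the physical binding energy plays exactly the role of the artificial regularization $\eta$ in \eqref{eq:E_nee1def}. By Hypothesis~\ref{hyp:0} a ground state $\psi(\lambda)$ exists for $\lambda\in[0,\lambda_0]$, and a standard continuity argument (using that $V$ is infinitesimally $H_0$-bounded, together with min--max) gives $E(\lambda)\to E_0$ and $\psi(\lambda)\to\psi_0$ as $\lambda\downarrow0$; hence for small $\lambda$ we may impose the intermediate normalization $\inn{\psi_0,\psi(\lambda)}=1$. Projecting $H(\lambda)\psi(\lambda)=E(\lambda)\psi(\lambda)$ onto $\psi_0$ and onto its orthogonal complement yields
\begin{align*}
  E(\lambda)-E_0 &= \lambda\,\inn{\psi_0,V\psi(\lambda)}, \\
  \bar{P}_0\psi(\lambda) &= \lambda\,(E_0-\eta(\lambda)-H_0)^{-1}\bar{P}_0\,V\psi(\lambda),
\end{align*}
where $\eta(\lambda):=E_0-E(\lambda)$. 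Testing $H(\lambda)$ against the trial state $\psi_0$ (whose field expectation vanishes) gives $E(\lambda)\le E_0$, so $\eta(\lambda)\ge0$; crucially, the reduced resolvent appearing above is \emph{precisely} the regularized propagator of \eqref{eq:E_nee1def} with $\eta$ replaced by the genuine binding energy $\eta(\lambda)$, which tends to $0$ as $\lambda\downarrow0$.

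Second, I would iterate the eigenvector equation $N$ times, writing $R:=(E_0-\eta(\lambda)-H_0)^{-1}\bar{P}_0$ and $\psi(\lambda)=\sum_{m=0}^{N}\lambda^m(RV)^m\psi_0+\lambda^{N+1}(RV)^{N+1}\psi(\lambda)$, and insert this into the energy equation to obtain a finite Brillouin--Wigner expansion $E(\lambda)-E_0=\sum_{m=0}^{N}\lambda^{m+1}\inn{\psi_0,V(RV)^m\psi_0}+\text{(remainder)}$, in which every propagator carries the self-consistent energy $E_0-\eta(\lambda)$. I would then run an induction on $n$: its base case is the continuity $E(\lambda)\to E_0$, and at step $n$ the order-$(n-1)$ expansion $\eta(\lambda)=-\sum_{k=2}^{n-1}E_k\lambda^k+o(\lambda^{n-1})$ is available, which allows one to expand the $\eta(\lambda)$-dependence of the propagators to the needed accuracy and collect powers of $\lambda$. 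This reorganization converts the Brillouin--Wigner terms into the Rayleigh--Schr\"odinger combinations of \eqref{eq:E_nee1def}, the energy subtractions $-E_{j_s}$ arising exactly from the $\eta(\lambda)$-dependence of the denominators together with the normalization; the resulting order-$\lambda^n$ contribution is $E_n(\eta(\lambda))\lambda^n$, which by Theorem~\ref{thm:mainenergy} equals $E_n\lambda^n+o(\lambda^n)$. Matching then advances the induction and yields $E(\lambda)=\sum_{k=0}^n E_k\lambda^k+o(\lambda^n)$.

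The hard part is the control of the singular propagators and, above all, of the truncation remainder. Since $\|R\|\sim\eta(\lambda)^{-1}$ while $\eta(\lambda)=O(\lambda^2)$, the product $\lambda\|RV\|$ does not tend to zero, so the Neumann series does not converge and the individual Brillouin--Wigner matrix elements $\inn{\psi_0,V(RV)^m\psi_0}$ diverge as $\lambda\downarrow0$: a naive bound on the remainder $\lambda^{N+2}\inn{\psi_0,V(RV)^{N+1}\psi(\lambda)}$ grows like $\lambda^{-N}$ and is useless. Thus the main obstacle is to prove that, after the reorganization, the relevant matrix elements are bounded \emph{uniformly} in $\eta$ down to $\eta=0$, so that the renormalized series has genuinely bounded coefficients and the remainder is $O(\lambda^{N+2})=o(\lambda^n)$ for $N\ge n$. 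This is exactly the renormalized-propagator cancellation mechanism of Theorem~\ref{thm:mainenergy}, but now it must be carried out at finite regularization and, for the remainder, applied to the mixed matrix element that still contains the interacting state $\psi(\lambda)$ in place of $\psi_0$. The infrared assumption \eqref{eq:assonG2} enters here, guaranteeing that the leading eigenvector correction is square-integrable, i.e.\ $\|\bar{P}_0\psi(\lambda)\|=O(\lambda)$, which anchors these estimates; the only remaining technical point is to rule out or regularize away the degenerate case $\eta(\lambda)=0$, again using \eqref{eq:assonG2}.
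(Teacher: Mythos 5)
Your overall skeleton (continuity of $E(\lambda)$ and $\psi(\lambda)$, intermediate normalization, Feshbach splitting, resolvent iteration) is close in spirit to the paper's ``resolvent method'' (Lemma~\ref{lem:feshbach}), and your Claim-type estimates $E(\lambda)\le E_0$, $\|\bar P_0\psi(\lambda)\|=O(\lambda)$ match Proposition~\ref{prop:contgse}. But there is a genuine gap at exactly the point you flag as ``the main obstacle'': the control of the truncation remainder $\lambda^{N+1}\langle \psi_0, V(RV)^{N+1}\psi(\lambda)\rangle$ with $R=(E_0-\eta(\lambda)-H_0)^{-1}\bar P_0$. You assert that ``this is exactly the renormalized-propagator cancellation mechanism of Theorem~\ref{thm:mainenergy} \ldots applied to the mixed matrix element that still contains the interacting state $\psi(\lambda)$,'' but that mechanism is built on Wick's theorem and the pull-through formula for vacuum expectation values $\langle\psi_0,\cdot\,\psi_0\rangle$; it gives no handle on a matrix element against the unknown interacting state $\psi(\lambda)$, of which one only knows $\|\psi(\lambda)-\psi_0\|=O(\lambda)$. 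A naive bound on the remainder is, as you note, of order $\lambda^{-N}$, and no argument is supplied that closes this. A second difficulty is hidden in your ``reorganization'' step: replacing the self-consistent $\eta(\lambda)$ in the denominators by its order-$(n-1)$ polynomial approximation produces an error $o(\lambda^{n-1})$ multiplied by powers of $\|R\|\sim\eta(\lambda)^{-1}\sim\lambda^{-2}$, so the conversion from Brillouin--Wigner to Rayleigh--Schr\"odinger form is itself infrared-singular and again requires cancellations you have not established. Finally, when $\eta(\lambda)=0$ the operator $R$ is unbounded and the iteration is not even defined without domain hypotheses.

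The paper avoids all of this by a different division of labour. It first proves finiteness of the \emph{eigenvector} coefficients $\psi_n=\lim_{\eta\downarrow0}\psi_n(\eta)$ (Theorem~\ref{thm:groundstate}, a substantial piece of work parallel to Theorem~\ref{thm:mainenergy}), shows that the pairs $(E_n,\psi_n)$ satisfy the Rayleigh--Schr\"odinger recursion \eqref{eq:recurs}, and then invokes the purely abstract Lemma~\ref{lem:abstractmain}. The remainder there is controlled not by bounding any reduced resolvent but by pairing the error equation with $\psi(\lambda)$ and using the eigenvalue equation \eqref{eq:changeproj} to trade $\bar P_0(H(\lambda)-E(\lambda))\bar P_0\rho_n(\lambda)$ for $-\bar P_0 V P_0\psi(\lambda)$; this duality trick, together with the finiteness of the $\psi_k$, is what makes the induction close. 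Your proposal never uses (or establishes) the finiteness of the $\psi_n$, and without that ingredient, or an equivalent substitute for the remainder estimate, the argument does not go through.
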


\begin{remark}
We want to note that if we would have the infrared condition
$
\int \| G(k) \|^2(1+|k|^{-2-\mu}) dk < \infty $,
for some $\mu > 0$, which is slightly stronger  than  \eqref{eq:assonG2},
then it would follow from \cite{GriHas09} that one has analyticity.
Moreover, there are couplings  with    \eqref{eq:assonG2}  where
 additional  symmetries may    cancel
 infrared divergencies such that the ground state energy is analytic \cite{HasHer11-2,HasHer11-3}.
\end{remark}

\begin{remark}
Note that in view of Remark \ref{rem:exofgs}   Hypothesis~\ref{hyp:0} is not a restrictive assumption.
And in many situations follows already from Inequality \eqref{eq:assonG2}.
\end{remark}

In the remaining parts of the paper we provide proofs of the above results and
furthermore we also  show the finiteness of the  expansion coefficients
for the ground state.

\section{Asymptotic Perturbation Theory}
\label{sec:pertheory}

In this section we derive formulas for the expansion
coefficients of the ground state and its energy.
Moreover  we show that provided  these coefficients are finite
up to some order, say  $n$,
 and a continuity assumption for  the ground state holds,
 then the ground state energy has an asymptotic
 expansion up to order $n$.
 We shall derive this result with two different methods.
 The first method in Subsection~\ref{subsec:formexp}  uses formal expansions and
 the comparison of coefficients combined with
 an analytic estimate.  The second method outlined  in
  Subsection~\ref{subsec:resfesh}  is based on  a  Feshbach type argument
  together with a  resolvent expansion.

We state our results for more general operators than introduced
in the previous section. Nevertheless we will use the same symbols
as in the previous section.
Let  $V$ and  $H_0$ be selfadjoint operators in a Hilbert space $\HH$.
To   prove our results we will use  the following assumption.

\begin{hypothesis}
  \label{hyp:1}
   The operator
    $H_0$ is bounded from below and   $V$ is $H_0$-bounded.
  There exists a positive constant $\lambda_0$ such that for all
  $\lambda \in [0,\lambda_0]$ there exists a simple  eigenvalue
  $
  E(\lambda )
  $
   of $$H(\lambda) = H_0 + \lambda V $$ with eigenvector $\psi(\lambda)$.
  Moreover,
  \begin{equation}
    \tag{H}
    \begin{split}
      \lim_{\lambda \to 0} \psi(\lambda) = \psi(0) \neq 0 , \quad
      \lim_{\lambda \to 0} E(\lambda) = E(0)
    \end{split}
  \end{equation}
  and
    \begin{equation}
    \tag{N}
    \langle \psi(0),\psi(\lambda) \rangle = 1
  \end{equation}
for all
  $\lambda \in [0,\lambda_0]$.
\end{hypothesis}
We note that (N)  can always be achieved  using a  suitable
normalization, possibly making the positive number   $\lambda_0$  smaller.
For notational convenience we shall write
  \begin{equation*}
  E_0 = E(0) , \quad   \psi_0 = \psi(0)     .
  \end{equation*}
Let $P_0$ denote the projection onto the kernel of $H_0 - E_0$ and let $\bar{P}_0 = 1 - P_0$.

\subsection{Expansion Method}
\label{subsec:formexp}

The idea behind the expansion method is to  expand the eigenvalue
equation in a formal power series and  equating coefficients.
This will lead to  Eq.   \eqref{eq:recurs}.
In Lemma   \ref{lem:abstractmain} we show that provided one
has a solution of  \eqref{eq:recurs}  up to some order $n$,
then the ground state energy has an asymptotic expansion up to the same order,
provided Hypothesis   \ref{hyp:1} holds.
In  Lemma \ref{lem:directform0} we inductively solve  \eqref{eq:recurs},
and in Lemma \ref{lem:directform} we give an explicit formula
for the inductive solution. We note that a similar result has been obtained in \cite{Ara14}.
However  in contrast to the result  in \cite{Ara14} we have less restrictive
assumptions.

\begin{lemma}
  \label{lem:abstractmain}
  Suppose
  Hypothesis~\ref{hyp:1} holds.  Let $n \in \N$ and suppose there exist
  $\psi_{1} ,\dotsc,\psi_{n} \in \bar{P}_0 \HH$ and
  $E_{1} ,\dotsc, E_{n} \in \C$ such that for all $m \in \N$ with
  $m \leq n$ we have
  \begin{equation} \label{eq:recurs} H_0 \psi_{m} + V \psi_{m-1} =
    \sum_{k=0}^m E_{k} \psi_{m-k} .
  \end{equation}
  Then  for all $ m \in \{1,\dotsc,n\}$ we have  that
  \begin{align}
    \lim_{\lambda \downarrow 0} \lambda^{-m} \left( E(\lambda)  - \sum_{k=0}^m  E_{k} \lambda^k \right)   &=  0  ,  \label{eq:asympgsab1} \\
    \lim_{\lambda \downarrow 0} \lambda^{-m}
    \langle  \psi_0 , V ( \psi(\lambda)  - \sum_{k=0}^m  \psi_{k} \lambda^k )   \rangle &=  0 . \label{eq:asympgsab2}
  \end{align}
\end{lemma}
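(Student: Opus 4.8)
The plan is to reduce both assertions to a single family of scalar limits and then to establish those by an induction driven by an exact recursion, in which the only genuine analytic input is the continuity hypothesis (H).

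First I would record two exact scalar identities. Pairing the eigenvalue equation $H(\lambda)\psi(\lambda)=E(\lambda)\psi(\lambda)$ with $\psi_0$, using $H_0\psi_0=E_0\psi_0$, self-adjointness of $H_0$, and the normalization (N) $\langle\psi_0,\psi(\lambda)\rangle=1$, gives
\[ E(\lambda)=E_0+\lambda\langle\psi_0,V\psi(\lambda)\rangle . \]
Pairing the recursion \eqref{eq:recurs} with $\psi_0$ and using $\psi_k\in\bar{P}_0\HH$ (so $\langle\psi_0,\psi_k\rangle=\delta_{k0}$) gives $E_m=\langle\psi_0,V\psi_{m-1}\rangle$ for $1\le m\le n$. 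Writing $A_j(\lambda):=\lambda^{-j}\bigl(E(\lambda)-\sum_{k=0}^{j}E_k\lambda^k\bigr)$ and setting $E_{n+1}:=\langle\psi_0,V\psi_n\rangle$, these identities show that \eqref{eq:asympgsab1} at order $m$ is exactly $A_m\to0$ and that \eqref{eq:asympgsab2} at order $m$ is exactly $A_{m+1}\to0$. Hence the lemma is equivalent to $\lim_{\lambda\downarrow0}A_j(\lambda)=0$ for $j=1,\dots,n+1$. The base case $A_1=\langle\psi_0,V(\psi(\lambda)-\psi_0)\rangle\to0$ is immediate from (H).

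Next I would set up an exact recursion for the auxiliary functions $c_k(\lambda):=\langle\psi_k,\psi(\lambda)\rangle$ and $d_k(\lambda):=\langle\psi_k,V\psi(\lambda)\rangle$, for $0\le k\le n$. Pairing the eigenvalue equation with $\psi_k$, substituting $H_0\psi_k=\sum_{i=0}^{k}E_i\psi_{k-i}-V\psi_{k-1}$ from \eqref{eq:recurs}, using self-adjointness of $H_0$ and $V$ (legitimate since $\psi_k,\psi(\lambda)\in\dom(H_0)\subset\dom(V)$), and inserting $E(\lambda)=E_0+\lambda d_0$, yields
\[ d_{k-1}=\lambda d_k-\lambda d_0\,c_k+\sum_{i=1}^{k}E_i\,c_{k-i},\qquad 1\le k\le n . \]
The only analytic facts I use here are the consequences of (H): $c_k(\lambda)\to\delta_{k0}$ and $d_k(\lambda)\to E_{k+1}$ as $\lambda\downarrow0$ (no rate of convergence is available, and $E_{n+1}$ appears precisely as $\lim d_n$).

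Finally I would prove $A_j\to0$ by induction on $j$ (equivalently, that $d_0$ has asymptotic expansion $\sum_{k\ge0}E_{k+1}\lambda^k$ up to order $n$) by iterating the displayed recursion starting from $d_0$. A key bookkeeping device is the exact relation $A_{j+1}=\lambda^{-1}A_j-E_{j+1}$, which shows that the induction hypothesis ``$A_1,\dots,A_N\to0$'' already forces the quantitative bound $A_l=O(\lambda)$ for $l\le N-1$. \emph{The main obstacle is that $c_k\to0$ carries no a priori rate}, so the terms produced by the iteration that are not explicitly accompanied by a power of $\lambda$ are dangerous. I expect these to cancel: grouping them, the factors $E_i-d_0$ (and their higher analogues) that multiply some $c_k$ reassemble — via $A_{j+1}=\lambda^{-1}A_j-E_{j+1}$ — into products of a lower-order remainder $A_l$ (hence $O(\lambda)$ by the induction hypothesis) with a factor $c_{k'}\to0$ ($k'\ge1$), and are therefore $o(\lambda)$ beyond the order already controlled. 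Making this precise at every order — identifying exactly which terms pair up and verifying that the cancellation closes the induction — is the technical heart of the argument, and it is the abstract counterpart of the energy-subtraction cancellations described in Remark~\ref{rem:div}.
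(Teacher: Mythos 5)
Your reduction of both assertions to the single family of limits $A_j\to 0$, $j=1,\dotsc,n+1$, is correct, and the scalar recursion for $c_k=\inn{\psi_k,\psi(\lambda)}$ and $d_k=\inn{\psi_k,V\psi(\lambda)}$ is a genuinely different route from the paper's. The paper works instead with the vector remainders $e_n(\lambda)=\lambda^{-n}\bigl(E(\lambda)-\sum_{k=0}^nE_k\lambda^k\bigr)$ and $\rho_n(\lambda)=\lambda^{-n}\bigl(\psi(\lambda)-\sum_{k=0}^n\lambda^k\psi_k\bigr)$: subtracting the recursion \eqref{eq:recurs} from the eigenvalue equation gives the exact remainder identity \eqref{eq:iterror}; applying $P_0$ yields \eqref{eq:asympgsab1}, and applying $\bar{P}_0$ and pairing with $\psi(\lambda)$ via \eqref{eq:changeproj} yields \eqref{eq:asympgsab2}. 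The advantage of that route is precisely that the cancellation you worry about never has to be tracked term by term: every rateless quantity $c_i=\inn{\psi_i,\bar{P}_0\psi(\lambda)}$ enters only through the single factor $\bar{P}_0\psi(\lambda)\to 0$ multiplying an expression that is merely bounded, so no bookkeeping of which terms pair up is required.

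As written, however, your proposal stops exactly at the decisive step: you ``expect'' the dangerous terms to reassemble and explicitly defer the verification, so this is not yet a proof. The gap is fillable, and your guess for the mechanism is the right one. From your recursion $d_{k-1}=\lambda d_k-\lambda d_0c_k+\sum_{i=1}^kE_ic_{k-i}$ (with $c_0=1$ by (N)) and the relation $A_{j+1}=\lambda^{-1}A_j-E_{j+1}$ one proves by induction on $j$ the closed identity
\begin{equation*}
A_{j+1}=d_j-d_0c_j-E_{j+1}-\sum_{i=1}^{j-1}c_i\bigl(A_{j-i+1}+E_{j-i+1}\bigr),\qquad 1\le j\le n;
\end{equation*}
the induction step uses only the substitution of $d_j-E_{j+1}$ from the recursion at $k=j+1$ and the repeated use of $\lambda^{-1}A_l=A_{l+1}+E_{l+1}$ to absorb every loose factor of $\lambda^{-1}$ into a bounded bracket. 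Given this identity, the convergence induction closes immediately: $d_j\to\inn{\psi_j,V\psi_0}=E_{j+1}$ (for real $E_{j+1}$; with $E_k\in\C$ you must track the complex conjugates your adjoint manipulations produce), $d_0c_j\to 0$ for $j\ge 1$, and each summand vanishes in the limit because $c_i\to 0$ while $A_{j-i+1}$ is bounded by the induction hypothesis. You should also make explicit that $\psi_k\in\dom(V)$, which follows from \eqref{eq:recurs} together with $V$ being $H_0$-bounded, since otherwise $d_k$ and its limit are not defined.
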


First observe that \eqref{eq:recurs} implies that for all $m \leq n$
we have
\begin{align*}
  \inn{ \psi_{0} , V \psi_{m-1} } = E_{m} .
\end{align*}

\begin{proof}
  Proof by induction in $n$.  We define for
  $\lambda \in ( 0 , \lambda_0 )$ the quantities
$$
e_n(\lambda) := \lambda^{-n} ( E(\lambda) - ( E_0 + \lambda E_{1} +
\lambda^2 E_{2} + \cdots + \lambda^n E_{n} ) )
$$
$$
\rho_n(\lambda) := \lambda^{-n} ( \psi(\lambda) - ( \psi_0 + \lambda
\psi_{1} + \lambda^2 \psi_{2} + \cdots + \lambda^n \psi_{n} ) )
.
$$
Equation \eqref{eq:asympgsab2} for $m=0$ is just Hypothesis~\ref{hyp:1}.
Thus it remains to  show the induction step.

The eigenvalue equation gives
\begin{equation}  \label{eq:changeproj}
\bar{P}_0 ( H(\lambda) - E(\lambda) ) \bar{P}_0 \psi(\lambda)  =  -  \bar{P}_0  V {P}_0 \psi(\lambda) .
\end{equation}

\proofparagraph{$n-1 \to n$:}
Suppose that \eqref{eq:recurs} holds for all $m \in \{1,\dotsc,n\}$.
By induction Hypothesis we know that
$\lambda E_{n} + \lambda e_n(\lambda) \to 0$ and
$\inn{ V \psi_0 , \lambda \psi_{n} + \lambda \rho_n(\lambda)} \to
0$.
From the eigenvalue equation we find
\begin{align*}
   ( H_0 + \lambda V )&\left[ \sum_{k=0}^n  \lambda^k \psi_{k} + \lambda^n \rho_n(\lambda) \right] \\
  &\qquad\qquad= \left(\sum_{k=0}^n \lambda^k E_{k} + \lambda^n e_n(\lambda) \right)
  \left[ \sum_{k=0}^n \lambda^k \psi_{k} + \lambda^n \rho_n(\lambda) \right] .
\end{align*}
By ordering according to powers of $\lambda$ we see from
\eqref{eq:recurs} that many terms vanish and
\begin{align}
  \lambda V \psi_{n} + &(H_0 + \lambda V) \rho_n(\lambda)\nonumber \\
  &=
  \rho_n(\lambda) E(\lambda)
  + e_n(\lambda)\sum_{k=0}^n \lambda^k \psi_{k}
  + \sum_{k=n+1}^{2n}\lambda^{k-n} \sum_{j=k-n}^n
   E_{j} \psi_{k-j} . \label{eq:iterror}
\end{align}
If one applies $P_0$ to equation \eqref{eq:iterror} one obtains
$$
\lambda P_0 V ( \psi_{n} + \rho_n(\lambda)) = e_n(\lambda) \psi_0 .
$$
By induction Hypothesis the left hand side tends to zero as
$\lambda \to 0$.  This shows that \eqref{eq:asympgsab1} holds for all
$m \in \{1,\dotsc,n\}$.
Solving for terms involving $\rho_n(\lambda)$ in \eqref{eq:iterror} we
arrive at
\begin{align*}
  ( H(\lambda)  - E(\lambda)  )  \rho_n(\lambda)
  = e_n(\lambda) \sum_{k=0}^n \lambda^k \psi_{k}
  + \sum_{k=n+1}^{2n}\lambda^{k-n} \sum_{j=k-n}^n
  E_{j} \psi_{k-j}
  -\lambda V \psi_{n}.
\end{align*}
Applying $\bar{P}_0$ to this equation and using that $P_0 \rho_n(\lambda) = 0$ we find
\begin{align*}
 \bar{P}_0 ( H(\lambda)  -& E(\lambda)  ) \bar{P}_0 \rho_n(\lambda) \\
  &= \bar{P}_0 \left( e_n(\lambda) \sum_{k=0}^n \lambda^k \psi_{k}
  + \sum_{k=n+1}^{2n}\lambda^{k-n} \sum_{j=k-n}^n
  E_{j} \psi_{k-j}
  -\lambda V \psi_{n} \right).
\end{align*}
Calculating the inner product with
$\psi(\lambda)$ and using \eqref{eq:changeproj}  we find
\begin{align*}
  \langle & \psi(\lambda),  {P}_0  V  \rho_n(\lambda) \rangle \\
  &\;= -
    \inn{ \bar{P}_0  \psi(\lambda) ,
    e_n(\lambda) \sum_{k=1}^n \lambda^{k-1} \psi_{k}
    + \sum_{k=n+1}^{2n}\lambda^{k-n-1} \sum_{j=k-n}^n
    E_{j} \psi_{k-j}
    - \bar{P}_0  V \psi_{n}   }.
\end{align*}
This and Hypothesis~\ref{hyp:1} imply that \eqref{eq:asympgsab2} holds
for all $m \in \{1,\dotsc,n\}$.
\end{proof}

Next we inductively solve  Equation \eqref{eq:recurs}.

\begin{lemma}(Inductive Formula) \label{lem:directform0} Let
  $n \in \N$ and suppose there exist
  $\psi_1 ,\dotsc,\psi_n \! \in \bar{P}_0 \HH$ and $E_1,\dotsc,E_n \in \C$
  such that for all $m \in \N$ with $m \leq n$ we have
  \begin{equation}
    \label{eq:recurs1}
    H_0 \psi_{m} + V \psi_{m-1} =
    \sum_{k=0}^m E_k \psi_{m-k}.
  \end{equation}
  Then defining
  \begin{equation} \label{eq:givenene}
	E_{n+1} := \inn{ \psi_0 ,  V \psi_{n}  }
  \end{equation}
  as well as
  \begin{equation} \label{eq:inducstate} \psi_{n+1} := (H_0 - E_0)^{-1}
    \bar{P}_0 \left( \sum_{k=1}^{n+1} E_k \psi_{n+1-k} - V \psi_{n}
    \right) ,
  \end{equation}
  provided
    \begin{equation} \label{rel:domain}
  \bar{P}_0 ( \sum_{k=0}^n E_{k +1} \psi_{n-k} - V \psi_{n} ) \in \dom\left( (H_0
  - E_0 )^{-1} \bar{P}_0  \right),
    \end{equation}
  we obtain  a solution of   \eqref{eq:recurs1} for  $m=n+1$.
\end{lemma}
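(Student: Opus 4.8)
The plan is to verify directly that the proposed $\psi_{n+1}$ and $E_{n+1}$ satisfy the recursion \eqref{eq:recurs1} at order $m=n+1$, namely
\begin{equation*}
  H_0 \psi_{n+1} + V \psi_{n} = \sum_{k=0}^{n+1} E_k \psi_{n+1-k}.
\end{equation*}
The natural strategy is to split this identity into its $P_0$-component and its $\bar{P}_0$-component, using the orthogonal decomposition $1 = P_0 + \bar{P}_0$, and check each separately. The two constituent definitions \eqref{eq:givenene} and \eqref{eq:inducstate} are precisely tailored so that the $P_0$-part is handled by \eqref{eq:givenene} and the $\bar{P}_0$-part by \eqref{eq:inducstate}.

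First I would rewrite the target identity as $H_0\psi_{n+1} = \sum_{k=0}^{n+1} E_k\psi_{n+1-k} - V\psi_n$, and isolate the $k=0$ term on the right: since $E_0\psi_{n+1} = H_0\psi_{n+1}$ would hold were $\psi_{n+1}$ in the kernel, the correct rearrangement is $(H_0-E_0)\psi_{n+1} = \sum_{k=1}^{n+1} E_k\psi_{n+1-k} - V\psi_n$. Call the right-hand side $R$. I would then apply $P_0$ and $\bar P_0$ to the equation $(H_0-E_0)\psi_{n+1}=R$. For the $\bar P_0$-part: because $\psi_{n+1}\in\bar P_0\HH$ by construction and $(H_0-E_0)$ commutes with $P_0$, the definition \eqref{eq:inducstate} says $\psi_{n+1}=(H_0-E_0)^{-1}\bar P_0 R$, so applying $(H_0-E_0)$ and then $\bar P_0$ recovers $\bar P_0 R$ exactly; the domain condition \eqref{rel:domain} is exactly what guarantees this application is legitimate. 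For the $P_0$-part: I must show $P_0 R = 0$, i.e.\ $P_0\bigl(\sum_{k=1}^{n+1}E_k\psi_{n+1-k}-V\psi_n\bigr)=0$. Here the key observations are that $\psi_1,\dots,\psi_n\in\bar P_0\HH$ so $P_0\psi_{n+1-k}=0$ for $1\le k\le n$, leaving only the $k=n+1$ term $E_{n+1}\psi_0$; thus $P_0 R = E_{n+1}\psi_0 - P_0 V\psi_n$. Invoking the definition \eqref{eq:givenene}, $E_{n+1}=\inn{\psi_0,V\psi_n}$, together with $\|\psi_0\|=1$ (so $P_0 V\psi_n=\inn{\psi_0,V\psi_n}\psi_0$), this collapses to zero. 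Finally, since $P_0\psi_{n+1}=0$ by construction, the left side also has vanishing $P_0$-component, and the two pieces combine to give $(H_0-E_0)\psi_{n+1}=R$, which is the desired order-$(n+1)$ recursion.

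The main obstacle is not algebraic but functional-analytic: one must be careful that $\psi_{n+1}$ as written in \eqref{eq:inducstate} genuinely lies in the domain of $H_0$ and that the formal manipulations with the unbounded operator $(H_0-E_0)^{-1}\bar P_0$ are valid. This is precisely the role of the hypothesis \eqref{rel:domain}, which asserts that $\bar P_0 R$ lies in the range on which $(H_0-E_0)^{-1}\bar P_0$ acts as a genuine inverse; I would therefore foreground this domain check, noting that on $\bar P_0\HH$ the operator $H_0-E_0$ is boundedly invertible away from its kernel, so that $(H_0-E_0)(H_0-E_0)^{-1}\bar P_0 = \bar P_0$ holds on the relevant subspace. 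With that point secured, the remainder is the straightforward projection bookkeeping described above.
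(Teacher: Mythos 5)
Your proposal is correct and is essentially the paper's own argument spelled out in full: the paper's proof consists of the single remark that the claim ``follows by insertion of \eqref{eq:inducstate} and \eqref{eq:givenene} into \eqref{eq:recurs1} for $m=n+1$,'' and your $P_0$/$\bar P_0$ decomposition is precisely the bookkeeping that insertion amounts to (with $\|\psi_0\|=1$ supplied by the normalization (N) at $\lambda=0$). No substantive difference in route.
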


We note that the assumption in \eqref{rel:domain} is less restrictive
than the one in \cite{Ara14}, which will turn out to be  crucial to obtain  the asymptotic
expansion of the ground state to arbitrary order.

\begin{proof}
  This follows  by insertion of  \eqref{eq:inducstate}  and  \eqref{eq:givenene} into    \eqref{eq:recurs1} for $m=n+1$.
\end{proof}

If we solve the recursive relation  of the previous lemma, we obtain the following formulas.

\begin{lemma}(Direct Formula) \label{lem:directform}  Let
  $n \in \N$ and suppose there exist
  $\psi_1 ,\dotsc,\psi_n \in \bar{P}_0 \HH$ and $E_1,\dotsc,E_n \in \C$
  such that  the following holds.
  We have  $E_1  = \langle \psi_0 , V \psi_0 \rangle$,    for all $m \in \N$  with $ 2 \leq m \leq n$ we have
  \begin{align}
    & E_m  = \nonumber \\ & - \sum_{k=2}^{m} \sum_{\substack{j_1 + \cdots + j_k = m \\ j_s \geq 1 }}
    \langle \psi_0  , ( E_{j_1} - \delta_{1 j_1 }  V )
    \prod_{s=2}^k  \left\{ (H_0 - E_0)^{-1} \bar{P}_0   ( E_{j_s} - \delta_{1 j_s }  V   ) \right\} \psi_0  \rangle  ,  \label{eq:E_n}
  \end{align}
  and for all $m \in \N$ with $m \leq n$  we have
   \begin{align}
    \label{eq:E_ndirect}
    \psi_m  & = \sum_{k=1}^m \sum_{\substack{j_1 + \cdots + j_k = m \\ j_s \geq 1 }}
    \prod_{s=1}^k  \left\{ (H_0 - E_0)^{-1} \bar{P}_0   ( E_{j_s} - \delta_{1 j_s }  V   ) \right\} \psi_0  ,
  \end{align}
assuming  that the expressions on the right hand side  of     \eqref{eq:E_n}  and  \eqref{eq:E_ndirect} exist in the sense of  Lemma \ref{lem:directform0}.
  Then for all $m \in \N$ with $m \leq n$ we have
$$
H_0 \psi_{m} + V \psi_{m-1} = \sum_{k=0}^{m} E_k \psi_{m-k} .
$$
\end{lemma}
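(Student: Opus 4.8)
The plan is to prove the lemma by induction on $m$, reducing the two closed-form expressions \eqref{eq:E_n} and \eqref{eq:E_ndirect} to the inductive prescription \eqref{eq:givenene}--\eqref{eq:inducstate} of Lemma~\ref{lem:directform0}, whose conclusion then delivers the recursion. Throughout I abbreviate $R := (H_0 - E_0)^{-1}\bar{P}_0$ and, for $j \geq 1$, the (operator- or scalar-valued) symbol $A_j := E_j - \delta_{1j} V$, so that a generic summand of \eqref{eq:E_ndirect} reads $\prod_{s=1}^k R A_{j_s}\,\psi_0$ and the one of \eqref{eq:E_n} reads $\inn{\psi_0, A_{j_1}\prod_{s=2}^k R A_{j_s}\psi_0}$.

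First I would establish the purely algebraic identity that the state defined by \eqref{eq:E_ndirect} satisfies $\psi_m = R\sum_{j=1}^m A_j \psi_{m-j}$, which is precisely \eqref{eq:inducstate} with $n+1=m$ (note that $\bar{P}_0\psi_0 = 0$ annihilates the would-be $j=m$ contribution $R E_m\psi_0$ for $m\ge 2$, and for $m=1$ reproduces $\psi_1 = -RV\psi_0$). This is the standard unrolling of a composition sum: every composition $j_1+\cdots+j_k = m$ with $k\ge 1$ positive parts factors uniquely into a choice of first part $j_1 = j\in\{1,\dots,m\}$ together with a composition of $m-j$ into $k-1\ge 0$ parts, the empty composition (value $\psi_0$) occurring exactly when $j=m$. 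Splitting off the leftmost factor $R A_{j}$ and recognising the remaining sum over compositions of $m-j$, applied to $\psi_0$, as $\psi_{m-j}$ gives the claim; linearity of $R$ lets me pull it in front of the $j$-sum.

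The key step is the companion identity for the energy, namely that \eqref{eq:E_n} equals $\inn{\psi_0, V\psi_{m-1}}$, the quantity defining $E_m$ in \eqref{eq:givenene}. Here I would group the sum in \eqref{eq:E_n} by its first part $j_1 = j$, which necessarily satisfies $j\le m-1$ because $k\ge 2$; the tail $\sum \prod_{s=2}^k R A_{j_s}\psi_0$ over compositions of $m-j$ into $k-1\ge 1$ parts is again $\psi_{m-j}$ by \eqref{eq:E_ndirect}, so the right-hand side collapses to $-\sum_{j=1}^{m-1}\inn{\psi_0, A_j\psi_{m-j}}$. Writing $A_j = E_j - \delta_{1j}V$ splits this into $-\sum_{j=1}^{m-1} E_j\inn{\psi_0,\psi_{m-j}} + \inn{\psi_0, V\psi_{m-1}}$. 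The first sum vanishes term by term, since $\psi_{m-j}\in\bar{P}_0\HH$ while $\psi_0\in P_0\HH$ (here $P_0$ is the rank-one projection onto the unit vector $\psi_0$, normalised by (N)), whence $\inn{\psi_0,\psi_{m-j}} = \inn{\psi_0, P_0\psi_{m-j}} = 0$; only $\inn{\psi_0, V\psi_{m-1}}$ survives. I expect this cancellation, together with the bookkeeping that matches the composition sums on both sides and the correct handling of the $\delta_{1j_1}$ convention, to be the main obstacle.

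With these two identities in hand the induction closes at once. The base case $m=1$ is a direct check that $\psi_1 = -RV\psi_0$ and $E_1 = \inn{\psi_0,V\psi_0}$ satisfy $(H_0-E_0)\psi_1 = E_1\psi_0 - V\psi_0$, using $\bar{P}_0 = 1 - P_0$, the relation $(H_0-E_0)R = \bar{P}_0$, and $P_0 V\psi_0 = \inn{\psi_0,V\psi_0}\psi_0 = E_1\psi_0$. For the step, assuming \eqref{eq:recurs1} holds at all orders up to $m-1$ for the direct-formula quantities, the two identities show that $E_m$ and $\psi_m$ from \eqref{eq:E_n} and \eqref{eq:E_ndirect} coincide with the quantities produced by \eqref{eq:givenene} and \eqref{eq:inducstate}; Lemma~\ref{lem:directform0} then yields \eqref{eq:recurs1} at order $m$. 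The standing assumption that the right-hand sides exist in the sense of Lemma~\ref{lem:directform0} furnishes the domain condition \eqref{rel:domain} required to apply $R$ and to use $(H_0-E_0)R = \bar{P}_0$ at each stage.
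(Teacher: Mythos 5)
Your proposal is correct and follows essentially the same route as the paper: an induction in which the closed-form expressions \eqref{eq:E_n} and \eqref{eq:E_ndirect} are identified, by splitting off the first part of each composition, with the recursive definitions \eqref{eq:givenene} and \eqref{eq:inducstate} of Lemma~\ref{lem:directform0}, whose conclusion then gives the recursion at the next order. The only difference is that you spell out the composition-unrolling and the vanishing of $\langle \psi_0, E_j \bar{P}_0(\,\cdot\,)\rangle$ in more detail than the paper does.
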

\begin{proof} We prove this lemma by induction in $n$. The case $n=1$ follows from a straight
forward calculation. Suppose the claim holds for $n$. Then also  the assumption of Lemma \ref{lem:directform0}
holds. Thus  we define $E_{n+1}$ as in   \eqref{eq:givenene}
  \begin{align*}
  &E_{n+1}  := \,  \inn{ \psi_0 ,  V \psi_{n}  }  \\
       &\, = -\! \sum_{k=2}^{n+1}
       \sum_{\substack{j_1 + \cdots + j_k = n +1 \\ j_s \geq 1 }}
    \!\!\!\!\! \langle \psi_0  , ( E_{j_1} - \delta_{1 j_1 }  V )
     \!\prod_{s=2}^k  \! \left\{\! (H_0 - E_0)^{-1} \bar{P}_0   ( E_{j_s} - \delta_{1 j_s }  V   )\! \right\} \psi_0  \rangle ,
  \end{align*}
  where in the second line we used the assumption  \eqref{eq:E_ndirect}
  (and note that  $\langle \psi_0, E_{j}   \bar{P}_0  (  \ \cdot   \ ) \rangle = 0$).
  We define $\psi_{n+1}$ as in  \eqref{eq:inducstate}
  \begin{align*}
  \psi_{n+1}  := & \, (H_0 - E_0)^{-1}
    \bar{P}_0 \left( \sum_{j=1}^{n+1} ( E_{j} - \delta_{1 j } V )
      \psi_{n+1-j} \right) \\
       = &\, \sum_{k=1}^{n+1} \sum_{\substack{j_1 + \cdots + j_k = n+1 \\ j_s \geq 1 }}
    \prod_{s=1}^k  \left\{ (H_0 - E_0)^{-1} \bar{P}_0   ( E_{j_s} - \delta_{1 j_s }  V   ) \right\} \psi_0 ,
  \end{align*}
  where we wrote the first line with  slightly different notation than in \eqref{eq:inducstate}  and in the second line we
  used the  assumption  \eqref{eq:E_ndirect}. Now it follows from  Lemma \ref{lem:directform0}
  that the claim of the lemma holds also for $n+1$.
\end{proof}

\subsection{Resolvent Method}
\label{subsec:resfesh}

Here we use a Feshbach type or Schur complement argument together with a
resolvent expansion.  The proof of the Lemma in this subsection is
inspired by \cite{Ara14}.

\begin{lemma}
  \label{lem:feshbach}
  Suppose  that Hypothesis~\ref{hyp:1} holds.
  Assume that starting with $ K_0   := \frac{\bar{P}_0}{H_0 - E_0}$ and
   $ E_1  :=    \inn{ \psi_0 , V  \psi_0 }$, we can define recursively
   for $m \in \{1,\dotsc,n-2\}$
   \begin{align}
     K_{m} & := \sum_{j=1}^{m}  K_{j-1} (E_{m+1-j} - \delta_{jm}V)K_0,
             \label{eq:recurskresm}  \\
     E_{m+1}& :=  - \inn{ \psi_0 , V K_{m-1} V \psi_0 } , \nonumber
   \end{align}
   such that $\bar{P}_0V \psi_0 \in  \dom(K_l)$ for $l=0,\dotsc,n-2$.
  Then $E(\lambda)$ has an asymptotic expansion up to order $n$,
  i.e.,
  for all $m = 1,\dotsc,n$
  \begin{equation*}
    \lim_{\lambda \downarrow 0} \lambda^{-m} \left( E(\lambda)  - \sum_{k=0}^m  E_k \lambda^k \right)   =  0.
  \end{equation*}
\end{lemma}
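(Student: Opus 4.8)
The plan is to run a Feshbach--Schur reduction of the eigenvalue equation $H(\lambda)\psi(\lambda)=E(\lambda)\psi(\lambda)$ and then expand the resulting reduced resolvent to finite order, reading off the coefficients $E_k$. First I would use $(N)$: at $\lambda=0$ it gives $\|\psi_0\|=1$, and in general $\inn{\psi_0,\psi(\lambda)}=1$, so together with the simplicity of $E_0$ we get $P_0\psi(\lambda)=\psi_0$. Writing $b:=\bar P_0\psi(\lambda)$ and projecting the eigenvalue equation with $P_0$ and $\bar P_0$ yields the two block equations $(E_0-E(\lambda))\psi_0+\lambda P_0V\psi_0+\lambda P_0Vb=0$ and $(H_0-E(\lambda)+\lambda\bar P_0V\bar P_0)b=-\lambda\bar P_0V\psi_0$, where I used $H_0\psi_0=E_0\psi_0$ and $P_0H_0\bar P_0=0$. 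Pairing the first equation with $\psi_0$ gives the scalar Schur-complement relation $E(\lambda)-E_0=\lambda E_1+\lambda\inn{\psi_0,Vb}$, and solving the second gives $b=-\lambda R(\lambda)\bar P_0V\psi_0$ with the reduced resolvent $R(\lambda):=(H_0-E(\lambda)+\lambda\bar P_0V\bar P_0)^{-1}\bar P_0$ on $\ran\bar P_0$, which is well defined because $E(\lambda)$ is a simple eigenvalue of $H(\lambda)$.

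Next I would identify the operators $K_m$ from \eqref{eq:recurskresm} as the coefficients of the formal resolvent expansion of $R(\lambda)$. The key algebraic step is to prove, by induction from \eqref{eq:recurskresm}, the operator identity $(H_0-E_0)K_N=\sum_{k=1}^{N}E_kK_{N-k}-\bar P_0V\bar P_0K_{N-1}$ on $\ran\bar P_0$; equivalently, $(H_0-E(\lambda)+\lambda\bar P_0V\bar P_0)\sum_{m=0}^{N}\lambda^mK_m=\bar P_0$ up to terms of order $\lambda^{N+1}$. Because the Schur relation carries an explicit factor $\lambda$ in front of $\inn{\psi_0,Vb}$, it suffices to expand $b$, equivalently $R(\lambda)\bar P_0V\psi_0$, only to order $n-1$ in order to obtain $E(\lambda)$ to order $n$. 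This is exactly why the hypotheses only require $K_0,\dots,K_{n-2}$: setting $\psi_{j}:=-K_{j-1}\bar P_0V\psi_0$ for $j=1,\dots,n-1$, these vectors are well defined by the domain assumption $\bar P_0V\psi_0\in\dom(K_{j-1})$, lie in $\bar P_0\HH$, and satisfy $\inn{\psi_0,V\psi_j}=-\inn{\psi_0,VK_{j-1}V\psi_0}=E_{j+1}$.

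I would then establish the asymptotic statement by an approximate-solution argument. Put $\tilde b:=\sum_{j=1}^{n-1}\lambda^{j}\psi_{j}$. Using the identity above (equivalently the recursion \eqref{eq:recurs}, which the $\psi_j$ satisfy for $m\le n-1$) one checks that $\tilde b$ solves the off-diagonal block equation up to a remainder $r(\lambda)$ built only from top-order terms and from the contribution of $E(\lambda)$ minus its truncated series; hence $A(\lambda)(b-\tilde b)=r(\lambda)$ with $A(\lambda):=H_0-E(\lambda)+\lambda\bar P_0V\bar P_0$. Since $A(\lambda)$ is self-adjoint on $\ran\bar P_0$, so is $R(\lambda)$, and I can move the resolvent onto the distinguished vector: $\inn{V\psi_0,b-\tilde b}=\inn{R(\lambda)\bar P_0V\psi_0,r(\lambda)}=-\lambda^{-1}\inn{b,r(\lambda)}$. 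The recursion \eqref{eq:recurs} cancels all contributions below order $\lambda^{n}$ in $\inn{b,r(\lambda)}$, while the remaining top-order and $E(\lambda)$-dependent pieces are $o(\lambda^{n})$ by the continuity in $(H)$ together with $\|b\|=O(\lambda)$. Substituting $\inn{\psi_0,Vb}=\sum_{j=1}^{n-1}\lambda^j E_{j+1}+o(\lambda^{n-1})$ into the scalar relation gives $E(\lambda)=E_0+\lambda E_1+\sum_{k=2}^{n}\lambda^kE_k+o(\lambda^n)$, which is the claim.

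The main obstacle is the estimate of $r(\lambda)$, and this is where the infrared difficulty is concentrated. Because the field is massless there is no spectral gap, so $K_0=\bar P_0/(H_0-E_0)$ is unbounded and the Neumann series for $R(\lambda)$ does not converge in operator norm; the assumption $\bar P_0V\psi_0\in\dom(K_l)$ for $l=0,\dots,n-2$ is precisely the substitute for such norm bounds, guaranteeing that the individual vectors $\psi_{l+1}=-K_l\bar P_0V\psi_0$ exist and that the remainder, evaluated against the single vector $V\psi_0$, stays finite. A subtler point is that the exact eigenvalue $E(\lambda)$ sits inside $A(\lambda)$, producing a mild bootstrap; I would resolve this by using only the continuity $E(\lambda)\to E_0$, $\psi(\lambda)\to\psi_0$ from $(H)$ of Hypothesis~\ref{hyp:1} to control the $E(\lambda)$-dependent part of $r(\lambda)$ and to obtain $\|b\|=\|\bar P_0\psi(\lambda)\|=O(\lambda)$ via $b=-\lambda R(\lambda)\bar P_0V\psi_0$ together with the strong resolvent convergence $R(\lambda)\bar P_0V\psi_0\to K_0\bar P_0V\psi_0$, never invoking the full expansion of $E(\lambda)$ until the final substitution. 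I note that a direct appeal to Lemma~\ref{lem:abstractmain} with the vectors $\psi_1,\dots,\psi_{n-1}$ would only deliver order $n-1$; it is the explicit $\lambda^2$ prefactor in the Schur complement that upgrades the conclusion to order $n$.
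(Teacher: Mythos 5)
Your architecture (block decomposition, Schur complement, finite expansion of the reduced resolvent, duality to avoid norm bounds) is close in spirit to the paper's, and you correctly identify why only $K_0,\dotsc,K_{n-2}$ are needed. But there is a genuine gap at the very first step: you introduce the exact reduced resolvent $R(\lambda)=\bigl(H_0-E(\lambda)+\lambda\bar P_0V\bar P_0\bigr)^{-1}\bar P_0$ and justify its existence by ``$E(\lambda)$ is a simple eigenvalue of $H(\lambda)$''. That does not follow, and in the intended massless setting it is precisely the delicate point: $E(\lambda)=\inf\sigma(H(\lambda))$ sits at the bottom of the essential spectrum, so $0\in\sigma\bigl(\bar P_0(H(\lambda)-E(\lambda))\bar P_0\bigr)$ and no bounded inverse exists; simplicity of the eigenvalue is the \emph{conclusion} one draws from invertibility of the Feshbach block, not a source of it. Several load-bearing steps of your argument rest on this object: the identity $b=-\lambda R(\lambda)\bar P_0V\psi_0$, the bound $\|b\|=O(\lambda)$, the transfer $\inn{V\psi_0,b-\tilde b}=\inn{R(\lambda)\bar P_0V\psi_0,r(\lambda)}$, and the asserted strong convergence $R(\lambda)\bar P_0V\psi_0\to K_0\bar P_0V\psi_0$, which is a nontrivial claim that the domain hypothesis $\bar P_0V\psi_0\in\dom(K_l)$ does not deliver. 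Note also that the paper only ever uses $\|\bar P_0\psi(\lambda)\|=o(1)$ from (H); your $O(\lambda)$ bound is both unproved and unnecessary.

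The paper's proof is engineered to avoid exactly this. It never inverts $\bar P_0(H(\lambda)-E(\lambda))\bar P_0$; instead it writes the identity
\begin{equation*}
\bar P_0\psi(\lambda)=\lambda\,\frac{\bar P_0}{H_0-E_0}\Bigl[-VP_0\psi(\lambda)+\bigl(E^{[1]}(\lambda)-V\bigr)\bar P_0\psi(\lambda)\Bigr]
\end{equation*}
for the \emph{exact} eigenvector and inserts it into itself $n$ times, so that only the fixed operators $K_0,\dotsc,K_{n-2}$ applied to $V\psi_0$ (covered by the hypothesis) ever appear, the exact $\bar P_0\psi(\lambda)$ is retained in the remainder, and the remainder is killed by pairing a convergent vector against $\bar P_0\psi(\lambda)\to 0$. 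Your duality step can in fact be salvaged without $R(\lambda)$: from $A(\lambda)b=-\lambda\bar P_0V\psi_0$ and symmetry of $A(\lambda)$ one gets $\inn{V\psi_0,b-\tilde b}=-\lambda^{-1}\inn{A(\lambda)b,b-\tilde b}=\lambda^{-1}\inn{b,r(\lambda)}$ directly, which is essentially the manoeuvre the paper performs in Lemma~\ref{lem:abstractmain}. If you rewrite your argument in that form, discarding every appeal to the existence of $R(\lambda)$ and to $\|b\|=O(\lambda)$, it becomes a correct proof; as written, it assumes an inverse that the hypotheses do not provide.
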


\begin{remark}
The statement of
  Lemma~\ref{lem:feshbach} is equivalent to the  statements of
  Lemma~\ref{lem:directform} and Lemma~\ref{lem:abstractmain} combined.
  In particular, we may solve iteratively for $K_m$ and obtain the relation
    \begin{equation*}
    V K_{m-2}  V = \sum_{k=2}^{m} \sum_{\substack{j_1 + \cdots + j_k = m \\ j_s \geq 1 }}
    ( E_{j_s} - \delta_{1 j_s }  V )\prod_{s=2}^k  \left\{ (H_0 - E_0)^{-1} \bar{P}_0  ( E_{j_s} - \delta_{1 j_s }  V   ) \right\} .
  \end{equation*}
    Moreover, given $E(\lambda)$, we can recover $\psi(\lambda)$ by
  \begin{equation*}
    \psi(\lambda) = \psi_0 -\lambda    \bar{P}_0 [ \left. \bar{P}_0 ( H(\lambda) -
        E(\lambda)) \bar{P}_0 \right|_{\ran(\bar{P}_0)}]^{-1} \bar{P}_0             V P_0 \psi_0.
  \end{equation*}
\end{remark}

\begin{proof}
  The eigenvalue equation $H(\lambda)\psi(\lambda) =
  E(\lambda)\psi(\lambda)$ can be split into the equivalent system of
  equations
  \begin{subequations}
    \begin{align}
      \label{eq:EV:P}
      P_0 \big(\lambda V +E_0 - E(\lambda)\big) P_0 \psi(\lambda) + \lambda P_0 V \bar{P}_0
      \psi(\lambda) &= 0 \\
      \label{eq:EV:P_perp}
      \lambda \bar{P}_0 V P_0 \psi(\lambda) + \bar{P}_0\big(H(\lambda) - E(\lambda)\big)
      \bar{P}_0 \psi(\lambda) &= 0,
    \end{align}
  \end{subequations}
  by applying the projections $P_0$ and $\bar{P}_0$ respectively.  From
  \eqref{eq:EV:P} we learn that
  \begin{equation*}
    \frac{E(\lambda) - E_0}{\lambda}\langle \psi_0, P_0 \psi(\lambda) \rangle
    - \langle \psi_0,  V P_0\psi(\lambda) \rangle
    = \langle V \psi_0,  \bar{P}_0 \psi(\lambda) \rangle
    = o(1),
  \end{equation*}
  i.e.
  \begin{equation*}
    \frac{E(\lambda) - E_0}{\lambda}
    \xrightarrow{\lambda \to 0} \langle \psi_0,  V \psi_0 \rangle.
  \end{equation*}
  This shows the claim for $n=1$. We show the lemma by induction.
  Suppose the claim holds for $n$ and the assumptions of the lemma hold for $n+1$.
Then  the recursively defined functions
  \begin{align}
    E^{[0]}(\lambda) &:= E(\lambda) \nonumber \\
    E^{[k]}(\lambda) &:= \frac{E^{[k-1]}(\lambda) - E_{k-1}}{\lambda} \label{eq:ebrackk}
  \end{align}
  satisfy
  \begin{equation*}
    \lim_{\lambda \downarrow 0}E^{[k]}(\lambda) =  E_k, \qquad
    k=0,\dotsc, n.
  \end{equation*}
  We write the  part $\bar{P}_0 \psi(\lambda)$ as follows
  \begin{align*}
    \bar{P}_0 \psi(\lambda)
    &= \frac{\bar{P}_0}{H_0 - E_0} (H_0 - E_0) \bar{P}_0\psi(\lambda) \\
    &= \frac{\bar{P}_0}{H_0 - E_0} [H(\lambda)-E(\lambda) +
    (E(\lambda) - E_0  - \lambda  V)] \bar{P}_0\psi(\lambda).
  \end{align*}
  Equation \eqref{eq:EV:P_perp} implies
  \begin{equation}
    \label{eq:EV:P_perp:2}
    \bar{P}_0 \psi(\lambda) = \lambda \frac{\bar{P}_0}{H_0 - E_0}\left[- V P_0\psi(\lambda)
    +  (E^{[1]}(\lambda)-  V)\bar{P}_0\psi(\lambda)\right].
  \end{equation}
  Iterated insertion of
  \eqref{eq:EV:P_perp:2} into itself, terminating the expansion after we have
  reached order $\lambda^{n}$, this leads to the following claim.
$$
$$
{\bf Claim:}  We have for $k =1,\dotsc, n$
  \begin{equation}
    \label{eq:expansion:perp}
   P_0 V \bar{P}_0 \psi(\lambda) =   P_0 V  \sum_{j=1}^{k} -\lambda^j K_{j-1} VP_0
    \psi(\lambda)
    +  P_0 V \lambda^k R_{k}(\lambda) \bar{P}_0 \psi(\lambda),
  \end{equation}
  where  $R_k(\lambda)$ is defined  by
  \begin{equation}
    \label{eq:expansion:R}
    R_{k}(\lambda) := \sum_{j=1}^k K_{j-1} (E^{[k+1-j]}(\lambda) - \delta_{jk}V).
  \end{equation}
  $$
  $$
  (We note that expressions are well defined by  the assumption
  $\bar{P}_0 V \psi_0 \in  \dom(K_l)$).
  Let us now show the claim. Equation~\eqref{eq:expansion:perp} for $k=1$ is just
  Equation~\eqref{eq:EV:P_perp:2} multiplied by $P_0 V$.
  Assume that \eqref{eq:expansion:perp} is true for a specific $k \leq n-1$.
  In this case, we insert first the Definition \eqref{eq:expansion:R} and then
  Definition \eqref{eq:ebrackk}
  \begin{align*}
    P_0 & V  R_k(\lambda) \bar{P}_0\psi(\lambda) \\
    &=  P_0 V \sum_{j=1}^k K_{j-1} (E^{[k+1-j]}(\lambda) -
       \delta_{jk}V)\bar{P}_0\psi(\lambda)\\
    &=  P_0 V \sum_{j=1}^{k}\left(  K_{j-1} (E_{k+1-j} -
      \delta_{jk}V)\bar{P}_0\psi(\lambda)
      + \lambda K_{j-1} E^{[k+2-j]}(\lambda)\bar{P}_0\psi(\lambda) \right) .
  \end{align*}
  We now use \eqref{eq:EV:P_perp:2} for the first summand and obtain
  \begin{align*}
    &P_0  V R_k(\lambda) \bar{P}_0\psi(\lambda) \\
    &\, = \lambda  P_0 V  \sum_{j=1}^k \Bigl(K_{j-1} (E_{k+1-j} -
         \delta_{jk} V)K_0 \bigl( - VP_0\psi(\lambda) + (E^{[1]}(\lambda)-V)
       \bar{P}_0 \psi(\lambda) \bigr)\\
     & \qquad\qquad\qquad\quad
       + K_{j-1} E^{[k+2-j]}(\lambda)\bar{P}_0\psi(\lambda)\Bigr).
  \end{align*}
  Using \eqref{eq:recurskresm} we find
  \begin{align*}
    P_0 V R_k(\lambda) \bar{P}_0\psi(\lambda)
    &= \lambda  P_0 V   \Bigl(K_k \bigl( - VP_0\psi(\lambda) + (E^{[1]}(\lambda)-V)
       \bar{P}_0 \psi(\lambda) \bigr)\\
    &\qquad\qquad\qquad
       + \sum_{j=1}^k K_{j-1} E^{[k+2-j]}(\lambda)\bar{P}_0\psi(\lambda)\Bigr)\\
    &= -  \lambda   P_0 V K_{k}VP_0\psi(\lambda) \\
       & \qquad+ \lambda P_0 V \sum_{j=1}^{k+1} K_{j-1}
        \bigl(E^{[k+2-j]}(\lambda) - \delta_{j,k+1}V\bigr)\bar{P}_0\psi(\lambda).
  \end{align*}
  By \eqref{eq:expansion:R} this expression agrees with
  \eqref{eq:expansion:perp} with $k$ replaced by $k+1$.
  Inserting this expression into \eqref{eq:expansion:perp} with $k$ we obtain
  \eqref{eq:expansion:perp} with $k$ replaced by $k+1$.
  This shows the claim.

  Next we
  insert the claim for $k=n$ into \eqref{eq:EV:P} to conclude
  \begin{equation}
    \label{eq:EV:expansion}
    \left(P_0 (E^{[1]}(\lambda) - V) P_0 + \sum_{j=1}^n \lambda^j P_0
      V K_{j-1} V P_0 \right)\! P_0 \psi(\lambda)
    = \lambda^n P_0 V R_n(\lambda)\bar{P}_0\psi(\lambda).
  \end{equation}
  Taking the inner product with $\psi_0$ and using the induction
  hypothesis \eqref{eq:recurskresm}, we obtain
    \begin{align*}
    E^{[1]}(\lambda) \,-&\,E_1   - \sum_{j=1}^n \lambda^j E_{j+1} \\
    &= \lambda^n \inn{ \psi_0 , P_0 V R_n(\lambda)\bar{P}_0\psi(\lambda)} =
   \lambda^n \inn{R_n(\lambda)  V \psi_0 ,   \bar{P}_0\psi(\lambda)} .
  \end{align*}
  Dividing by $\lambda^{n}$ we find using \eqref{eq:ebrackk}
  $$
    \lambda^{-(n+1)} \left(
     E(\lambda)    - \sum_{j=0}^{n+1} \lambda^j E_{j}  \right)
    =
    \inn{R_n(\lambda)  V \psi_0 ,   \bar{P}_0\psi(\lambda)}  = o(1).
  $$
 This shows the claim of the lemma for $n+1$.
\end{proof}
\begin{remark}
  Note that \eqref{eq:EV:expansion} implies
  \begin{equation*}
    \left(E(\lambda) -  \sum_{j=0}^n \lambda^j H_j \right) P_0 \psi(\lambda)
    = o(\lambda^n),
  \end{equation*}
  for $H_1 := P_0 V P_0$ and $H_n := - P_0 V K_{n-2} V P_0$.
  This can be used for a degenerate perturbation theory, where each
  operator $H_j$ has to be diagonalized and the coefficients $E_j$
  can be chosen out of the eigenvalues of $H_j$.
\end{remark}

\section{Ground State Energy}
\label{sec:groundstatee}

The main goal of  this section  is  to  show  Theorem  \ref{thm:mainenergy}.
As a corollary
we will obtain a formula for  the energies in terms of
so called linked contractions and renormalized propagators
(Corollary~\ref{cor:energyformula}).
For notational convenience we introduce the usual bosonic creation operators $a^*(k)$ and annihilation operators $a(k)$
satisfying canonical commutation relations
$$
[ a(k) , a(k') ] = 0 , \quad   [ a^*(k) , a^*(k') ] = 0 , \quad    [ a(k) , a^*(k') ] = \delta(k-k' ) ,
$$
for all $k, k' \in \R^3$.
Using creation and annihilation operators  we can write
$$
a^*(G) = \int G(k) a^*(k) dk  , \quad a(G) = \int G^*(k) a(k) dk   .
$$
Since we do not yet know the values  of the energies $E_n$ (indeed at this stage we do not even
know their   existence),
we shall write  in their place  $\EE_n$.  At the
end we will inductively construct the  energies $E_n$ as the
value of a limit.

Below we outline
the organization of this section and give an overview of the proof.
In Subsection~\ref{subsec:necNotation4} we
introduce
 notation which will be used in subsequent subsections.
 In Subsection~\ref{sec:opprod} we  provide  in  Lemma~\ref{lem:energyform2}
an alternative notation for the  energy coefficients \eqref{eq:E_nee1def}  in terms
of expectation values of  operator valued functions
  $T_n$, $n\in\N$,  which will be  defined in  \eqref{eq:E_n:2david} as a sum of operator products. This alternative notation
  will turn out to be convenient in  keeping track of the energy subtractions.
In Subsection~\ref{sec:defofcont} we use a
generalized version of
Wick's theorem  to express
 the operator valued functions  $T_n$ as a sum of contracted operator products,
see  Lemma \ref{lem:genwick}.
In   Subsection~\ref{sec:algebraicform}
 we use this result to  prove that $T_n$ is equal to an
 expression involving so called linked Feynman graphs, $C_n$,
plus a  sum of products of so called renormalized
linked Feynman graphs $\widehat{C}_m := C_m - \mathcal{E}_m$, $m < n$, with resolvents
in between, see  Proposition~\ref{thm:algebraicenergies}. The energy subtraction in  $\widehat{C}_m$
will eventually be responsible for the cancellation of the singularity in the  resolvent, as was illustrated in an example at the beginning
of the paper in Remark  \ref{rem:div}.
To obtain Proposition~\ref{thm:algebraicenergies} itself we start  with the  expression for $T_n$,  given in  Lemma  \ref{lem:genwick}.
We  separate  the sum over contractions   into   connected and  disconnected contractions, see    \eqref{eq:divideT}.
Then we use several involved algebraic reformulations to write the sum over disconnected contractions as a sum of products of connected
contractions.  Each of these
  connected expressions will come with  an  energy subtraction, as one may see  in Equation \eqref{eq:E_n:233db}.
  After  we have  proven  Proposition~\ref{thm:algebraicenergies}, it remains to show  that indeed the renormalized linked Feynman graphs $\widehat{C}_m$ cancel the singularity of the  resolvent.
  To this end, we first isolate the singular part of the  resolvent, by projection onto the space spanned by the atomic ground state $\varphi_{\rm at}$,
see   \eqref{eq:decofresolv}. Starting from Proposition~\ref{thm:algebraicenergies} we  then use elementary algebraic identities to rewrite   the operator valued functions $T_n$ in terms of the singular part of
the  resolvent. The resulting identity is stated in Proposition~\ref{thm:algebraicenergy}.
In  Subsection~\ref{subsec:estimateandProofThm1}   we finally prove
Theorem~\ref{thm:mainenergy}. To this end we use the identity of
Proposition~\ref{thm:algebraicenergy} for the operator valued functions $T_n$ and show, using an induction argument, that the renormalized linked Feynman graphs cancel the
singularity of the resolvent at each order.  The idea behind this induction argument
is explained in Remark \ref{rem:induction} at the beginning of   Subsection~\ref{subsec:estimateandProofThm1}.

\subsection{Graph functions  and Substitutions}
\label{subsec:necNotation4}

In this subsection we introduce notation which  will later be needed.

  \begin{definition}
    Let $G=(V,E)$ be a graph. Then a graph $G_1=(V_1,E_1)$ is called a
    {\bf subgraph} of $G$ and we write $G_1 \subset G$,   if $V_1 \subset V$ and $E_1 \subset E$.
For a subset $V_1 \subset V$ we define the
  {\bf restricted graph}
$$
G|_{V_1} := (V_1, \{ e \in E : e \subset V_1 \} ) .
$$
We define the union of  two graphs $G_1 = (V_1, E_1)$ and $G_2 = (V_2, E_2)$  by
$$
G_1 \cup G_2 := (V_1 \cup V_2 , E_1 \cup E_2 ) .
$$
  \end{definition}

For a subset $X \subset \Z$ we associated the graph  $G_X = ( X , E_X)$
with edges $E_X$ consisting of nearest neighbor pairs of $X$, that is
$$E_X := \{ \{ x , r_X(x)  \} : x \in X \setminus \max X \}, $$ where
$r_X(x) := \min ( X \setminus (-\infty,x] )$ denotes the nearest neighbor
vertex of $x$ which lies to the right.  We will also consider
graphs with external lines, that is
$$
\bar{G}_X := (X,\bar{E}_X)
$$
with
$$
\bar{E}_X := E_X \cup \{ \{ - \infty , \min X \} , \{ \max X , \infty
\} \} .
$$
In this subsection let   $\VV$ and $\RR$ denote two  sets. Later  we will refer to elements in
$\VV$ as interactions  and to elements in   $\RR$ as  resolvents or propagators.
\begin{definition}
    Let $X \subset \Z$.  For  $E = E_X$ and $E = \bar{E}_X$
    a function on $(X,E)$ of the form
  $$
  ((V_x)_{x \in X} , ( F_e )_{e \in E} ) ,
  $$
  where  $V_x \in \VV$ for every $x \in X$  and $F_e \in \RR$  for
  every
  $e \in E$ is called     a   $(\VV,\RR)${\bf -valued graph
  function on } $X$ and a   $(\VV,\RR)${\bf -valued graph
  function on } $X$
  {\bf with external lines}, respectively.
\end{definition}

\begin{example}
  We can write a graph function on $\{1, 2, 3, 4, 5\}$ symbolically as
  \begin{equation*}
    \node{1}{V_1}
    \edge{F_{\{1,2\}}}
    \node{2}{V_2}\edge{F_{\{2,3\}}}
    \node{3}{V_3}\edge{F_{\{3,4\}}}
    \node{4}{V_4}\edge{F_{\{4,5\}}}
    \node{5}{V_5}
  \end{equation*}
\end{example}

Now we introduce a so called substitution operation, which
substitutes  a piece of a graph function by a simpler expression.
This will later be used to express so called  renormalized
Feynmann graphs.

\begin{definition} \label{def:subst}  Let $X \subset \Z$, and let $K \in \RR$.
Let $\pi = ((V_x)_{x \in X} , ( F_e )_{e \in \bar{E}_X} )$ be
    a $(\VV,\RR)$-valued graph function with external lines on
    $X$. For $I \subset X$ with
    $G_{I} \subset G_X$ we define
    \begin{equation}
      \label{eq:substext}
      \subst_{\substack{I \to K }
      }(\pi) := ((V_x)_{x \in X \setminus I } , ( \tilde{F}_e )_{e \in
        \bar{E}_{X \setminus I} } ) ,
    \end{equation}
    where for $e \in \bar{E}_{X \setminus I}$
    \begin{equation*}
      \tilde{F}_e :=
      \begin{cases}
        {F}_{ \{ \min e , \min I \} } K
        {F}_{ \{ \max e , \max I \} } \, ,
        & e \notin \bar{E}_X  ,  \\
        {F}_e  \, , &  e \in \bar{E}_X
      \end{cases}
    \end{equation*}
\end{definition}

Note that \eqref{eq:substext} is again a $(\VV,\RR)$-valued graph
function on $X \setminus I$ with external lines, that is, for a
graph function with external lines we can substitute
any subgraph and we obtain again a graph function with external lines.
In Subsection   \ref{sec:opprod}   we show how Definition \ref{def:subst} can
 be naturally extend  to
 graph functions without external lines.

  \begin{example}
    Let $\pi$ denote the graph function of the previous example.
    Suppose $I = \{2,3\}$. Then we write the graph function
    $ \subst_{\substack{I \to K } }(\pi) $ symbolically as
  \begin{equation*}
    \node{1}{V_1}
    \edge{F_{\{1,2\}} K F_{\{3,4\}}}
    \node{4}{V_4}\edge{F_{\{4,5\}}}
    \node{5}{V_5}.
  \end{equation*}
\end{example}

The following lemma is a direct consequence of the definition.

\begin{lemma}(Commutativity) \label{lem:comm}  Let $X \subset \Z$ and   let $F_I, F_J \in
  \RR$.  Let $\pi$ be a $(\VV,\RR)$-valued graph function
  on $X$ with external lines.  For any disjoint subsets  $I,J $ of
  $X$ with  $G_{I} , G_J \subset G_X$, we have
  $$
  \subst_{\substack{I \to F_I } }(\pi) \subst_{\substack{J \to F_J }
  }(\pi)= \subst_{\substack{J \to F_J } }(\pi) \subst_{\substack{I \to
      F_I } }(\pi).
  $$
\end{lemma}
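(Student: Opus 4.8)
The plan is to read both sides as the composition of substitution operations and to verify the identity directly from Definition~\ref{def:subst}, the only genuine subtlety being the case in which the blocks $I$ and $J$ are adjacent in $X$.

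First I would record a structural observation: the hypothesis $G_I \subset G_X$ forces $I$ to be an interval (a set of consecutive elements) of $X$, since $E_I \subset E_X$ means no element of $X$ lies strictly between two consecutive elements of $I$; likewise for $J$. Two disjoint intervals of the chain $X$ are linearly separated, so without loss of generality $\max I < \min J$. Removing $J$ leaves the internal nearest-neighbor structure of $I$ untouched (all of $J$ lies to the right of $I$), hence $G_I \subset G_{X \setminus J}$ and symmetrically $G_J \subset G_{X \setminus I}$; thus both iterated substitutions are well defined, and each yields a graph function on the common vertex set $X \setminus (I \cup J)$ carrying the same vertex labels $(V_x)_{x \in X \setminus (I \cup J)}$. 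It therefore remains only to compare the edge functions.

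The key point is that a single substitution $\subst_{I \to F_I}$ modifies exactly one edge, namely the ``bridging'' edge $e_I \in \bar{E}_{X \setminus I} \setminus \bar{E}_X$ joining the left neighbor of $\min I$ to the right neighbor of $\max I$ (with the convention that these neighbors are $-\infty$ or $\infty$ at the boundary), replacing its value by $F_{\{\min e_I, \min I\}} F_I F_{\{\max e_I, \max I\}}$, while all other edges are copied verbatim. I would then split into two cases. If $I$ and $J$ are not adjacent in $X$ (some vertex of $X \setminus (I \cup J)$ lies between them), the bridging edges over $I$ and over $J$ are distinct edges of $\bar{E}_{X \setminus (I \cup J)}$, each substitution touches only its own bridge, and commutativity is immediate. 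If $I$ and $J$ are adjacent (the right neighbor of $\max I$ equals $\min J$), the two removals open a single gap, so the final graph function has one bridging edge; unfolding Definition~\ref{def:subst} in either order gives for this edge the product $F_{\{\ell, \min I\}} F_I F_{\{\max I, \min J\}} F_J F_{\{\max J, r\}}$, where $\ell, r$ are the outer neighbors of the combined block $I \cup J$.

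The main obstacle is precisely this adjacent case, where the bridge produced by the first substitution becomes the edge further modified by the second: one must check that the nested application of the replacement rule reassociates to the same five-fold product irrespective of the order. This is exactly where associativity of the multiplication on $\RR$ is used, and once it is invoked the two expressions coincide. Collecting the two cases proves the equality of edge functions, and hence of the graph functions, completing the proof.
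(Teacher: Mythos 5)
Your proof is correct and follows the only natural route -- direct verification from Definition~\ref{def:subst} -- which is exactly what the paper intends: it states the lemma as ``a direct consequence of the definition'' and omits the argument entirely. Your write-up correctly identifies the one genuinely non-trivial point (adjacent blocks $I$ and $J$, where the second substitution acts on the bridge created by the first, so that associativity of the product on $\RR$ is needed), so it is a faithful and complete filling-in of the omitted proof.
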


 Lemma \ref{lem:comm} justifies the use of the following notation.
Let   $X \subset \N$ and
 let a set  $\mathcal{I}$  of mutually disjoint subsets of $X$
be given such that $G_{I} \subset G_X$ for all $I \in \mathcal{I}$, and let
for each $I \in \mathcal{I}$  an element $F_I \in \RR$   be given. Then we write for any
 $(\VV,\RR)$-valued graph functions $\pi$   on $X$ with external
lines
  $$
  \subst_{\substack{I \to F_I \\ I \in \mathcal{I} } }(\pi) :=
  \prod_{I \in \mathcal{I} } \{ \subst_{\substack{I \to F_I } }
  \}(\pi) .
  $$

\subsection{Operator Products}
\label{sec:opprod}
In  this subsection we use the above notation to write  the   energy coefficients
in terms of  expectation values   of  operator products. To this end
we introduce the set of interactions and the set of propagators
suitable for the generalized spin boson model
\begin{align*}
   \VV_{\rm sb} & = \{ a^*(G) + a(G) : \, G \in \mathcal{L}(\R^d ;
   \mathcal{L}(\HH_{\rm at}), (1 + |k|^{-2} ) dk) \}  \\
   \RR_{\rm sb} & = \{ R : [0,\infty ) \to \mathcal{L}(\HH_{\rm at})
   \text{ piecewise   continuous}
   \} .
\end{align*}
If we are given a  $(\VV_{\rm sb},\RR_{\rm sb})$-valued graph function on $X$ with no
external lines, we can naturally extend it to a graph function with external lines
by  assigning  the identity operator in $\HH_{\rm at}$ to  each
external line. With this extension we can naturally extend every definition for
graph functions with external lines to such without external lines.
In particular we can extend Definition \ref{def:subst} to graph functions without
external lines.

For a finite set  $X \subset \Z$ and for   $\phi = ((V_x)_{x \in X} , ( F_e )_{e \in \bar{E}_X})$ a
   $(\VV_{\rm sb},\RR_{\rm sb})$-valued graph function on $X$ with
   external lines, we define the formal  operator product
  $$
  \Pi(\phi ) = F_{\{-\infty, \min X \}}(H_f) \!\!\!\! \prod_{ x \in X \setminus
    \max X} \!\!\!\! \left\{ V_x F_{ \{ x, r_X(x) \} }(H_f) \right\}\! V_{\max X}
  F_{\{\max X, \infty \}}(H_f) .
  $$
  Moreover, we define an energy shift
  $$
  \mathcal{T}_r(\phi) := ((V_x)_{x \in X} , ( F_e(\cdot + r  )_{e \in \bar{E}_X}) ,
  $$
  for $r \geq 0$.
Let us now define a special graph function, which we will eventually use
to write  the expansion coefficients of the energy.
   For $r \geq 0$ and $\eta \geq 0$ define
  \begin{align}\label{eq:defofresolvent}
  R(r,\eta) := \frac{ 1-P_{\rm at} \otimes 1_{ r = 0} }{E_0 - H_{\rm
      at} - r - \eta } .
  \end{align}
  We note that \eqref{eq:defofresolvent} is bounded if $\eta > 0$ or
  $r > 0$.
  The parameter $\eta$ serves as a regularization which we
  shall later remove.  The parameter $r$ will be needed to account for
  the additional terms arising from the pull-through formula.

   Let us now define the graph functions which we will use
   for our model.  We write
  $$N_n :=  \N \cap [1, n ]  ,$$
  and we define for $r , \eta \geq 0$
  \begin{align*}
  \pi_n(\eta) & := ((V)_{x \in
      N_n} ,  ( R( \cdot   , \eta) )_{e \in
      E_{N_n}} ) \\
  \pi_n(r,  \eta) &:= ((V)_{x \in
      N_n} ,  ( R( \cdot + r  , \eta) )_{e \in
      E_{N_n}} ) = \mathcal{T}_r \pi_n(\eta)  ,
  \end{align*}
 where $V \in \VV_{\rm sb}$ is the interaction of the spin boson model.

  For a  given  sequence $(\EE_n)_{n \in \N}$ in $\R$  we define the expression
  \begin{align}
    \label{eq:E_n:2david}
    T_n(r, \eta)
      &:=  \sum_{k=1}^n \sum_{\substack{ \{I_1,\dotsc,I_k\} \\ I_i \cap I_j = \emptyset,  \, I_j \neq \emptyset \\ G_{I_i} \subsetneq G_{N_n}   }}
    ( 1 \otimes P_\Omega )   \Pi (\subst_{\substack{I  \to -\EE_{|I|}\\ I \in \{I_1,\dotsc,I_k\} }   }(\pi_n( r,  \eta))) ( 1 \otimes P_\Omega )  ,
  \end{align}
  where the second sum is over all sets with $k$-elements, with
  elements being nonzero disjoint subsets of $N_n$ such that their
  associated graphs are subgraphs of $G_{N_n}$ (this condition is
  imposed to ensure that $I_k$ does not contain any holes). We shall
  adopt the convention that we view \eqref{eq:E_n:2david} as an
  operator restricted to the atomic Hilbert space.
We also introduce the expressions
$$
\widehat{T}_n(r, \eta) := T_n(r, \eta) - \EE_n ,
$$
which we will refer to as renormalized propagators.
Henceforth we shall write  $ P_\Omega$ instead of  $1 \otimes P_\Omega$,
In the following lemma we relate the energy formula \eqref{eq:E_nee1def} in Theorem \ref{thm:mainenergy}
to the  expressions defined in  \eqref{eq:E_n:2david}.

\begin{lemma} \label{lem:energyform2} Suppose $\eta > 0$. Then for any sequence $(\EE_n)_{n \in \N}$ of real numbers
we have for $n \geq 2$
  \begin{align*}
    &\inn{ \varphi_{\rm at} , T_{n}(0,\eta) \varphi_{\rm
    at} } =  \\  & \; \sum_{k=2}^{n} \sum_{\substack{j_1 + \cdots + j_k = n \\ j_s \geq 1 }}
    \langle \psi_0  , ( \delta_{1 j_1 }  V - \EE_{j_1} )
    \prod_{s=2}^k  \left\{ (E_0 - \eta - H_0)^{-1} \bar{P}_0   (  \delta_{1 j_s }  V - \EE_{j_s}   ) \right\} \psi_0  \rangle  .
  \end{align*}
\end{lemma}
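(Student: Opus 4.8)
The plan is to turn both sides into the same sum of operator products indexed by ways of ``tiling'' $\{1,\dots,n\}$, and then to match them termwise through a combinatorial bijection. First I would simplify the propagator: the spectral projection of $H_f$ onto $\{0\}$ is $P_\Omega$, so substituting $r=H_f$ into \eqref{eq:defofresolvent} gives, via \eqref{decompofproj}, $R(H_f,\eta)=\frac{1-P_{\rm at}\otimes P_\Omega}{E_0-H_0-\eta}=(E_0-\eta-H_0)^{-1}\bar{P}_0$, which is bounded because $\eta>0$. Since $\psi_0=\varphi_{\rm at}\otimes\Omega$ satisfies $P_\Omega\psi_0=\psi_0$, the outer projections in \eqref{eq:E_n:2david} let me write $\langle\varphi_{\rm at},T_n(0,\eta)\varphi_{\rm at}\rangle=\langle\psi_0,S_n\psi_0\rangle$, where $S_n$ is the corresponding sum of operators $\Pi(\subst(\pi_n(0,\eta)))$.

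Next I would unfold a single summand $\Pi(\subst_{I\to-\EE_{|I|}}(\pi_n(0,\eta)))$ for a fixed disjoint family $\{I_1,\dots,I_k\}$. By Definition~\ref{def:subst}, collapsing a block $I$ deletes its $|I|-1$ internal propagators and replaces $I$ by the scalar $-\EE_{|I|}$ flanked by the two propagators adjacent to $I$, while Lemma~\ref{lem:comm} lets me carry out the disjoint collapses in any order. Each summand is therefore exactly a left-to-right product whose factors are a $V$ for every vertex outside all blocks and a $-\EE_{|I|}$ for every block $I$, with one resolvent $(E_0-\eta-H_0)^{-1}\bar{P}_0$ between consecutive factors and the identity (from the external lines) at the two outer ends. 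Consequently $\langle\varphi_{\rm at},T_n(0,\eta)\varphi_{\rm at}\rangle$ equals the sum, over all tilings of $\{1,\dots,n\}$ into slots that are either a single $V$-vertex (weight $1$) or a block contributing $-\EE_{|I|}$ (weight $|I|$), of $\langle\psi_0,(\text{product})\psi_0\rangle$; the empty family of blocks supplies the bare term with $n$ factors $V$ separated by resolvents.

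On the right-hand side I would expand each factor $\delta_{1j_s}V-\EE_{j_s}$: a slot of size $1$ yields the two terms $V$ and $-\EE_1$, while a slot of size $j_s\ge 2$ yields $-\EE_{j_s}$. This exhibits the right-hand side as a sum over compositions $j_1+\dots+j_k=n$ carrying a label in $\{V,\EE\}$ at each size-one slot, again with the resolvent $(E_0-\eta-H_0)^{-1}\bar{P}_0$ between consecutive factors. The bijection then sends a $V$-labeled size-one slot to an individual $V$-vertex, an $\EE$-labeled size-one slot to a block of size $1$, and a slot of size $j\ge 2$ to a block of size $j$; under it the products of factors coincide. It remains to match the range constraints: $G_{I}\subsetneq G_{N_n}$ forbids a single block from exhausting $\{1,\dots,n\}$, so for $n\ge 2$ the left side excludes precisely the one-slot tiling, i.e.\ the lone term $-\EE_n$, while the right-hand sum starting at $k=2$ excludes exactly the one-slot composition $j_1=n$, which is again $-\EE_n$. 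Hence both sides run over the identical set of tilings with at least two slots and agree termwise. (As a consistency check, any tiling whose first or last slot is a block contributes $0$ on both sides, since $(E_0-\eta-H_0)^{-1}\bar{P}_0\psi_0=0$.)

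I expect the main obstacle to be the unfolding step described above: extracting from Definition~\ref{def:subst} and Lemma~\ref{lem:comm} the precise operator-product normal form in which the surviving propagators are exactly those adjacent to each collapsed block, with the external lines producing no resolvent at the two outer ends. Once this normal form is established, the bijection and the reconciliation of the two range constraints are routine.
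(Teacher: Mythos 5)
Your proposal is correct and takes essentially the same route as the paper's own proof: a termwise identification between the substitution-indexed sum defining $T_n(0,\eta)$ and the composition-indexed sum on the right-hand side, via the bijection that complements the collapsed blocks by singleton slots (equivalently, expands each factor $\delta_{1 j_s} V - \EE_{j_s}$), with the constraint $G_I \subsetneq G_{N_n}$ matching the restriction $k\geq 2$. The paper states this identification more tersely, but the underlying argument is the same.
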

\begin{proof} To see
  this, we identify each summand in the sum.  Consider the summand in
  \eqref{eq:E_n:2david} indexed by
  $\mathcal{I} = \{ I_1,\dotsc,I_l \}$.  We complement this set by
  sets consisting of elements of $N_n$ which are not contained in any
  of the sets in $\mathcal{I}$.  To this end we define
$$
\mathcal{J} := \{ \{ s \} : s \in N_n \text{ and } s \notin I , \
\forall I \in \mathcal{I} \} .
$$
Now we order the elements of
$\mathcal{S} := \mathcal{I} \cup \mathcal{J}$ in increasing order in
the sense that for all $s_1, s_2 \in \mathcal{S}$ we set
$$
s_1 < s_2 : \Leftrightarrow \text{every element of } s_2 \text{ is an
  upper bound of } s_1 .
$$
This defines a bijection $\varphi: N_{|\mathcal{S}|} \to \mathcal{S}$
preserving the order.
By construction we see that the summand in
\eqref{eq:E_n:2david} indexed by $\{I_1, \dotsc, I_l\}$ is equal to the
summand in \eqref{eq:E_nee1def}, which we obtain by choosing $k = |\mathcal{S}|$, the indices
  $j_s = |\varphi(s)|$ for $s =1,\dotsc,k$,    by choosing  $-\EE_1$ in case  $j_s=1$ and $\varphi(s) \in \mathcal{I}$,
  and by choosing  $V$ if $j_s=1$ and $\varphi(s) \notin \mathcal{I}$.
\end{proof}

  Finally we give an alternative formulation  for \eqref{eq:E_n:2david}, to shorten the
  notation in forthcoming proofs.
  For  $S \subset \Z$ we  say that a set of the form $\{ s \in S :  a \leq s \leq b \}$ for some  $a,b \in S$ is an interval
  of $S$.   For $M \subset \Z$ we define the
   set  $\mathcal{Q}(M)$   consisting of all collections  of disjoint  nonempty intervals  of  $M$, i.e.,
  \begin{align*}
    \mathcal{Q}(M) & := \{ \mathcal{I} \subset \mathcal{P}(M)   :  \ \forall   I , J \in \mathcal{I} \text{ we have  }  I \cap J = \emptyset  , \,
     \\
    & \qquad \quad  \forall I  \in \mathcal{I} \text{ the set  } I  \text{ is a nonempty  interval of }  M   \} .
  \end{align*}

  Then we can rewrite \eqref{eq:E_n:2david} as
  \begin{align}
    \label{eq:E_n:2}
    T_n(r, \eta)
      &=  \sum_{\substack{ \mathcal{I} \in  \mathcal{Q}(N_n) \\ N_n \notin \mathcal{I}  }}
    P_\Omega  \Pi (\subst_{\substack{I  \to -\EE_{|I|}\\ I \in \mathcal{I} }   }(\pi_n(r,   \eta))) P_\Omega ,
  \end{align}
  and for the renormalized expression
  \begin{align} \label{eq:E_n:2:renom}
       \widehat{T}_n(r, \eta)
      &=  \sum_{\substack{ \mathcal{I} \in  \mathcal{Q}(N_n)  }}
    P_\Omega  \Pi (\subst_{\substack{I  \to -\EE_{|I|}\\ I \in \mathcal{I} }   }( \pi_n(r,  \eta))) P_\Omega .
  \end{align}

\subsection{Wicks Theorem and Contractions}

\label{sec:defofcont}

Now we use a generalized version of Wicks theorem to write
 \eqref{eq:E_n:2david} as a sum of so called contractions.  To this end we
introduce the following notation.

\begin{definition}
  Let $X$ be a finite set. A {\bf pair} of $X$ is a subset of $X$
  containing two elements. A {\bf pair partition} $P$ of $X$ is a partition
  of $X$ consisting of pairs of $X$, i.e., $|X|$ is even and we have
$$
P = \{ p_1 , p_2,\dotsc,p_{\frac{|X|}{2}} \}
$$
where $p_j$ is a pair of $X$, $p_i \cap p_j = \emptyset $ if
$i \neq j$, and $\bigcup_{p \in P} p = X$.
A {\bf pairing} of $X$ is a
pair partition of a subset of $X$.
\end{definition}

\begin{definition}\label{def:defofcont}
Let  $X \subset \Z$ be finite, let $P$ be a pairing of $X$, and let
$\phi = ((a(G_x)+a^*(G_x))_{x \in X} , ( F_e )_{e \in \bar{E}_X})$ be a
$(\VV_{\rm sb},\RR_{\rm sb})$-valued graph function with external lines on $X$.
Then the {\bf contraction}  of $\phi$ with respect to $P$ is defined by
\begin{align*}
  \CC_P(  \phi )(r) &:= \prod_{j \in X }  \left\{ \int dk_j  \right\}
  \delta_{P}(k) F_{\{-\infty,\min X \}}(r) \\
  & \qquad\quad \prod_{j \in X  \setminus \{ \max X \}}
	\left\{   G^\sharp_{j,P}(k_j) F_{\{j,r_X(j) \} }
		(|K_{\{j,r_X(j)\}}|_P + r )   \right\} \\
  &\qquad\qquad\qquad\qquad\quad\;\times
  G^\sharp_{\max X, P}(k_{\max X}) F_{\{\max X ,\infty \}}(r) ,
\end{align*}
where $r \geq 0 $ and
\begin{align*}
  G^\sharp_{j,P}   & :=
              \begin{cases}
                G_j^*     & , \exists p \in P ,  \  j = \min p  \\
                G_j & , \exists p \in P , \ j = \max
                      p
              \end{cases}
              \, ,
  \\
  \delta_P(k) & := \prod_{p \in P} \delta(k_{\min p} - k_{\max p} )\, , \\
  |K_e |_P & := \sum_{\substack{ p \in P \\ \max e \leq \max p \\ \min p
  \leq \min e }} | k_{\max p}|\, .
\end{align*}
We define
$$
  \CC_P^0(  \phi ) :=   \CC_P(  \phi )(0) .
$$
\end{definition}
We shall adopt the following conventions.  We write
$$
\CC( \phi ) := \sum_P \CC_P(\phi ) ,
$$
where the sum is over all pair partitions of $X$.
If $P$ is not a pair partition of
$X$ we set  for notational compactness
$$
\CC_P( \phi ) = 0 .
$$
We adopt analogous conventions for $\mathcal{C}^0$.

\begin{lemma}
Suppose the situation is as in   Definition  \ref{def:defofcont},
and that for all $e \in \bar{E}_X$ the functions $F_{e}(r)$ are uniformly bounded in  $r \geq 0$ .
Then $\mathcal{C}(\phi) \in \mathcal{R}_{\rm sb}$.
\end{lemma}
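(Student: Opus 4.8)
The plan is to reduce to a single pair partition and then treat boundedness and piecewise continuity separately. Since $\CC(\phi) = \sum_P \CC_P(\phi)$ is a finite sum over the pair partitions of the finite set $X$, and a finite sum of bounded, piecewise continuous operator-valued functions is again of this type, it suffices to show $\CC_P(\phi) \in \RR_{\rm sb}$ for each fixed pairing $P$. I would also record at the outset that the integrand is strongly measurable — the $G_x$ are strongly measurable and the $F_e$ are piecewise continuous — so that the Bochner integral defining $\CC_P(\phi)(r)$ is meaningful once its norm is controlled.

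First I would establish the norm bound. Using submultiplicativity of the operator norm and the hypothesis that each $F_e$ is uniformly bounded, say $\|F_e(\cdot)\| \le c_e$, I bound the norm of the integrand — after carrying out the integrations forced by $\delta_P(k)$ — by $C \prod_{j \in X} \|G^\sharp_{j,P}(k_j)\|$ with $C = \prod_{e} c_e$. The delta factors identify $k_{\min p}$ with $k_{\max p}$ for each pair $p \in P$, so after renaming the surviving variable of pair $p$ to $\kappa_p$ and using $\|G^*\| = \|G\|$, the remaining integral factorizes over pairs as $\prod_{p \in P} \int \|G_{\min p}(\kappa)\| \, \|G_{\max p}(\kappa)\| \dd \kappa$. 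Each factor is finite by Cauchy--Schwarz together with $\int \|G(k)\|^2 \dd k < \infty$, which follows from the defining integrability of $\VV_{\rm sb}$. This yields a bound on $\|\CC_P(\phi)(r)\|$ that is uniform in $r$, placing the values in $\mathcal{L}(\HH_{\rm at})$ and simultaneously furnishing an $r$-independent integrable majorant for the next step.

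Next I would prove piecewise continuity in $r$. The two external factors $F_{\{-\infty,\min X\}}(r)$ and $F_{\{\max X,\infty\}}(r)$ depend on no $k_j$, so they pull out of the Bochner integral as a left and a right multiplier and are piecewise continuous by hypothesis. It then remains to show that the residual integral $I_P(r)$, in which $r$ enters only through the internal factors $F_{\{j,r_X(j)\}}(|K_{\{j,r_X(j)\}}|_P + r)$, is fully continuous. Here the integration smears out the jumps: for fixed $r_0$ the set of momenta for which some internal argument $|K|_P + r_0$ meets a discontinuity of the corresponding $F_e$ is a countable union of level sets $\{|K|_P = t - r_0\}$, hence a null set, so for almost every configuration all internal factors are continuous at $r_0$ and converge as $r \to r_0$. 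The $r$-independent majorant from the previous step then lets me invoke dominated convergence to obtain $\|I_P(r) - I_P(r_0)\| \to 0$. Thus $\CC_P(\phi)(r)$ is a product of piecewise continuous bounded factors with a continuous bounded factor, hence piecewise continuous.

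I expect the main obstacle to be this continuity step — specifically justifying operator-norm continuity of $I_P$ via dominated convergence and verifying that the internal discontinuities contribute only on a null set of momenta — whereas the boundedness step is essentially the submultiplicative estimate combined with Cauchy--Schwarz and is routine.
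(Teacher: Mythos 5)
Your proposal is correct and takes the same route as the paper, whose entire proof of this lemma is the single sentence ``This follows from the dominated convergence theorem''; you have merely supplied the integrable majorant (uniform bounds on the $F_e$ plus Cauchy--Schwarz on the $G$'s after resolving the delta functions) and the almost-everywhere convergence of the integrand that this invocation presupposes. One small overstatement: for a disconnected pairing an internal edge $e$ need not be spanned by any pair, so $|K_e|_P\equiv 0$ and the corresponding factor reduces to $F_e(r)$ independent of the momenta; your residual integral $I_P$ is then only piecewise continuous rather than fully continuous, but since the exceptional values of $r_0$ form a locally finite set this does not affect the conclusion $\CC(\phi)\in\RR_{\rm sb}$.
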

\begin{proof} This follows from the  dominated convergence theorem.
\end{proof}

\begin{remark} \label{rem:equivform} Let the situation be as in Definition \ref{def:defofcont}.
Then we have
$$
\CC_P(  \phi )(r) = \CC_P^0(  \mathcal{T}_r \phi ) .
$$
\end{remark}

We illustrate Definition \ref{def:defofcont} in the following example.

\begin{example}\label{ex:cont} Consider the set $N_4$ and
we consider the pair partition $P$ which is indicated by the lines below.
 \begin{equation*}
\begin{array}{ll}
 P : \quad &    \contraction{1-1-1-1}{(1,3)_1,(2,4)_2}
   \end{array}
\end{equation*}
\begin{align*}
  \CC_P(  \phi )(r)
   =  \int dk_1 &\cdots dk_4 \Big\{ \delta(k_1-k_3) \delta(k_2-k_4)
		F_{\{-\infty,1\}}(r)  G^*(k_1) \\
     &\qquad \qquad \;\, F_{\{1,2\}}(|k_3|+r) G^*(k_2) F_{\{2,3\}}(|k_3| + |k_4|+ r) \\
     &\qquad \qquad \qquad\qquad\; G(k_3) F_{\{3,4\}}(|k_4| + r ) G(k_4) F_{\{4 ,\infty \}}(r)\Big\}.
\end{align*}
\end{example}

\begin{lemma}[Generalized Wick Theorem]\label{lem:wick}
  Let $X \subset \N$ be finite and let $\phi$ be a
  $(\VV_{\rm sb},\RR_{\rm sb})$-valued graph function on $G_X$ or
  $\bar{G}_X$. Then
  $$
  P_\Omega \Pi(\phi) P_\Omega = \CC^0(\phi) \otimes |\Omega \rangle \langle \Omega | .
  $$
\end{lemma}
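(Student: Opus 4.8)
The plan is to prove the identity by induction on $|X|$, peeling off the leftmost vertex and using the canonical commutation relations together with the pull-through formula $a(k)\,f(H_f) = f(H_f+|k|)\,a(k)$ (and its adjoint). Throughout one uses that $H_f\Omega = 0$, so that $\langle\Omega|F(H_f) = F(0)\langle\Omega|$ and $F(H_f)\Omega = F(0)\Omega$, and that $a(k)\Omega = 0$ with its adjoint statement $\langle\Omega|a^*(k) = 0$ (equivalently $\langle\Omega|a^*(G) = 0$, since $a^*(G)^* = a(G)$ kills $\Omega$). The base case $X = \emptyset$ gives $P_\Omega\Pi(\phi)P_\Omega = F_{\{-\infty,\infty\}}(0)\,|\Omega\rangle\langle\Omega|$, matching $\CC^0(\phi)$ taken over the unique empty pairing. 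More generally, when $|X|$ is odd both sides vanish: the left side because every monomial obtained by expanding the fields in $\Pi(\phi)$ contains an odd number of creation and annihilation operators and hence has zero vacuum expectation, and the right side because an odd set admits no pair partition.

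For the induction step I would write $x_0 := \min X$ and expand $V_{x_0} = a(G_{x_0}) + a^*(G_{x_0})$. Since $\langle\Omega|F_{\{-\infty,x_0\}}(H_f) = F_{\{-\infty,x_0\}}(0)\langle\Omega|$ and $\langle\Omega|a^*(G_{x_0}) = 0$, only the annihilation part $a(G_{x_0}) = \int G_{x_0}^*(k)\,a(k)\,dk$ survives. I then commute $a(k)$ rightward through the remaining product: each time it passes a factor $F_e(H_f)$ the pull-through formula shifts its argument to $F_e(H_f+|k|)$, and each time it passes a vertex $V_x$ the relation $[a(k),a^*(G_x)] = G_x(k) \in \mathcal{L}(\HH_{\rm at})$ produces a contraction term in which $V_x$ contributes its creation part together with the operator-valued kernel $G_x(k)$; the leftover term in which $a(k)$ finally reaches $\Omega$ vanishes. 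Summing over the vertex $x_j > x_0$ with which $a(k)$ contracts is exactly the choice of the partner of $x_0$ in a pair partition, and the resulting pair $p = \{x_0, x_j\}$ satisfies $\min p = x_0 < x_j = \max p$ with kernels $G^*_{x_0}$ at the left and $G_{x_j}$ at the right, matching the assignment of $G^\sharp_{\cdot,P}$ in Definition~\ref{def:defofcont}, while the identification of the integration variable of the annihilation operator with that of its partner produces the factor $\delta(k_{\min p}-k_{\max p})$ in $\delta_P(k)$.

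The delicate part is the bookkeeping of the shifted arguments. Commuting $a(k)$ from $x_0$ up to $x_j$ shifts precisely those edge functions $F_e(H_f)$ with $\min p = x_0 \le \min e$ and $\max e \le x_j = \max p$ by $|k| = |k_{\max p}|$, which are exactly the edges straddled by the pair $p$ in the definition of $|K_e|_P$. Removing the contracted pair turns the product into a $(\VV_{\rm sb},\RR_{\rm sb})$-valued graph function on $X\setminus\{x_0,x_j\}$ whose edge connecting the former neighbors of the removed vertices carries the merged edge function, with the arguments of all straddled edges shifted by $|k|$; since the Lemma is stated for arbitrary $F_e \in \RR_{\rm sb}$, the induction hypothesis applies to this reduced graph function and evaluates its surviving edges at $H_f = 0$. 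Reassembling the sum over the partner $x_j$ and over pairings of the remainder then makes each edge $e$ be evaluated at $\sum_{p\ \text{straddling}\ e}|k_{\max p}| = |K_e|_P$, so that the total equals $\sum_P \CC^0_P(\phi) = \CC^0(\phi)$. The main obstacle is keeping the operator-valued kernels in $\mathcal{L}(\HH_{\rm at})$ in their correct left-to-right order through all the commutations and verifying edge by edge that the accumulated momentum shifts reproduce $|K_e|_P$; once this is set up, the identification of surviving monomials with pair partitions is the standard inductive proof of Wick's theorem.
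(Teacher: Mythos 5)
Your proposal is correct and follows essentially the route the paper intends: the paper's proof is only a pointer to the usual Wick theorem combined with the pull-through formula (citing \cite{BacFroSig98-1}), and your induction on $|X|$ — killing the creation part of $V_{\min X}$ against the vacuum, commuting $a(k)$ rightward so that contractions enumerate the partners of $\min X$ and pull-through accumulates exactly the shifts $|K_e|_P$ on the straddled edges — is precisely that argument written out in detail. The bookkeeping you flag as delicate (merged edge functions, $k$-dependent shifts, applicability of the induction hypothesis because $\RR_{\rm sb}$ is closed under these operations) is handled correctly.
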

The proof follows from the usual Wick theorem, leaving the operator
valued functions $G$ at their position and using the so called pull
through formula.  The pull through formula gives the commutation
relation between the free field energy and the creation or
annihilation operators.  For a detailed proof we refer the reader to
\cite{BacFroSig98-1}.
The following lemma is an  immediate consequence  of
\eqref{eq:E_n:2} resp. \eqref{eq:E_n:2:renom}
and the generalized Wick theorem (Lemma~\ref{lem:wick}).

\begin{lemma}\label{lem:genwick} We have
  \begin{align}
    \label{eq:E_n:22}
    T_n(r,\eta)
      &=     \sum_{\substack{\mathcal{I} \in  \mathcal{Q}(N_n) \\ N_n \notin \mathcal{I}   }}
    \mathcal{C}^0  (\subst_{\substack{I \to -\EE_{|I|}\\ I \in \mathcal{I }} }(\pi_n(r,\eta))), \\
    \widehat{T}_n(r,\eta) \label{eq:E_n:22:renorm}
      &=     \sum_{\substack{\mathcal{I} \in  \mathcal{Q}(N_n)    }}
        \mathcal{C}^0  (\subst_{\substack{I \to -\EE_{|I|}\\ I \in \mathcal{I }} }(\pi_n(r,\eta)))      .
  \end{align}
\end{lemma}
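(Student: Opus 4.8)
Lemma \ref{lem:genwick} — final statement to prove.

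\begin{proof}[Proof proposal]
The plan is to read off both identities by applying the generalized Wick theorem, Lemma~\ref{lem:wick}, termwise to the two expansions \eqref{eq:E_n:2} and \eqref{eq:E_n:2:renom}. These already write $T_n(r,\eta)$ and $\widehat{T}_n(r,\eta)$ as sums over $\mathcal{I}\in\mathcal{Q}(N_n)$ — with the restriction $N_n\notin\mathcal{I}$ in the first case and without it in the second — of the vacuum-sandwiched operator products $P_\Omega\,\Pi(\phi_{\mathcal{I}})\,P_\Omega$, where I abbreviate $\phi_{\mathcal{I}}:=\subst_{\substack{I\to-\EE_{|I|}\\ I\in\mathcal{I}}}(\pi_n(r,\eta))$. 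Since the summation ranges in the target identities \eqref{eq:E_n:22} and \eqref{eq:E_n:22:renorm} coincide with those in \eqref{eq:E_n:2} and \eqref{eq:E_n:2:renom}, it suffices to prove the single-summand identity $P_\Omega\,\Pi(\phi_{\mathcal{I}})\,P_\Omega=\CC^0(\phi_{\mathcal{I}})\otimes|\Omega\rangle\langle\Omega|$ for each admissible $\mathcal{I}$ and then sum.

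The step that needs checking, and the main if modest obstacle, is that $\phi_{\mathcal{I}}$ is an admissible argument for Lemma~\ref{lem:wick}, namely a $(\VV_{\rm sb},\RR_{\rm sb})$-valued graph function on the reduced vertex set $N_n\setminus\bigcup_{I\in\mathcal{I}}I$. By \eqref{eq:substext} the substitution removes the vertices in $\bigcup_{I\in\mathcal{I}}I$ and leaves every surviving vertex equal to the spin-boson interaction $V\in\VV_{\rm sb}$, so the vertex labels are unaffected; it alters only the edge functions, replacing an edge by a product of two resolvents $R(\cdot+r,\eta)$ with the scalar $-\EE_{|I|}$ inserted in between. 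Because $\EE_{|I|}\in\R$ and each resolvent lies in $\RR_{\rm sb}$, every such modified edge function is again a piecewise continuous $\mathcal{L}(\HH_{\rm at})$-valued function, hence an element of $\RR_{\rm sb}$. Thus $\phi_{\mathcal{I}}$ is indeed a $(\VV_{\rm sb},\RR_{\rm sb})$-valued graph function, with external lines understood through the convention of Subsection~\ref{sec:opprod} assigning the identity on $\HH_{\rm at}$ to each external line; this remains valid in the extreme case $\mathcal{I}=\{N_n\}$, which contributes the $-\EE_n$ term responsible for $\widehat{T}_n=T_n-\EE_n$.

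With admissibility secured, Lemma~\ref{lem:wick} applies verbatim to each $\phi_{\mathcal{I}}$ and yields the single-summand identity above. Under the standing convention that $T_n(r,\eta)$ and $\widehat{T}_n(r,\eta)$ are viewed as operators on the atomic Hilbert space, the vacuum factor $\otimes|\Omega\rangle\langle\Omega|$ is suppressed and the right-hand side is identified with $\CC^0(\phi_{\mathcal{I}})$. Summing over all $\mathcal{I}\in\mathcal{Q}(N_n)$ in the unrestricted case transforms \eqref{eq:E_n:2:renom} into \eqref{eq:E_n:22:renorm}, and summing over those $\mathcal{I}$ with $N_n\notin\mathcal{I}$ transforms \eqref{eq:E_n:2} into \eqref{eq:E_n:22}, which completes the proof.
\end{proof}
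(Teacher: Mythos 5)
Your proposal is correct and follows exactly the route the paper intends: the paper itself presents Lemma~\ref{lem:genwick} as an immediate consequence of \eqref{eq:E_n:2}, \eqref{eq:E_n:2:renom} and the generalized Wick theorem (Lemma~\ref{lem:wick}), applied termwise to each summand, which is precisely what you do. Your additional check that the substituted graph functions remain admissible $(\VV_{\rm sb},\RR_{\rm sb})$-valued graph functions (including the boundary case $\mathcal{I}=\{N_n\}$ producing the $-\EE_n$ term) is a useful elaboration of a detail the paper leaves implicit.
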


We shall need the following lemma which collects two algebraic relations of the contraction
operation.

\begin{lemma}
  \label{lem:algrelcont}
  Let  $X \subset \Z$ be finite, let $P$ be a pairing of $X$.
  \begin{enumerate}[label=\rm (\alph*)]
  \item \label{lem:algrelcont:a}
    For $j=1,2$ let
    $\phi_j = ((V_x)_{x \in X} , ( F_{j,e})_{e \in \bar{E}_X})$ be
    $(\VV_{\rm sb},\RR_{\rm sb})$-valued graph functions with external lines on $X$ and
    suppose for $e \in \bar{E}_X$  we are given numbers $t_{e,j} \in \C$.
    Then  the following multilinearity relation holds
    \begin{align*}
      \mathcal{C}_P\big((V_x)&_{x \in X} , ( \sum_{j_e=1,2} t_{e,j_e} F_{j_e,e})_{e \in \bar{E}_X} \big) \\
      &\quad=\prod_{e \in \bar{E}_X} \left\{  \sum_{j_e=1,2} t_{e,j_e} \right\}
      \mathcal{C}_P((V_x)_{x \in X} , (  F_{j_e,e})_{e \in \bar{E}_X}) .
    \end{align*}
  \item \label{lem:algrelcont:b} Suppose we are given disjoint sets $X_{\rm l},  X_{\rm r}   \subset X$ such that their union equals $X$
    and
    $$
    \max X_{\rm l}   < \min X_{\rm r} .
    $$
    Furthermore, assume that $P = P_{\rm l} \cup P_{\rm r}$ where $P_{\rm l}$ is a pair partition of $X_{\rm l}$ and
    $P_{\rm r}$ is a pair partition of $ X_{\rm r}$. Then for any
    $(\VV_{\rm sb},\RR_{\rm sb})$-valued graph function, $\phi =
    ((V_x)_{x \in X} , ( F_{e})_{e \in {E}_X})$,  we have the
    \begin{align*}
      \mathcal{C}_P(\phi) =
      \mathcal{C}_{P_{\rm l}}(\phi_{\rm l}  )   F_{\{ \max X_l, \min X_r \}}
      \mathcal{C}_{P_{\rm l}}(\phi_{\rm r}  )  ,
    \end{align*}
    where we defined
    \begin{align*}
      \phi_{\rm l} := ((V_x)_{x \in X_{\rm l}} , ( F_{e})_{e \in {E}_{X_{\rm l}}})  )  , \quad
      \phi_{\rm r} := ((V_x)_{x \in X_{\rm r}} , ( F_{e})_{e \in {E}_{X_{\rm r}}})  )  .
    \end{align*}
  \item \label{lem:algrelcont:c} Suppose we are given disjoint sets $X_{\rm l}, X_{\rm m}, X_{\rm r}   \subset X$ such that their union equals $X$
    and
    $$
    \max X_{\rm l} < \min X_{\rm m}  < \max X_{\rm m}  < \min X_{\rm r} .
    $$
    Furthermore, assume that $P = P_{\rm m} \cup P_{\rm b}$ where $P_{\rm m}$ is a pair partition of $X_{\rm m}$ and
    $P_{\rm b}$ is a pair partition of $X_{\rm l} \cup X_{\rm r}$. Then for,  $\phi = ((V_x)_{x \in X} , ( F_{e})_{e \in \bar{E}_X})$, any
    $(\VV_{\rm sb},\RR_{\rm sb})$-valued graph functions with external lines on $X$  we have the following substitution relation
    \begin{align*}
      \mathcal{C}_P(\phi) = \mathcal{C}_{P_{\rm b}}(
      \subst_{
      \substack{
      X_{\rm m}  \to  \tilde{\phi}
      }
      }
      (\phi)  ) ,
    \end{align*}
    where
    $$
    \tilde{\phi} :=   \mathcal{C}_{P_{\rm m}}((V_x)_{x \in X_{\rm m}} , ( F_{e})_{e \in {E}_{X_m}})  .
    $$
  \end{enumerate}
\end{lemma}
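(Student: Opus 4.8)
The plan is to prove all three identities by direct inspection of the integral formula for $\CC_P$ in Definition~\ref{def:defofcont}, the only real bookkeeping being the energy arguments $|K_e|_P$ at which the propagators $F_e$ are evaluated. Part~\ref{lem:algrelcont:a} is essentially immediate: in the product defining $\CC_P(\phi)(r)$ each edge $e \in \bar{E}_X$ contributes exactly one factor $F_e$, whereas the vertex symbols $G^\sharp_{j,P}$, the factor $\delta_P(k)$, and the evaluation points $|K_e|_P + r$ depend only on $P$ and the integration variables and not on the choice of $F_e$. Hence at each fixed $k$ and $r$ the integrand is a product of the $F_e$, so replacing $F_e$ by $\sum_{j_e} t_{e,j_e} F_{j_e,e}$ and expanding the product yields the asserted multilinearity.

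For Part~\ref{lem:algrelcont:b} I would first note that since $P = P_{\rm l} \cup P_{\rm r}$ has no pair joining $X_{\rm l}$ to $X_{\rm r}$, the factor $\delta_P(k)$ splits as $\delta_{P_{\rm l}}(k)\,\delta_{P_{\rm r}}(k)$ and the $k$-integral factorizes into one over $(k_j)_{j \in X_{\rm l}}$ and one over $(k_j)_{j \in X_{\rm r}}$. It then remains to check the energy arguments. For an edge $e \subset X_{\rm l}$ only pairs of $P_{\rm l}$ can satisfy $\min p \le \min e$ and $\max e \le \max p$, since a pair of $P_{\rm r}$ lies entirely to the right of $\max X_{\rm l}$; hence $|K_e|_P = |K_e|_{P_{\rm l}}$, and symmetrically on the right. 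Finally, for the bridging edge $\{\max X_{\rm l}, \min X_{\rm r}\}$ no pair of $P$ straddles it, because every $P_{\rm l}$-pair has maximum $\le \max X_{\rm l}$ and every $P_{\rm r}$-pair has minimum $\ge \min X_{\rm r}$; thus $|K_{\{\max X_{\rm l},\min X_{\rm r}\}}|_P = 0$ and this propagator is evaluated at $r$. Collecting the three groups of factors gives the product $\CC_{P_{\rm l}}(\phi_{\rm l})(r)\,F_{\{\max X_{\rm l},\min X_{\rm r}\}}(r)\,\CC_{P_{\rm r}}(\phi_{\rm r})(r)$.

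Part~\ref{lem:algrelcont:c} is the heart of the lemma. Again I would split $\delta_P = \delta_{P_{\rm m}}\,\delta_{P_{\rm b}}$ and single out the quantity $r_{\rm m} := \sum_{p} |k_{\max p}|$, the sum running over those pairs $p \in P_{\rm b}$ with one endpoint in $X_{\rm l}$ and one in $X_{\rm r}$, i.e. the pairs crossing the middle block. The key combinatorial observation is that such a crossing pair straddles every edge interior to $X_{\rm m}$ as well as the two edges $\{\max X_{\rm l},\min X_{\rm m}\}$ and $\{\max X_{\rm m},\min X_{\rm r}\}$ entering and leaving $X_{\rm m}$, whereas $P_{\rm b}$-pairs contained in $X_{\rm l}$ or in $X_{\rm r}$, and all $P_{\rm m}$-pairs, do not. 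Consequently $|K_e|_P = |K_e|_{P_{\rm m}} + r_{\rm m}$ for every interior edge $e \subset X_{\rm m}$, and the two boundary edges are evaluated at $r_{\rm m} + r$. Integrating out $(k_j)_{j \in X_{\rm m}}$ against $\delta_{P_{\rm m}}$ therefore turns the block of factors supported on $X_{\rm m}$ into $\CC_{P_{\rm m}}^0(\TT_{r_{\rm m}+r}\phi_{\rm m}) = \CC_{P_{\rm m}}(\phi_{\rm m})(r_{\rm m}+r) = \tilde\phi(r_{\rm m}+r)$ by Remark~\ref{rem:equivform}, flanked by $F_{\{\max X_{\rm l},\min X_{\rm m}\}}(r_{\rm m}+r)$ and $F_{\{\max X_{\rm m},\min X_{\rm r}\}}(r_{\rm m}+r)$. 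For edges $e$ inside $X_{\rm l}$ or $X_{\rm r}$ the $P_{\rm m}$-pairs never straddle $e$, so $|K_e|_P = |K_e|_{P_{\rm b}}$. I would then recognize the three flanking factors as exactly the substituted edge function $\tilde F$ of Definition~\ref{def:subst}, evaluated at the argument $|K|_{P_{\rm b}} + r = r_{\rm m} + r$ that $\CC_{P_{\rm b}}$ assigns to the new bridging edge of $\subst_{X_{\rm m}\to\tilde\phi}(\phi)$; assembling everything yields $\CC_{P_{\rm b}}(\subst_{X_{\rm m}\to\tilde\phi}(\phi))$.

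The main obstacle is precisely this energy-argument bookkeeping in Part~\ref{lem:algrelcont:c}: one must verify that the shift $r_{\rm m}$ collected from the crossing pairs is simultaneously the argument-shift that turns the middle contraction into $\tilde\phi$ via $\TT_{r_{\rm m}}$, and the value $|K|_{P_{\rm b}}$ that $\CC_{P_{\rm b}}$ feeds into the bridging propagator of the substituted graph, so that both sides are evaluated at the same point. Remark~\ref{rem:equivform} is what lets me identify the shifted middle integral with $\tilde\phi$, and the remaining verifications reduce to the purely combinatorial statement of which pairs straddle which edges, which I would organize by the case distinction (pair inside $X_{\rm l}$, inside $X_{\rm r}$, inside $X_{\rm m}$, or crossing) indicated above.
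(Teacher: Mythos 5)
Your proposal is correct and follows the same route as the paper, which disposes of part (a) by multilinearity of the operator product and the integral and of parts (b) and (c) by Fubini's theorem; you have simply carried out the energy-argument bookkeeping (in particular the identification of the shift $r_{\rm m}=|K_{\{\max X_{\rm l},\min X_{\rm r}\}}|_{P_{\rm b}}$ with the argument of the substituted edge function) that the paper leaves implicit. No gaps.
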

\begin{proof}
  \ref{lem:algrelcont:a}. This follows  from   the bilinearity of the operator product and the linearity
  of the integral.
  Statements \ref{lem:algrelcont:b} and \ref{lem:algrelcont:c} follow from Fubinis Theorem.
\end{proof}

\subsection{Feynman Graphs, Renormalization}
\label{sec:algebraicform}

In this subsection we want to evaluate  the sum over all  contractions
in  \eqref{eq:E_n:22}.
To this end we will show that we can write  \eqref{eq:E_n:22} as a sum of
so called linked contractions over so called renormalized propagators.
 We shall use the notation, that for a set
$\mathcal{A}$ of sets we write $\bigcup \mathcal{A} := \bigcup_{A \in \mathcal{A}} A$.

\begin{definition} Let $X \subset \Z$.  We call two distinct elements
  $p_1$ and $p_2$ of a pairing of $X$ {\bf linked} if one element of
  $p_1$ lies between the elements of $p_2$ and one of the elements of
  $p_2$ lies between the elements of $p_1$, i.e.,
$$
p_1 \cap [\min p_2 , \max p_2 ] \neq \emptyset \quad \text{ and } \quad p_2
\cap [\min p_1 , \max p_1 ] \neq \emptyset .
$$
For a pairing $P$ of $X$, we call the mapping
$$
\gamma : \{ 0 ,\dotsc, l \} \to P ,
$$
with $l \in \N$, a {\bf linked path} in $P$ from $\gamma(0)$ to
$\gamma(l)$ of {\bf length}  $l$ if $\gamma(i)$ and $\gamma(i+1)$ are linked
for all $i=0,\dotsc,l-1$.  A pairing $P$ of $X$ is called {\bf linked}
if for any two elements $p_1, p_2 \in P$ there exists a linked path in
$P$ from $p_1$ to $p_2$.
The property that there exists a linked path
between two pairings is an equivalence relation on $P$, and we call
the equivalence classes {\bf linked  components} of $P$.
We say that $P$ {\bf links two elements}   $x,  y$ of $X$ if $P$ has
a linked component $P_0$ such that $x,y \in \bigcup P_0$.
\end{definition}

\begin{example} Consider  $N_{14}$.
Then the pairing   $P = \{\{1,5\},\{4,13\},\{12,14\} \}$ is linked
\begin{equation*}
P : \quad     \contraction{1-1-1-1-1-1-1-1-1-1-1-1-1-1}{(1,5)_2,(4,13)_3,(12,14)} ,
        \end{equation*}
  and the pairing $Q =    \{ \{1,5\},\{4,13\},\{6,8\},\{7,11\},\{12,14\}\}$,
  \begin{equation*}
    Q : \quad \contraction{1-1-1-1-1-1-1-1-1-1-1-1-1-1}{(1,5)_2,(4,13)_3,(6,
      8),(7,11)_2,(12,14)} ,
  \end{equation*}
  can be written as the union of  the linked components $P$ and $\{  \{6,8\}, \{7, 11\} \}$.
\end{example}

Next we consider the set of non-paired elements. Specifically,
for any
 pairing $P$ of $N_n$   we define
$\mathcal{I}_P$  to be coarsest  partition of the
 the set of all partitions of    $N_n \setminus \bigcup P$
into intervals of $N_n$.  This definition is illustrated in the next example.

\begin{example} Consider  $N_{14}$  and $P = \{\{1,5\},\{4,13\},\{12,14\} \}$.
\begin{equation*}
	P : \quad     \contraction{1-1-1-1-1-1-1-1-1-1-1-1-1-1}{(1,5)_2,(4,13)_3,(12,14)}
\end{equation*}
Then
        $
        \mathcal{I}_P = \{ \{2,3\} , \{6,7,8,9,10,11\} \} .
        $
\end{example}

\begin{remark} Let us characterize $\mathcal{I}_P$ in different terms.
The set $\mathcal{I}_P$ is  the unique  partition of  $N_n \setminus \bigcup P$ such that
$
\bigcup_{I \in \mathcal{I}_P}  G_{I} = G_{N_n} |_{N_n \setminus \cup P}
$.
\end{remark}

We define
\begin{align}\label{eq:defofc}
  C_n(r, \eta) := &  \sum_{\substack{ P \text{ pairing of } N_n  \\ \{ 1,n  \}  \subset  \cup P    \\  P \text{ linked }     } }
  \mathcal{C}_{P}( \subst_{\substack{I  \to \widehat{T}_{|I|}(\cdot,\eta)    \\ I \in \mathcal{I}_P }} (
  \pi_n(\eta))(r) \\
   =  & \sum_{\substack{ P \text{ pairing of } N_n  \\ \{ 1,n  \}  \subset  \cup P    \\  P \text{ linked }     } }
  \mathcal{C}_{P}^{0}( \subst_{\substack{I  \to \widehat{T}_{|I|}(\cdot + r,\eta)    \\ I \in \mathcal{I}_P }} (
  \pi_n(r,\eta)) , \nonumber
\end{align}
where we  refer to the summand on the right hand
side as a linked  Feynman graph with renormalized propagators. We define
 \begin{equation} \label{defofchat}
 \widehat{C}_n(r,\eta) := C_n(r,\eta) - \EE_n .
\end{equation}

\begin{proposition}\label{thm:algebraicenergies}
  We have
  \begin{align*}
    T_n(r,\eta) & = C_n(r,\eta)  +  \sum_{k=2}^n \sum_{\substack{j_1+\ldots + j_k = n\\ j_i \geq
    1}} \left[ \prod_{i=1}^{k-1}\bigl( \widehat{C}_{j_i}(r,\eta)  R(r,\eta) \bigr)
    \right] \widehat{C}_{j_k}(r,\eta).
  \end{align*}
\end{proposition}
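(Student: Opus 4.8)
The plan is to prove the equivalent, more symmetric identity for the renormalized object $\widehat{T}_n = T_n - \EE_n$, namely $\widehat{T}_n(r,\eta) = \sum_{k=1}^{n}\sum_{j_1+\cdots+j_k=n}\big[\prod_{i=1}^{k-1}\bigl(\widehat{C}_{j_i}(r,\eta)R(r,\eta)\bigr)\big]\widehat{C}_{j_k}(r,\eta)$. The Proposition is recovered from this by inserting $T_n=\widehat{T}_n+\EE_n$ and $C_n=\widehat{C}_n+\EE_n$, which merely converts the $k=1$ summand from $\widehat{C}_n$ into $C_n$. I start from the renormalized Wick formula \eqref{eq:E_n:22:renorm}, which writes $\widehat{T}_n$ as a double sum over a collection $\mathcal{I}\in\mathcal{Q}(N_n)$ of disjoint intervals (each substituted by $-\EE_{|I|}$) and over a pair partition $P$ of the unpaired vertices $N_n\setminus\bigcup\mathcal{I}$. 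The aim is to reorganise this double sum according to its connected blocks.

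For a configuration $(\mathcal{I},P)$ I call an edge $\{m,m+1\}$ of $N_n$ a cut if no pair of $P$ straddles it and it does not lie inside one of the substituted intervals. Since both the pairs of $P$ and the intervals of $\mathcal{I}$ are determined by their endpoints, any object reaching across $\{m,m+1\}$ obstructs a cut; from this one checks that the cuts split $N_n$ into consecutive intervals $J_1,\dots,J_k$ and that each linked component of $P$, as well as each interval of $\mathcal{I}$, lies inside a single $J_i$. At every cut the straddling momentum $|K_e|_P$ vanishes, so the corresponding resolvent of $\pi_n(r,\eta)$ is evaluated at argument $r$ and equals $R(r,\eta)$; iterating the factorisation property in Lemma~\ref{lem:algrelcont}\,\ref{lem:algrelcont:b} then yields $\mathcal{C}^0_P(\subst_{\mathcal{I}}(\pi_n(r,\eta)))=B_{J_1}R(r,\eta)B_{J_2}\cdots R(r,\eta)B_{J_k}$, where $B_{J}$ denotes the contraction of the part of $(\mathcal{I},P)$ supported on the block $J$, which by construction admits no internal cut.

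It therefore remains to sum, for a fixed block $J$ with $|J|=j$, the quantity $B_J$ over all cut-free configurations on $J$. Such a configuration is either the single substituted interval $J\in\mathcal{I}$, contributing $-\EE_j$, or else the top-level linked component of $P$ contains the two endpoints of $J$ and spans it; in the latter case the unpaired vertices fall into gap intervals, and because any pair meeting a gap would straddle and hence join that component, each gap carries an independent self-contained sub-configuration whose sum is exactly $\widehat{T}_{|I|}(\cdot,\eta)$. Feeding these gap sums back through the substitution relation in Lemma~\ref{lem:algrelcont}\,\ref{lem:algrelcont:c} reproduces precisely the definition \eqref{eq:defofc} of $C_j$, so the total cut-free block sum is $-\EE_j+C_j=\widehat{C}_j(r,\eta)$. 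Equivalently, isolating the leftmost block $J_1=\{1,\dots,m\}$ gives the recursion $\widehat{T}_n=\widehat{C}_n+\sum_{m=1}^{n-1}\widehat{C}_m(r,\eta)R(r,\eta)\widehat{T}_{n-m}(r,\eta)$, which I then solve by induction on $n$ to obtain the closed form above.

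The main obstacle is the block identification in the spanning case: one must verify that the gaps of the top-level linked component are genuinely self-contained, that the induced gap intervals coincide with the sets $\mathcal{I}_P$ appearing in \eqref{eq:defofc}, and that summing the internal structures of each gap rebuilds $\widehat{T}_{|I|}$ with the correct energy shift and argument conventions. This is exactly the place where the energy subtractions hidden in $\widehat{C}=C-\EE$ must line up, as indicated around \eqref{eq:E_n:233db}. One must also track carefully that the fully substituted interval $N_n$, which is absent from $T_n$ but present in $\widehat{T}_n$, is what turns the $k=1$ term from $C_n$ into $\widehat{C}_n$; keeping this bookkeeping consistent through the nested recursion is the bulk of the work.
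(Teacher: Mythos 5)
Your proposal is correct and takes essentially the same route as the paper's proof: your ``cuts'' determine exactly the paper's block partition $\mathcal{K}$ into spans of connected components of $P$ together with the leftover substituted intervals, the factorization at the cuts is the paper's application of Lemma~\ref{lem:algrelcont}~\ref{lem:algrelcont:b} to the disconnected part, and your cut-free block sum $-\EE_j + C_j = \widehat{C}_j$ is obtained, as in the paper, by splitting off the linked component containing the block's endpoints and resumming the structures in each of its gaps to $\widehat{T}$. Working with $\widehat{T}_n$ throughout instead of splitting $T_n$ into linking and non-linking parts, and phrasing the final assembly as a recursion solved by induction rather than writing the product over blocks directly, are only cosmetic differences.
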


For the proof we will introduce the notion  of connected pairings,
which is similar to the notation of linked pairings, but not the same.

\begin{definition} Let $X \subset \Z$.  For a pairing $P$ of $X$ we call
the set
$$
S(P) = \bigcup_{p \in P} [\min p,\max p]
$$
the span of $P$.
We call two distinct elements
  $p_1$ and $p_2$ of a pairing of $X$ {\bf connected} if one element of
  $p_1$ lies between the elements of $p_2$ or  one of the elements of
  $p_2$ lies between the elements of $p_1$, i.e.,
$$
 [\min p_1 , \max p_1 ]   \cap [\min p_2 , \max p_2 ] \neq \emptyset .
$$
For a pairing $P$ of $X$, we call the mapping
$$
\gamma : \{ 0 ,\dotsc, l \} \to P ,
$$
with $l \in \N$, a {\bf connected path} in $P$ from $\gamma(0)$ to
$\gamma(l)$  if $\gamma(i)$ and $\gamma(i+1)$ are connected
for all $i=0,\dotsc,l-1$.  A pairing $P$ of $X$ is called {\bf connected}
if for any two elements $p_1, p_2 \in P$ there exists a connected  path in
$P$ from $p_1$ to $p_2$.
The property that there exists a connected  path
between two pairings is an equivalence relation on $P$, and we call
the equivalence classes {\bf connected   components} of $P$.
\end{definition}

  \begin{example} Let  $P$ be a pairing, whose pairs are  indicated by the black lines.
\begin{equation*}
P : \quad   \contraction{1-1-1-1-1-1-1-1-1-1-1-1-1-1-1-1-1-1-1-1-1-1-1-1-1-1-1-1}{[(1,2)],[(3,5),(4,6)_2],[(9,
    16)_3,(10,12)_2,(11,13)],[(17,18)],[(26,27),(25,28)_2]}
\end{equation*}
The connected components are indicated by the dashed boxes, that is
a connected component consist of all the pairs in a single dashed box.
\end{example}

We note that linked implies connected but not the other way around.

\begin{proof}[Proof of Proposition \ref{thm:algebraicenergies}]
  By Lemma \ref{lem:genwick} we can write
  \begin{align}
    T_n(r,\eta)
    &=\sum_{P \text{ pairing of } N_n   }   \sum_{\substack{\mathcal{I} \in  \mathcal{Q}(N_n) \\ N_n \notin \mathcal{I}    }}
    \mathcal{C}^0_P (\subst_{\substack{I \to -\EE_{|I|}\\ I \in \mathcal{I }} }(\pi_n(r,\eta))) \nonumber  \\
    &=  T_n^{(C)}(r,\eta) +  T_n^{(D)}(r,\eta)   , \label{eq:divideT}
  \end{align}
  where we divided the sum over the partitions into partitions which
  connect the smallest and the largest vertex
  \begin{align*}
    T_n^{(C)}(r,\eta) :=   \sum_{\substack{ P \text{ pairing of } N_n  \\ P \text{ links } 1 , n   }}   \sum_{\substack{\mathcal{I} \in  \mathcal{Q}(N_n)  \\ N_n \notin \mathcal{I}   }}
    \mathcal{C}^0_P  (\subst_{\substack{I \to -\EE_{|I|}\\ I \in \mathcal{I }} }(\pi_n(r,\eta)))
  \end{align*}
  (observe that in the above formula we can drop the condition
  $ N_n \notin \mathcal{I} $ because of $\{1,n \} \subset \bigcup P $) and
  the remaining partitions
  \begin{align}
    \label{eq:E_n:23}
    T_n^{(D)}(r,\eta) :=  \sum_{\substack{ P \text{ pairing of } N_n \\ P \text{ does not link } 1 , n    }}   \sum_{\substack{\mathcal{I} \in  \mathcal{Q}(N_n) \\ N_n \notin \mathcal{I}    }}
    \mathcal{C}^0_P  (\subst_{\substack{I \to -\EE_{|I|}\\ I \in \mathcal{I }} }(\pi_n(r,\eta))) .
  \end{align}
  To simplify the connected part $T_n^{(C)}(r,\eta)$ we proceed  as follows.
   We decompose the pairing  $P$ of $N_n$ which links $1$ and $n$, into linked components. We denote the linked component
  which contains  $\{1,n\}$ by $P_e$. Each of the  remaining linked components
  must be a pairing of $I$ for some  $I \in \mathcal{I}_{P_e}$,
  since otherwise the pairing  would link two elements of $N_n$, for which there
  would lie an element of $\bigcup P_e$ between them, a contradiction.
  Thus we  can write the pairing   $P$ of $N_n$ which links $1$ and $n$ in a unique way as
  \begin{equation} \label{eq:ppedcomp}
   P = P_e \cup \bigcup_{I \in \mathcal{I}_{P_e} } P_I ,
  \end{equation}
  where $P_I$ is a pairing of $I$.
  This  decomposition is  illustrated in the following example.

  \begin{example} Consider $N_{14}$ and let $P$ be the
   set of the pairs indicated by black lines.
  \begin{equation*}
  P : \quad    \contraction{1-1-1-1-1-1-1-1-1-1-1-1-1-1}{(1,5)_2,[(2,3)],(4,13)_3,[(6,
      8),(7,11)_2],(12,14)}
  \end{equation*}
  Then $P_e $ is the set of all pairs indicated by the lines
 which are outside of the dashed  boxes. Moreover,
   $\mathcal{I}_{P_e}= \{ I_1 , I_2 \}$ with $I_1 :=  \{2,3\}$, that
   is the set of  points in first  dashed box, and   $I_2 :=
   \{6,7,8,9,10,11\} $, that is the set of  points in the second dashed
   box.
  Furthermore $P_{I_1}$ is the set of all pairs indicated by the lines
  in the first dashed box, and similarly for $P_{I_2}$.
  \end{example}

Now using the decomposition \eqref{eq:ppedcomp} we obtain the first
identity of the following equations
\begin{align}
  &T_n^{(C)}(r,\eta)  \nonumber \\
  &\;=\sum_{\substack{  P_e  \text{ pairing of } N_n  \\ \{ 1,n  \}  \subset  \bigcup P_e    \\  P_e \text{ linked}   } }
  \prod_{I \in   \mathcal{I}_{P_e} } \bigg\{  \sum_{\substack{P_I
  \text{ pairing of     } I  } }  \bigg\} \nonumber \\
  & \qquad\qquad\qquad\qquad\qquad\;
  \sum_{\substack{\mathcal{I} \in  \mathcal{Q}(N_n)    }}
  \mathcal{C}^0_{(P_e \cup \bigcup_{I \in \mathcal{I}_{P_e}}  P_I) } (\subst_{\substack{J \to -\EE_{|J|}\\ J \in \mathcal{I}} }(\pi_n(r,\eta)))
  \nonumber \\
                                    &=
                                      \sum_{\substack{  P_e  \text{ pairing of } N_n  \\ \{ 1,n  \}  \subset  \bigcup P_e    \\  P_e \text{ linked}   }  }
  \prod_{I \in   \mathcal{I}_{P_e} } \bigg\{ \sum_{\substack{\mathcal{I}_I  \in  \mathcal{Q}(I)     }} \sum_{\substack{P_I  \text{ pairing of}   \\
   I \setminus \bigcup  \mathcal{I}_I} } \bigg\}
   \nonumber \\
  &\qquad\qquad\qquad\qquad\qquad\quad
  \mathcal{C}^0_{(P_e \cup \bigcup_{I \in \mathcal{I}_{P_e}}  P_I)  } (\subst_{\substack{J \to -\EE_{|J|}\\ J \in \bigcup_{I \in \mathcal{I}_{P_e}} \mathcal{I}_I} }(\pi_n(r,\eta)))
  \label{eq:E_n:233f} \\
                                    &=  \sum_{\substack{  P_e  \text{ pairing of } N_n  \\ \{ 1,n  \}  \subset  \bigcup P_e    \\  P_e \text{ linked }   } }
  \mathcal{C}^0_{P_e}( \subst_{\substack{I  \to \widehat{T}_{|I|}(\cdot + r,\eta)   \\ I \in \mathcal{I}_{P_e}}} (
  \pi_n(r,\eta)))  \label{eq:E_n:234g} \\
                                    &= C_n(r,\eta) ,  \nonumber
\end{align}
where in \eqref{eq:E_n:233f} we interchanged on each of the intervals
$I \in \mathcal{I}_{P_e}$ the summation on the one hand over energy
subtractions and on the other hand over pairings of $I$, and where
in \eqref{eq:E_n:234g} we used linearity of the contraction operator
$\CC_P$ and the multilinearity property of the product of graph
functions, see Lemma \ref{lem:algrelcont}.

To simplify  $T_n^{(D)}(r,\eta)$ we rearrange the summation by
decomposing it  into disconnected
parts.
To this end we define a bijection between  two
index sets.
The original index set is
 $$
 \mathcal{S}_1 := \{ (P , \mathcal{I}  ) : \mathcal{I} \subset \mathcal{Q}(N_n) , P \text{ is a pair partition of }
  N_n \setminus \bigcup \mathcal{I} \} .
 $$
Now suppose $(P, \mathcal{I} ) \in \mathcal{S}_1$ is given.
The following construction  is
illustrated in the example below.
First we consider the connected components of  $P$, and we  define the sets
\begin{align*}
\mathcal{K}_0  & := \{  S(Q) \cap \N  : Q \text{ connected component of } P \},   \\
\mathcal{I}_0  & := \{  K  \in \mathcal{I} : K \subset N_n \setminus  S(P)  \}, \\
\mathcal{K} & := \mathcal{K}_0 \cup \mathcal{I}_0 .
\end{align*}
Clearly, $\mathcal{K}$ is a partition of $N_n$ into intervals of $N_n$.
Next  we want to decompose $(P, \mathcal{I} )$  with respect to  the partition  $\mathcal{K}$.
For this  we  define for each $K \in \mathcal{K}$ the set
$$
\mathcal{I}_K := \{ I \subset  K : I \in \mathcal{I} \} ,
$$
moreover,   we define for $K \in \mathcal{K}_0$  the pairing
\begin{align*}
P_{K}  := Q ,
\end{align*}
where $Q$ is the unique connected component of $P$ such that $K =  S(Q) \cap \N $,
and for $K \in \mathcal{I}_0  $  we define
$
P_K = \emptyset .
$
It  is straight forward to verify that this construction yields a well defined  map $\psi$ from the set $\mathcal{S}_1$
to the index set
\begin{align*}
 \mathcal{S}_2 &  := \{ (\mathcal{K}, (P_K)_{K \in \mathcal{K}}, (\mathcal{I}_K)_{K \in \mathcal{K} }   ) : \mathcal{K} \in  \mathcal{Q}(N_n) ,
 \bigcup \mathcal{K} = N_n , \\
 & \qquad \quad  \mathcal{I}_K \in \mathcal{Q}(K) , P_K \text{ is a pair partition of } K \setminus \bigcup \mathcal{I}_K , \\
& \qquad  \quad  ( S(P_K) = [\min K , \max K ] \text{ or } (  P_K = \emptyset \text{ and } \mathcal{I_K} = K   ))
  \}  .
 \end{align*}
 In fact $\psi$ is a bijection with inverse
 $$
 \psi^{-1}(\mathcal{K}, (P_K)_{K \in \mathcal{K}}, (\mathcal{I}_K)_{K \in \mathcal{K} }   ) =  ( \bigcup_{K \in \mathcal{K}} P_K , \bigcup_{K \in \mathcal{K}}
 \mathcal{I}_K ) ,
  $$
  as one readily verifies.

  \begin{example} We consider a  pairing $P$ of $N_{28}$, whose pairs are  indicated by the black lines below. The
  set    $\mathcal{I} = \{ I_1, I_2 , I_3 , I_4 \}$ with  $I_1 :=  \{7,8\}, I_2 :=  \{14,15\} , I_3 :=  \{19,20\}, I_4 := \{ 21,22,23,24 \}$
  is indicated below as well. Likewise  the sets   $\mathcal{K}_0$, $\mathcal{I}_0$,  and $\mathcal{K}$  are indicated.
  \begin{equation*}
\begin{array}{ll}
   &
    \hspace{1.92cm}     I_1        \hspace{1.74cm}  \  I_2  \  \hspace{1.05cm} \  I_3    \hspace{0.17cm}  \   \ \  I_4  \     \\
 \mathcal{I} : \quad   &  \hspace{1.92cm}    \contraction{2srf}{}     \hspace{1.74cm}  \contraction{2srf}{} \hspace{1.05cm} \contraction{2srf}{} \hspace{0.145cm}
 \contraction{4srf}{}
        \\
 P : \quad &    \contraction{1-1-1-1-1-1-1-1-1-1-1-1-1-1-1-1-1-1-1-1-1-1-1-1-1-1-1-1}{[(1,2)],[(3,5),(4,6)_2],[(9,
    16)_3,(10,12)_2,(11,13)],[(17,18)],[(26,27),(25,28)_2]}
     \\
    \ & \  \\
     \hline \\
    \mathcal{K}_0 : \quad &
    \contraction{2sr}{} \hspace{0.145cm} \contraction{4sr}{}  \hspace{0.8cm}
    \contraction{8sr}{} \hspace{0.145cm} \contraction{2sr}{} \hspace{2.05cm}  \contraction{4sr}{}
     \\
    \mathcal{I}_0 : \quad &
    \hspace{1.92cm} \contraction{2srf}{}  \hspace{3.3cm}
   \contraction{2srf}{} \hspace{0.145cm} \contraction{4srf}{}
     \\
    \mathcal{K} : \quad &
    \contraction{2sr}{} \hspace{0.145cm} \contraction{4sr}{}  \hspace{0.145cm} \contraction{2srf}{}  \hspace{0.145cm}
    \contraction{8sr}{} \hspace{0.145cm} \contraction{2sr}{} \hspace{0.145cm} \contraction{2srf}{} \hspace{0.145cm} \contraction{4srf}{} \hspace{0.145cm} \contraction{4sr}{}
     \end{array}
\end{equation*}
\end{example}

Furthermore,  one sees that
\begin{align*}
\psi ( \{ (P,\mathcal{I}) \in \,&\mathcal{S}_1 :  S(P) \neq [1,n] , \,   N_n \notin \mathcal{I}  \} )   \\ &=
\{  (\mathcal{K}, (P_K)_{K \in \mathcal{K}}, (\mathcal{I}_K)_{K \in \mathcal{K} } ) \in \mathcal{S}_2   :  |\mathcal{K} | \geq 2 \}  .
\end{align*}
  Thus the bijection  allows us to rearrange the
sum  in \eqref{eq:E_n:23}  and we obtain the first identity of the following equations
\begin{align}
  &T_n^{(D)}(r,\eta) \nonumber  \\
  &\quad=  \sum_{\substack{ \mathcal{K} \in \mathcal{Q}(N_n) \\  \bigcup
  \mathcal{K} =  N_n \\ |\mathcal{K}| \geq 2}} \prod_{K \in
  \mathcal{K}}
  \bigg\{ \sum_{\substack{ P_K \text{ pairing of } K \\  S(P_K) = \\  [\min  K , \max K]
  }} \sum_{\substack{\mathcal{I}_K  \in  \mathcal{Q}(K)     }}
  +    \ 1_{(P_K = \emptyset , \mathcal{I}_K = \{ K \}  ) }  \bigg\} \nonumber \\
 & \qquad\qquad\qquad\qquad\qquad\qquad\quad \times
 \mathcal{C}^0_{(\bigcup_{K \in \mathcal{K}}  P_K)} (
  \subst_{\substack{I \to -\EE_{|I|}\\ I \in  \bigcup_{K \in
  \mathcal{K} } \mathcal{I}_K }}(\pi_{n}(r,\eta)))) \nonumber \\
  &\quad=  \sum_{\substack{ \mathcal{K} \in \mathcal{Q}(N_n) \\  \bigcup
  \mathcal{K} =  N_n \\ |\mathcal{K}| \geq 2}} \prod_{K \in
  \mathcal{K}}
  \bigg\{ \sum_{\substack{ P_K \text{ pairing of } K  \\  S(P_K) = \\  [\min  K , \max K]
  }} \sum_{\substack{\mathcal{I}_K  \in  \mathcal{Q}(K)     }}
  +    \ 1_{(P_K = \emptyset , \mathcal{I}_K = \{ K \}  ) } \bigg\}  \nonumber  \\
  &\qquad\qquad\qquad \times \prod_{K  \in \mathcal{K} \setminus \max \mathcal{K} }
   \bigg\{  \mathcal{C}^0_{P_K}  (\subst_{\substack{I \to -\EE_{|I|}\\ I
  \in \mathcal{I}_K }}(\pi_{|K|}(r,\eta)))  R(r,\eta) \bigg\} \nonumber \\
    & \qquad\qquad\qquad\qquad\qquad\qquad\qquad \times
    \mathcal{C}^0_{P_{\max \mathcal{K}} }
  (\subst_{\substack{I \to -\EE_{|I|}\\ I \in \mathcal{I}_{\max
  \mathcal{K}} }}(\pi_{|\max \mathcal{K}|}(r,\eta)))  \label{eq:E_n:233db} \\
  &\quad= \sum_{k=2}^n \sum_{\substack{j_1+\ldots + j_k = n\\ j_i \geq
  1}} \left[ \prod_{i=1}^{k-1}\bigl( \widehat{C}_{j_i}(r,\eta)
  R(r,\eta) \bigr)
  \right] \widehat{C}_{j_k}(r,\eta), \label{eq:E_n:232dff}
\end{align}
where in \eqref{eq:E_n:233db} we made use of the fact that the
contraction factors for disconnected parts. Moreover  the product over
$\mathcal{K}$ is taken with respect to the following ordering.  An
element $ \{I_1,\dotsc,I_l\}$ of $\mathcal{Q}(M)$ has an ordering
given by
\begin{equation} \label{eq:orderingintervals} I_i \leq I_j \quad :
  \Leftrightarrow \quad a \leq b , \forall a \in I_i , \forall b \in
  I_j ,
\end{equation}
which is total and well ordered.  Finally we note that Eq.
\eqref{eq:E_n:232dff} follows from multilinearity and the
definitions, where the renormalization terms $\mathcal{E}_{j_i}$ in $\widehat{C}_{j_i}$, see \eqref{defofchat},  come from the terms $1_{(P_K = \emptyset , \mathcal{I}_K = \{ K \}  ) }$.
\end{proof}

In order to estimate the Feynman graphs, we will decompose the
resolvent in  Proposition \ref{thm:algebraicenergies}.
We write
\begin{equation}  \label{eq:decofresolv}
R(r,\eta) = R^\perp(r,\eta) + R^\parallel(r,\eta) ,
\end{equation}
where we have defined
$$
R^\perp(r,\eta) = (1-P_{\rm at}) R(r,\eta) , \quad R^\parallel(r,\eta)
= P_{\rm at} R(r,\eta) .
$$
We define
\begin{equation} \label{eq:defofg}
G_n(r,\eta) := C_n(r,\eta) + \sum_{k=2}^n \sum_{\substack{j_1+\ldots +
    j_k = n\\ j_i \geq 1}} \left[ \prod_{i=1}^{k-1}\bigl(
  \widehat{C}_{j_i}(r,\eta) R^\perp(r,\eta) \bigr) \right]
\widehat{C}_{j_k}(r,\eta)
\end{equation}
and
$$
\widehat{G}_n(r,\eta) := G_n(r,\eta) - \EE_n .
$$

The next theorem is purely algebraic. It will later be used to
estimate the energy coefficients.

\begin{proposition}\label{thm:algebraicenergy}
  We have
  \begin{align}
    T_n(r,\eta) & = G_n(r,\eta)
                +  \sum_{k=2}^n \sum_{\substack{j_1+\ldots + j_k = n\\ j_i \geq
    1}} \left[ \prod_{i=1}^{k-1}\bigl( \widehat{G}_{j_i}(r,\eta) R^\parallel(r,\eta) \bigr)
    \right] \widehat{G}_{j_k}(r,\eta)   \label{eq:algebraicenergy}
  \end{align}
  and
\begin{equation} \label{eq:algebraicenergyhat}
\widehat{T}_n(r,\eta) = \widehat{G}_n(r,\eta) + \sum_{k=2}^n
\sum_{\substack{j_1+\ldots + j_k = n\\ j_i \geq 1}} \left[
  \prod_{i=1}^{k-1}\bigl( \widehat{G}_{j_1}(r,\eta)
  R^\parallel(r,\eta) \bigr) \right] \widehat{G}_{j_k}(r,\eta) .
\end{equation}
\end{proposition}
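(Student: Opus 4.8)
The plan is to reduce both identities to a single combinatorial resummation and to prove that resummation by expanding the full resolvent $R = R^\perp + R^\parallel$ and regrouping. First I would put everything into a uniform ``composition'' form. Using $\widehat{T}_n = T_n - \EE_n$ together with $C_n = \widehat{C}_n + \EE_n$, Proposition~\ref{thm:algebraicenergies} rearranges to
\begin{equation*}
\widehat{T}_n(r,\eta) = \sum_{k=1}^n \sum_{\substack{j_1+\ldots+j_k = n\\ j_i \geq 1}} \left[\prod_{i=1}^{k-1}\bigl(\widehat{C}_{j_i}(r,\eta)\, R(r,\eta)\bigr)\right]\widehat{C}_{j_k}(r,\eta),
\end{equation*}
where the $k=1$ summand is just $\widehat{C}_n$. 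Likewise, the definition \eqref{eq:defofg} together with $\widehat{G}_n = G_n - \EE_n$ gives the same structure for $\widehat{G}$ but with $R$ replaced by $R^\perp$, namely
\begin{equation*}
\widehat{G}_n(r,\eta) = \sum_{k=1}^n \sum_{\substack{j_1+\ldots+j_k = n\\ j_i \geq 1}} \left[\prod_{i=1}^{k-1}\bigl(\widehat{C}_{j_i}(r,\eta)\, R^\perp(r,\eta)\bigr)\right]\widehat{C}_{j_k}(r,\eta).
\end{equation*}

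Next I would insert $R = R^\perp + R^\parallel$ into the expression for $\widehat{T}_n$ and expand by bilinearity of the operator product. Each resulting term is indexed by a composition $(j_1,\ldots,j_k)$ of $n$ together with a labeling of each of the $k-1$ internal separators by $\perp$ or $\parallel$, producing a product $\widehat{C}_{j_1} R^{\sigma_1} \widehat{C}_{j_2}\cdots R^{\sigma_{k-1}}\widehat{C}_{j_k}$ with $\sigma_i \in \{\perp,\parallel\}$. The \emph{key step} is then the order-preserving regrouping: the $\parallel$-labeled separators cut the block sequence into $m$ maximal runs whose internal separators are all labeled $\perp$. This data is equivalent, via the standard two-level composition bijection, to a coarse composition $(l_1,\ldots,l_m)$ of $n$ recording the total sizes of the runs, together with, for each run of size $l_t$, a finer composition of $l_t$ recording its internal block sizes; the separator count is consistent, since $\sum_t (p_t-1) + (m-1) = k-1$ when the $t$-th run contains $p_t$ blocks.

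Summing over the internal composition of each run while keeping the coarse composition and the $R^\parallel$ separators fixed collapses each run to $\widehat{G}_{l_t}$ by the displayed formula for $\widehat{G}$, so that
\begin{equation*}
\widehat{T}_n(r,\eta) = \sum_{m=1}^n \sum_{\substack{l_1+\ldots+l_m=n\\ l_t \geq 1}} \left[\prod_{t=1}^{m-1}\bigl(\widehat{G}_{l_t}(r,\eta)\, R^\parallel(r,\eta)\bigr)\right]\widehat{G}_{l_m}(r,\eta),
\end{equation*}
which is exactly \eqref{eq:algebraicenergyhat}. Finally I would recover \eqref{eq:algebraicenergy} by adding $\EE_n$ to both sides and using $T_n = \widehat{T}_n + \EE_n$ and $G_n = \widehat{G}_n + \EE_n$, noting that the sum term is unchanged. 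I expect the only delicate point to be bookkeeping rather than analysis: since no operators are commuted and the whole argument rests on the order-preserving regrouping, the care is entirely in the edge cases—the $k=1$ and $m=1$ terms, the absence of leading and trailing resolvents, and the precise way the energy subtraction $\EE_n$ built into $\widehat{C}$, $\widehat{G}$, and $\widehat{T}$ is threaded through the unhatted $C_n$ and $G_n$.
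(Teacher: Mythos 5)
Your proof is correct and follows essentially the same route as the paper's: starting from Proposition~\ref{thm:algebraicenergies}, expanding $R = R^\perp + R^\parallel$ by multilinearity, and regrouping the resulting terms by the maximal runs between $R^\parallel$-separators so that each run resums to a $\widehat{G}$-factor. The only difference is cosmetic — you phrase everything in terms of the hatted quantities so that all sums start at $k=1$ and the edge cases become uniform, whereas the paper keeps $C_n$ separate and treats the $s=1$ contribution explicitly; the underlying two-level composition bijection is identical.
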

\enlargethispage{2cm}
\begin{proof} In view of the previous proposition it remains to
  decompose the resolvent between disconnected parts into orthogonal
  and parallel part.  To this end we multiply out the expressions and
  collect the terms according to the number, $s-1$, of resolvents with
  $R^\parallel$.
  Thus by straight forward algebraic calculation we obtain
  \begin{align*}
  &\sum_{k=2}^n \sum_{\substack{j_1+\ldots + j_k = n\\ j_i \geq
    1}} \left[ \prod_{i=1}^{k-1}\bigl( \widehat{C}_{j_i}(r,\eta) R(r,\eta) \bigr)
    \right] \widehat{C}_{j_k}(r,\eta)  \nonumber \\
 &\quad=
   \sum_{k=2}^n \sum_{\sigma_1,\dotsc,\sigma_{k-1} \in \{ \perp, \parallel \}}
   \sum_{\substack{j_1+\ldots + j_k = n\\ j_i \geq
    1}} \left[ \prod_{i=1}^{k-1}\bigl( \widehat{C}_{j_i}(r,\eta) R^{\sigma_i}(r,\eta) \bigr)
    \right] \widehat{C}_{j_k}(r,\eta) \nonumber  \\
 &\quad=  \sum_{s=1}^n   \sum_{\substack{n_1+\ldots + n_s = n \\ n_i \geq   1}}
    \sum_{\substack{ k_1,\dotsc, \, k_s \in N_n  \\ k_1 + \cdots + k_s \geq 2 } }  \sum_{\substack{j_{1,1} +\ldots + j_{1,k_1}  = n_1 \\ j_{1,i}  \geq  1}} \cdots   \sum_{\substack{j_{s,1} +\ldots + j_{s,k_s}  = n_s \\ j_{s,i}  \geq  1}} \nonumber
    \\
 &\qquad\qquad\qquad\quad  \left[ \prod_{i_1=1}^{k_1-1}\bigl( \widehat{C}_{j_{1,i_1}}(r,\eta) R^{\perp}(r,\eta) \bigr)
    \right] \widehat{C}_{j_{1,k_1}}(r,\eta) R^{\parallel}(r,\eta)  \cdots
    \nonumber \\
    & \qquad\qquad\qquad\qquad\qquad\qquad\quad
    \left[ \prod_{i_s=1}^{k_s-1}\bigl( \widehat{C}_{j_{s,i_s}}(r,\eta) R^{\perp}(r,\eta) \bigr)
   \right] \widehat{C}_{j_{s,k_s}}(r,\eta) \nonumber \\
 &\quad=
   \sum_{k=2}^n \sum_{\substack{j_1+\ldots + j_k = n\\ j_i \geq
    1}} \left[ \prod_{i=1}^{k-1}\bigl( \widehat{C}_{j_i}(r,\eta) R^\perp(r,\eta) \bigr)
    \right] \widehat{C}_{j_k}(r,\eta) \nonumber \\
 & \qquad\qquad  +  \sum_{s=2}^n \sum_{\substack{n_1+\ldots + n_s = n\\ n_i \geq
    1}} \left[ \prod_{i=1}^{s-1}\bigl( \widehat{G}_{n_i}(r,\eta) R^\parallel(r,\eta) \bigr)
    \right] \widehat{G}_{n_{s}}(r,\eta) , \nonumber
  \end{align*}
  where we use  the convention that an empty product is defined as   a multiplicative identity.
  In particular, in the fourth and fifth line the  empty product $\prod_{i=1}^{k_\nu-1} \cdots$ with  $k_\nu = 1$ is interpreted as a one.
  Moreover, on the last right hand side the first term originates from
  $s=1$ and the second term from summing over all $s \geq
  2$. Collecting equalities yields the claim.
\end{proof}

\subsection{Estimating the  Renormalized Graphs}
\label{subsec:estimateandProofThm1}

In this subsection we will prove  Theorem~\ref{thm:mainenergy}.  First we recall Lemma~\ref{lem:energyform2}
which relates  $T_n$ to the  expansion coefficients of the ground state energy.
To prove Theorem~\ref{thm:mainenergy} we  use   the formula for $T_n$ given in \eqref{eq:algebraicenergy}.
In the following lemmas below  we give   a few abstract inequalities  which will be needed
to estimate the expression in \eqref{eq:algebraicenergy}.
To show that  \eqref{eq:algebraicenergy} is indeed finite for $r \to 0$ and $\eta \to 0$ we use an
induction argument, which is sketched in the following remark.

\begin{remark}\label{rem:induction} To show Theorem~\ref{thm:mainenergy} we shall make the
  induction hypothesis  that $C_m$ and $G_m$, defined in \eqref{eq:defofc} and \eqref{eq:defofg},
  are sufficiently regular and $P_{\rm at} \widehat{G}_m(0,0) P_{\rm at} = 0$ for all $m \leq n$.
Then it will follow from   \eqref{eq:algebraicenergy} and  \eqref{eq:algebraicenergyhat}  that also $T_m$ is sufficiently regular and $P_{\rm at} \widehat{T}_m(0,0) P_{\rm at} = 0$ for all $m \leq n$.  The singularity of the resolvent at $r=0$  is cancelled,  since by induction hypothesis $G_m$ is sufficiently regular
and $P_{\rm at} \widehat{G}_m(0,0) P_{\rm at} = 0$ for all $m \leq n$.
Using the estimates of the lemmas below we will then see that $C_{n+1}$ is sufficiently regular, and thus also $G_{n+1}$ in view of  \eqref{eq:defofg}.
Now  from   \eqref{eq:algebraicenergy} and \eqref{eq:algebraicenergyhat} it will follow  that $P_{\rm at} \widehat{G}_{n+1}(0,0) P_{\rm at} = 0$,
where the singularity  is cancelled as before.  Hence the induction hypothesis for $n+1$  holds.
\end{remark}

\begin{lemma} \label{lem:remlast0}\label{lem:remlast} Let $X$ be a finite subset of $\Z$ containing at least four elements.
Let $P$ be a linked pair partition of    $X$,   and let $p \in P$. Then there exists a pair $q \in P$
  different from $p$ such that $P \setminus \{ q \}$ is again a linked
   pair partition of $X \setminus q$.
\end{lemma}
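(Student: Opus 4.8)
The plan is to translate the statement into a purely graph-theoretic fact about the \emph{linking graph} of $P$ and then invoke a standard property of connected graphs. First I would associate to $P$ the graph $L(P)$ whose vertex set is $P$ itself and in which two pairs are joined by an edge precisely when they are linked. The crucial observation is that whether two pairs $p_1,p_2$ are linked depends only on the relative order of their four endpoints in $\Z$, and not on the ambient set $X$; consequently, deleting a pair $q$ from $P$ corresponds exactly to deleting the vertex $q$ from $L(P)$, leaving all remaining edges intact, so that $L(P \setminus \{q\})$ is precisely $L(P)$ with the vertex $q$ removed. In these terms, the hypothesis that $P$ is linked says that $L(P)$ is connected, and the conclusion that $P \setminus \{q\}$ is linked says that $L(P) - q$ is connected, i.e.\ that $q$ is a non-cut vertex of $L(P)$. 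Since $|X| \geq 4$ and $P$ is a pair partition, $L(P)$ has at least two vertices.

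The key combinatorial input is then the following elementary fact: every connected graph with at least two vertices has at least two non-cut vertices. I would prove this by choosing a spanning tree $T$ of $L(P)$; a tree with at least two vertices has at least two leaves, and for any leaf $\ell$ the subgraph $T - \ell$ is again a spanning tree of $L(P) - \ell$, hence connected, so $L(P) - \ell$ is connected and $\ell$ is a non-cut vertex of $L(P)$. Thus the two leaves of $T$ furnish two distinct non-cut vertices of $L(P)$.

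To conclude, because $L(P)$ has at least two non-cut vertices, at least one of them — call it $q$ — is different from the prescribed pair $p$. For this choice the graph $L(P) - q = L(P \setminus \{q\})$ is connected, and since $P \setminus \{q\}$ is manifestly a pair partition of $X \setminus q$, this is exactly the assertion that $P \setminus \{q\}$ is a linked pair partition of $X \setminus q$.

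The step I expect to require the most care is the reduction itself: one must verify that linkedness of two surviving pairs is genuinely unaffected by the removal of $q$, so that $L(P \setminus \{q\})$ really is $L(P)$ with a single vertex deleted, and one must invoke the \emph{two}-non-cut-vertex statement rather than merely the existence of one, since a single non-cut vertex could happen to coincide with the forbidden pair $p$. The underlying graph lemma is routine, but its sharper ``at least two'' form is what makes the argument work.
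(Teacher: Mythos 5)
Your proposal is correct. It differs from the paper's argument only in which standard non-cut-vertex fact it invokes on the (implicit) linking graph: the paper defines the linked-path distance $d_P$ on $P$, chooses $q$ to be a pair at maximal distance from the prescribed $p$, and shows by contradiction that removing such a farthest pair cannot disconnect the remaining pairs (if it did, every linked path from $p$ to a pair $r$ in another component would pass through $q$, forcing $d_P(p,q) < d_P(p,r)$, contradicting maximality). You instead reduce to the statement that every connected graph on at least two vertices has at least two non-cut vertices, proved via the leaves of a spanning tree. Both are textbook proofs of essentially the same fact. The paper's eccentricity-based choice has the small advantage that $q \neq p$ is automatic (the maximal distance is at least $1$ since $|P| \geq 2$), whereas your route genuinely needs the sharper ``at least two'' form to dodge the possibility that the unique non-cut vertex found is $p$ itself --- a point you correctly identify as the place requiring care. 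Your preliminary reduction, that linkedness of two surviving pairs depends only on the relative order of their own four endpoints and hence is unaffected by deleting $q$, is also correct and is used tacitly in the paper's proof as well.
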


\begin{example}\label{ex:removal} Consider the set $X = \{ x_1, x_2 , \dotsc , x_{10} \} \subset \Z$, with
$$
x_1 < x_2 < \cdots < x_{10} ,
$$
which is indicated by the circles in the diagram below (where the index increases from left to right).
We consider the pair partition $P$ which is indicated by the lines below.
 \begin{equation*}
\begin{array}{ll}
 P : \quad &    \contraction{1-1-1-1-1-1-1-1-1-1}{(1,3)_1,(2,5)_2,(4,8)_1,(6,9)_2,(7,10)_3}
   \end{array}
\end{equation*}
If we remove one of the pairs $\{x_2, x_5 \}$ or  $\{x_4, x_8 \}$,
then the set of the remaining  pairs is not linked anymore.
If we remove one of the three pairs $\{x_1, x_3 \}$, $\{x_6, x_9 \}$, or  $\{x_7,x_{10}\}$,  then  the
set of the remaining pairs is linked. Since we can remove any of the aforementioned three pairs, it
  follows that for any $p \in P$ there  exists a $q \in P$
  different from $p$ such that $P \setminus \{ q \}$ is again a linked
   pair partition of $X \setminus q$.
\end{example}

\begin{proof}
  For any two $r, s \in P$ define the distance
$$
d_P(r,s) :=  \inf \{ l \in \N : \text{there exists a linked path in}\, P\,
\text{of length}\, l\, \text{from}\,  r \,\text{to}\, s \}
$$
Clearly,  $d_P$ is a metric on $P$.  Define
$$
m_P(p) := \max\{d(p,r) : r \in P \} .
$$
Since $X$ is finite we can pick  a $q \in P$ such that
$$
d_P(p,q) = m_P(p).
$$
Then $P \setminus \{ q \}$ is again linked. Otherwise there would
exist at least two linked components. One component must contain $p$ and we could pick an  $r \in P \setminus \{ q \}$ in a different component. But then   every linked path in
$P$ from $p$ to $r$ would have to  pass through $q$. This would imply
$d_P(p,q) < d_P(p,r) \leq m_P(p)$. This is a contradiction to the choice
of $q$.
\end{proof}

\begin{lemma}
  \label{lem:lemestimate}
  Let $X \subset \Z$ be  a finite set. Let $P$ be a  linked pair partition of $X$.
  \begin{enumerate}[label=\rm (\alph*)]
\item \label{lem:lemestimate:a}
   Suppose we are given,
 $$
 \phi = ( (V)_{x \in X} , (F_{e})_{e \in {E}_X} ) ,
 $$
 a  $(\mathcal{V}_{\rm sb}, \mathcal{R}_{\rm sb})$-valued  graph function  on $X$.
 Suppose there exists a   constant $C_F$  such
  that for all  $e \in {E}_X$ we have
   $$\|F_e(r) \| \leq C_F (|r|^{-1} + 1 ) , \forall r \geq    0  . $$
  Then
$$
\sup_{r \geq 0} \| \CC_P(\phi)(r) \| \leq  C_F^{|X|} C_1^{|X|-2} C_0^2 ,
$$
where
$$
C_p := \left( \int dk (|k|^{-1} +1 )^{p+1} \| G(k) \|^2 \right)^{1/2} .
$$
\item \label{lem:lemestimate:b}
  Let $S \subset \R^d$.
  Suppose for each $s \in S$ we  are given
 $$
 \phi_s = ( (V)_{x \in X} , (F_{e,s})_{e \in {E}_X} ) ,
 $$
  a  $(\mathcal{V}_{\rm sb}, \mathcal{R}_{\rm sb})$-valued  graph function  on $X$.
  Suppose for each $r > 0$ and for each $e \in  {E}_X$, the function $s \mapsto F_{e,s}(r)$ is continuous,
  and suppose there exists a   constant $C_F$  such
  that
   $$\|F_{e,s}(r) \| \leq C_F (|r|^{-1} + 1 ) , \quad \forall  r  >    0 , s \in S , e \in {E}_X  . $$
  Then  the function $s \mapsto  \CC_P(\phi_s)(r)$ is continuous for each $r \geq 0$.
\end{enumerate}
\end{lemma}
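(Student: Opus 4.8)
The plan is to bound $\CC_P(\phi)(r)$ directly from its integral representation in Definition~\ref{def:defofcont} and to reduce everything to a product of one‑variable integrals of the type defining $C_0$ and $C_1$. First I would pass to norms: since $\CC_P(\phi)(r)$ is an ordered operator product integrated against $\delta_P(k)$, submultiplicativity of the operator norm together with the triangle inequality for the (Bochner) integral gives
\begin{equation*}
  \| \CC_P(\phi)(r) \| \leq \int \prod_{j \in X} \dd k_j \, \delta_P(k)
  \prod_{j \in X} \| G(k_j) \| \prod_{e \in E_X} \| F_e(|K_e|_P + r) \| ,
\end{equation*}
where I used that the external propagators are the identity (the convention for graph functions without external lines) and that $\|G^\sharp_{j,P}(k_j)\| = \|G(k_j)\|$. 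The factor $\delta_P(k)$ forces $k_{\min p} = k_{\max p} =: k_p$ for every pair $p \in P$, so the integral collapses to the $|X|/2$ genuine variables $(k_p)_{p \in P}$, and the two norm factors belonging to a pair combine to $\|G(k_p)\|^2$. Inserting the hypothesis $\|F_e(\rho)\| \le C_F(\rho^{-1}+1)$ (and, after replacing $C_F$ by $\max(C_F,1)$, assuming $C_F \ge 1$, so that the $|X|-1$ internal edges may be charged $C_F^{|X|}$) leaves the scalar integral
\begin{equation*}
  C_F^{|X|} \int \prod_{p \in P} \dd k_p \, \| G(k_p) \|^2
  \prod_{e \in E_X} \Bigl( \tfrac{1}{|K_e|_P + r} + 1 \Bigr) .
\end{equation*}

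Next I would exploit that $P$ is linked. For every edge $e \in E_X$ the pairing, being linked and hence connected, must contain a pair $p$ that straddles $e$, i.e.\ $\min p \le \min e < \max e \le \max p$; for such a pair $|k_p|$ is one of the summands of $|K_e|_P$, whence $\tfrac{1}{|K_e|_P+r}+1 \le |k_p|^{-1}+1$. The combinatorial heart of the proof is then the claim that one can choose, for each edge $e$, a straddling pair $\beta(e)$ so that no pair is selected more than twice. A counting argument ($|E_X| = |X|-1$ edges distributed among $|P| = |X|/2$ pairs with load at most $2$) forces exactly one pair to receive a single edge and all others to receive exactly two. I would prove the existence of such an assignment $\beta$ by induction on $|X|$ using Lemma~\ref{lem:remlast}: remove a pair $q$ so that $P \setminus \{q\}$ is again a linked pairing of $X \setminus q$, charge the (at most two) edges lying in the span of $q$ to $q$, and apply the induction hypothesis to $P \setminus \{q\}$. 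Once $\beta$ is fixed, grouping the edge weights according to $\beta$ and bounding each by $|k_{\beta(e)}|^{-1}+1$ makes the integrand factor over the pairs, so the integral becomes
\begin{equation*}
  \prod_{p \in P} \int \dd k_p \, \|G(k_p)\|^2 \bigl( |k_p|^{-1}+1 \bigr)^{m_p},
\end{equation*}
with $m_p \in \{1,2\}$ the load of $p$. Each factor equals $C_{m_p-1}^2$, namely $C_0^2$ for the unique pair of load one and $C_1^2$ for the remaining $|X|/2 - 1$ pairs, yielding exactly $C_1^{|X|-2}C_0^2$ and completing part~\ref{lem:lemestimate:a}.

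The main obstacle I anticipate is precisely the load‑balancing claim: without the linked hypothesis a pair could be forced to absorb three or more edge weights, which would require the finiteness of a quantity like $\bigl(\int (|k|^{-1}+1)^3\|G(k)\|^2\,\dd k\bigr)^{1/2}$ and hence a strictly stronger infrared condition than \eqref{eq:assonG2}. Making the induction with Lemma~\ref{lem:remlast} rigorous requires care in tracking how the span of the deleted pair $q$ meets the spans of the remaining pairs and how $E_X$ degenerates to $E_{X \setminus q}$ under the merging of edges incident to the deleted vertices; this bookkeeping of which edges are ``freed'' by removing $q$ is where the argument is most delicate.

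For part~\ref{lem:lemestimate:b} I would argue by dominated convergence. After collapsing $\delta_P(k)$ as above, the integrand $\prod_{p} \|G(k_p)\| \cdot \prod_{e} F_{e,s}(|K_e|_P + r)$ is, for fixed $r \ge 0$ and for almost every $(k_p)_{p\in P}$ (namely whenever $|K_e|_P + r > 0$ for all $e$, which fails only on a null set when $r=0$), continuous in $s$, because each $s \mapsto F_{e,s}(\rho)$ is continuous for $\rho > 0$ and operator multiplication is jointly continuous. The estimate from part~\ref{lem:lemestimate:a}, which holds uniformly in $s \in S$ under the hypothesis $\|F_{e,s}(\rho)\| \le C_F(\rho^{-1}+1)$, supplies an $s$‑independent integrable majorant. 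Dominated convergence for Bochner integrals then yields continuity of $s \mapsto \CC_P(\phi_s)(r)$, as claimed.
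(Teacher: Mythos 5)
Your proof is correct and rests on the same two pillars as the paper's: the removability of a pair from a linked pairing (Lemma~\ref{lem:remlast}) and the domination of each edge singularity $(|K_e|_P+r)^{-1}$ by $|k_p|^{-1}$ for a pair $p$ straddling $e$, which is exactly the content of the bounds \eqref{eq:estonvar1}--\eqref{eq:estonvar4}. The only difference is presentational: you first extract a global edge-to-pair assignment with loads at most two (one pair of load one, the rest of load two) and then factor the integral once by Tonelli, whereas the paper runs the same induction directly on the integral, peeling off and integrating one pair at a time; your handling of part~\ref{lem:lemestimate:b} by dominated convergence likewise matches the paper's.
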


\begin{proof}
\ref{lem:lemestimate:a}
  Since $P$ is a pair partition of $X$ the cardinality of $X$ must be even.
  Thus we have $|X|=2n$ for some $n \in \N$.  Using the  notation
  introduced  in Definition \ref{def:defofcont} we estimate
  \begin{align*}
       \| \CC_P( \phi )(r) \| &\leq  \int \prod_{x \in X } \{ dk_x \}
      \delta_{P}(k) \nonumber \\
      & \quad\;  \prod_{j \in X \setminus \{ \max X \}} \big\{ \| G^\sharp_{j,P}(k_j) \| \|
      F_{\{j,r_X(j)\}}(r + |K_{\{j,r_X(j)\}}|_P) \|
      \big\} \nonumber \\
      &\qquad\qquad\qquad\qquad \times \| G^\sharp_{ \max X, P}(k_{\max X})  \| \nonumber  \\
      &\leq  {\rm Est}_n,
  \end{align*}
  where we define
  \begin{align}     \label{eq:linkedgraphest2d}
   {\rm Est}_n := \sup_{\substack{P \text{ linked pair} \\
       \text{ partition of } X } }
     & \int  \prod_{x \in  X}  \{ dk_x \}  \delta_{P}(k)  \nonumber  \\
      & \; \prod_{j \in X \setminus \{ \max X \} } \left\{ \| G^\sharp_{j,P}(k_j) \|  C_F
        (|K_{\{j,r_X(j)\}}|_P^{-1} + 1 ) \right\} \nonumber \\
       &\qquad\qquad\qquad\qquad \times  \| G^\sharp_{\max X, P}(k_{\max X})  \|.
  \end{align}
   We will show by induction
  in $n$ that
$$
{\rm Est}_n \leq C_F^{2n-1}  C_1^{2n-2}C_0^2 .
$$
First we consider the case $n=1$.
\begin{align*}
  {\rm Est}_1 &=
     \int  dk_{\min X}  dk_{\max X} \Big( \delta( k_{\min X} - k_{\max X} )
     \\ &\qquad\qquad\quad \| {G}^{*}(k_{\min X})  \|
          \ C_F (|k_{\max X}|^{-1} + 1)   \| G(k_{\max X})  \| \Big)\\
     &\leq  C_F C_0^2.
\end{align*}
Next we show the induction step $n-1 \to n$. The goal is to integrate
out a pair of paired variables, such that the set of pairings of the remaining
variables  remains  linked.  The details, which are
illustrated in the  example below, are as follows.
Let $P$ be a linked pair partition of $X$, such that the supremum in
 \eqref{eq:linkedgraphest2d} is attained at $P$.
By Lemma \ref{lem:remlast0}, we can pick a pair
$q \in P$ such that $P_q := P \setminus \{ q \}$ is a
linked pair partition of $X_q := X \setminus q$.
We want to remove  the propagators over the edges, which are adjacent to $q$ and lie in the span of $q$,
i.e.  the edges
$$
e_{\rm l}  := {\{\min q, r_X({\min q}) \}, \quad e_{\rm r}  := \{l_X({\max q}),\max q\}} ,
$$
where we introduced the notation  $l_X(x) := \max( X \setminus [x,\infty))$ (denoting the
nearest neighbor on the left  of $x$ in $X$).
To this end we will use the  following lower bounds, which are
an immediate consequence of the definition,
\begin{equation} \label{eq:estonvar1}
|K_{e_{\rm l}   } |_P
  \geq | k_{\max q} | , \quad |K_{e_{\rm r}   } |_P
  \geq | k_{\max q} | ,
\end{equation}
\begin{equation} \label{eq:estonvar2}
 |K_{e}|_{P} \geq  |K_{e}|_{P_q}   , \quad \forall e \in E_X  \cap E_{X_q} ,
\end{equation}
and
\begin{align}
|K_{\{ l_X({\min q}), \min q  \}} |_P &
  \geq |K_{\{  l_X({\min q}),r_X({\min q}) \}} |_{P_q}  , \ \ \text{ provided } \min q \neq \min X   ,
  \label{eq:estonvar3} \\
  |K_{\{q, r_X({\max q})  \} } |_P &
  \geq |K_{\{ l_X({\max q}), r_X({\max q}) \}} |_{P_q}  ,  \ \ \text{ provided } \max q \neq \max X
  \label{eq:estonvar4} .
\end{align}
We use  the   estimates in  \eqref{eq:estonvar1}--\eqref{eq:estonvar4} to obtain
an upper bound for \eqref{eq:linkedgraphest2d}, integrate
 out the variables $k_{\max q}$ and $k_{\min q}$, which are paired by a delta function,
and  use  the inequality
\begin{align*}
  \int d & k_{\max q} dk_{\min q} \Big( \delta( k_{\max q} - k_{\min q} )
	\\
   &\quad  ( | k_{\max q}  |^{-1} + 1 ) (1 +  |  k_{\max q} |^{-1} )
  \| G(k_{\min q})\| \|G(k_{\max q})\| \Big)
  \leq C_1^2 .
\end{align*}
This yields the following estimate
\begin{align*}
  {\rm Est}_n \leq {}&
  C_F^2  C_1^2 \int \prod_{x \in  X_q}  \{ dk_x   \} \delta_{P_q}(k)  \\
              &\qquad\quad \prod_{j \in X_q  \setminus \{ \max X_q \} }
              \left\{ \|  G^\sharp_{j,P_q}(k_j) \|
                 \ C_F (|K_{\{j, r_{X_q}(j) \}}|_{P_q}^{-1} + 1 )   \right\}
                 \\&\qquad\qquad\qquad\qquad\qquad\; \times
                 \| G^\sharp_{\max X_q, P_q}(k_{\max X_q})  \| \\
               \leq {}& C_1^2 C_F^2  {\rm Est}_{n-1} .
\end{align*}
This shows  the induction step.

\begin{example} Consider the situation as in  Example \ref{ex:removal} above.
If we choose $q = \{ x_6, x_9 \}$, then $P \setminus \{q \}$ is a linked pair partition of $X \setminus \{ q \}$.
In that case we want to remove the propagators over the edges
 $$
 e_{\rm l}   =  \{x_6,x_7\} , \quad e_{\rm r}  =  \{x_8,x_9\}
 $$
 and we integrate over the pair of
variables $k_{{\rm min} q} = k_{x_6}$ and $k_{{\rm max} q} = k_{x_9}$.
 \begin{equation*}
\begin{array}{ll}
 P : \quad &    \contraction{1-1-1-1-1-1-<$e_l$>1-1-<$e_r$>1-1}{(1,3)_1,(2,5)_2,(4,8)_1,(6,9)_2,(7,10)_3} \\
 P \setminus \{ q  \} : \quad &    \contraction{1-1-1-1-1-1-1-1-1-1}{(1,3)_1,(2,5)_2,(4,8)_1,(7,10)_3}
   \end{array}
\end{equation*}
\end{example}

\ref{lem:lemestimate:b}
 This follows from dominated convergence and part \ref{lem:lemestimate:a}. Explicitly
we estimate  for $s,t \in S$ and all $r \geq 0$

 \begin{equation*}
    \begin{split}
      \| \CC_P(& \phi_s )(r) -  \CC_P( \phi_t )(r) \| \\
    & \;  \leq {} \!\!\! \sum_{l \in X \setminus \{ \max X \}  } \int \prod_{x \in X } \{ dk_x \}
      \delta_{P}(k)
      \\
      & \qquad \prod_{j \in X \setminus \{ \max X \}}  \Big\{ \| G^\sharp_{j,P}(k_j) \| \|
      1_{j \leq l}  F_{\{j,r_X(j)\},s}(r + |K_{\{j,r_X(j)\}}|_P)  \\
      &\qquad\qquad\qquad\qquad\qquad\qquad\quad\; -    1_{j \geq l}   F_{\{j,r_X(j)\},t}(r + |K_{\{j,r_X(j)\}}|_P) \| \Big\} \\
      & \qquad\qquad\quad \times \| G^\sharp_{ \max X, P}(k_{\max X})  \| .
    \end{split}
  \end{equation*}
The factor  for $l=j$ converges to zero. The integrand can be estimated by using the triangle inequality for the factor $l=j$.
This results in an integrand which is an upper bound and which can be integrated by the proof of \ref{lem:lemestimate:b}.
\end{proof}

We need an analogous estimate on the derivative.

\begin{lemma}\label{lem:lemestimate2}
  Let $X \subset \Z$ be  a finite set and   let $S \subset \R^d$ be open.
  Suppose for each $s \in S$ we  are given
 $$
 \phi_s = ( (V)_{x \in X} , F_{e,s})_{e \in {E}_X} ) ,
 $$
  a  $(\mathcal{V}_{\rm sb}, \mathcal{R}_{\rm sb})$-valued  graph function  on $X$.
  Suppose there exist  constants $C_F$  such
  that for all  $e \in {E}_X$ we have
   $$\|F_{e,s}(r) \| \leq C_F (|r|^{-1} + 1 ) , \quad  r  >    0  . $$
 Suppose   that for each $r > 0$ and for each $e  \in {E}_X$, the function $s \mapsto F_{e,s}(r)$ has
 continuous partial
   derivatives with
  $$
  \|\partial_{s_i} F_{e,s}(r) \| \leq C_F (|r|^{-1} + 1 )^2,  \quad   r > 0 , \quad i = 1, \ldots, d.
  $$
  Then for any linked pair partition $P$ of $X$ the function $s \mapsto  \CC_P(\phi_s)(r)$ has  for each $r \geq  0$
  continuous partial derivatives and
$$
\| \partial_{s_i}  \CC_P(\phi_s(r)) \| \leq (|X|+1)  C_F^{|X|-1}
C_1^{|X|} .
$$
\end{lemma}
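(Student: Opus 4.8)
The plan is to differentiate under the integral sign, reduce the bound to the one already established in Lemma~\ref{lem:lemestimate}\ref{lem:lemestimate:a}, and read off the constant from the same inductive pair–removal argument. Recall from Definition~\ref{def:defofcont} that $\CC_P(\phi_s)(r)$ is an integral over the variables $k_x$, $x\in X$, of a product in which each internal edge $e=\{j,r_X(j)\}\in E_X$ contributes one propagator factor $F_{e,s}(|K_e|_P+r)$, while the operators $G^\sharp_{j,P}(k_j)$ and the argument $|K_e|_P+r$ carry no $s$–dependence. Hence, once the exchange of $\partial_{s_i}$ and $\int$ is justified, the product rule yields
\begin{equation*}
\partial_{s_i}\CC_P(\phi_s)(r)=\sum_{e_0\in E_X}\CC_P\bigl(\phi_s^{(e_0)}\bigr)(r),
\end{equation*}
where $\phi_s^{(e_0)}$ is obtained from $\phi_s$ by replacing $F_{e_0,s}$ with $\partial_{s_i}F_{e_0,s}$. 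I would legitimize the differentiation by dominated convergence: the difference quotients converge pointwise by the assumed continuity of the partial derivatives, and the hypothesis $\|\partial_{s_i}F_{e,s}(r)\|\le C_F(|r|^{-1}+1)^2$ together with the integrability produced in the proof of Lemma~\ref{lem:lemestimate}\ref{lem:lemestimate:a} furnishes a locally uniform integrable majorant.

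It then remains to bound each summand $\CC_P(\phi_s^{(e_0)})(r)$, which is exactly a contraction of the type in Lemma~\ref{lem:lemestimate}\ref{lem:lemestimate:a}, except that the single edge $e_0$ now satisfies the stronger bound $C_F(|K_{e_0}|_P^{-1}+1)^2$. I would run the same inductive removal of pairs, integrating out one pair of (delta–identified) variables at a time, and the decisive point is to arrange that the extra factor $(|K_{e_0}|_P^{-1}+1)$ is never integrated against a second singular propagator. Since a linked pairing is connected, every internal edge $e_0$ is straddled by some pair $p^\ast$, i.e. $e_0\subset[\min p^\ast,\max p^\ast]$, so that $|K_{e_0}|_P\ge|k_{\max p^\ast}|$; I would take $p^\ast$ as the distinguished pair and, using Lemma~\ref{lem:remlast0}, remove all \emph{other} pairs first while preserving linkedness, routing the extra power onto $k_{\max p^\ast}$ by the lower bounds \eqref{eq:estonvar1}. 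In this way each intermediate step contributes $\int(|k|^{-1}+1)^2\|G(k)\|^2\,dk=C_1^2$ as before, and the final (base) step, where $k_{\max p^\ast}$ meets a single edge now carrying the doubled singularity, contributes $\int(|k|^{-1}+1)^2\|G(k)\|^2\,dk=C_1^2$ in place of the $C_0^2$ of part~\ref{lem:lemestimate:a}. Tracking the constants exactly as there gives the per–summand bound $C_F^{|X|-1}C_1^{|X|}$, and summing over the $|X|-1\le|X|+1$ edges yields the claim.

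Continuity of $s\mapsto\partial_{s_i}\CC_P(\phi_s)(r)$ I would obtain by the dominated–convergence argument of Lemma~\ref{lem:lemestimate}\ref{lem:lemestimate:b}, now applied to the graph functions $\phi_s^{(e_0)}$: the maps $s\mapsto\partial_{s_i}F_{e,s}(r)$ are continuous by hypothesis, and the integrable majorant constructed above controls the limit for each fixed $r\ge0$.

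The hard part is the constant bookkeeping in the inductive step, specifically guaranteeing that the additional $(|r|^{-1}+1)$ produced by the differentiated edge is absorbed by a single integration variable and never compounds with a neighbouring propagator singularity, which would force the non–integrable factor $\int(|k|^{-1}+1)^3\|G\|^2=C_2^2$ (finiteness of $C_2$ not being available under \eqref{eq:assonG2}). This is precisely where the freedom to protect an arbitrary straddling pair via Lemma~\ref{lem:remlast0} is indispensable; the remaining estimates are a direct transcription of the proof of Lemma~\ref{lem:lemestimate}.
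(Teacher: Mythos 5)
Your proposal is correct and follows essentially the same route as the paper: Leibniz's rule places the squared singularity on a single edge, and the pair-removal induction of Lemma~\ref{lem:lemestimate} is rerun, using Lemma~\ref{lem:remlast0} to avoid deleting a pair that straddles that edge and splitting the square against $(|k_{\max p}|^{-1}+1)$ for a straddling pair $p$, so that every integration costs $C_1^2$. The only (immaterial) difference is that you perform the split once at the outset and protect one fixed straddling pair $p^{\ast}$ until the base case, whereas the paper keeps the exponent $2$ inside the integral and splits only at the step where the doubled edge would otherwise be integrated out.
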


\begin{proof}
  As in the proof of the previous lemma  we can assume that $|X|=2n$ for some
  $n \in \N$.  Using the well known arguments  ensuring the interchange of
  differentiation and integration, we essentially need to show that
  the norm of the differentiated integrand can be estimated from above by an integrable function.
  (Strictly speaking, only the estimates given in the following proof
  will justify  the existence
  of the partial derivative $ \partial_{s_i} \CC_P( \phi_s(r) )$. Nevertheless for notational
  compactness we shall
  already write $ \partial_{s_i} \CC_P( \phi_s(r) )$  in \eqref{eq:estonfirstder}  and \eqref{eq:estonfirstder2}, below.)
  We calculate the derivative using Leibniz's rule.
    First we show the estimate in case $n=1$, in which case  we find
  \begin{equation}
    \label{eq:estonfirstder}
    \begin{split}
      \| \partial_{s_i} \CC_P( \phi_s(r) ) \|
      \leq &{}  \int dk_{\min X}
      dk_{\max X} \delta( k_{\min X} - k_{\max X} ) \| {G}^{*}(k_{\min
        X}) \|
      \\
      &\times C_F ((|k_{\max X}|+r)^{-1} + 1 )^2   \| G(k_{\max X})  \|\\
      \leq &{}  C_F C_1^2 .
    \end{split}
  \end{equation}
  Let us now consider the case $n \geq 2$.  Calculating the derivative
  using Leibniz's  rule we find that for all $r > 0$ and $n \geq 1$ we
  have
  \begin{align}    \label{eq:estonfirstder2}
    \|  \partial_{s_i} \CC_P(  \phi_s(r)  ) \|
    \leq {}&  \sum_{l \in X \setminus \{ \max X \} } \int \prod_{x \in  X}  \{ dk_x   \}
             \delta_{P}(k)  \\
           & \quad\;
             \prod_{j \in X  \setminus \{ \max X \} }  \left\{ \|     G^\sharp_{j,P}(k_j) \|
              C_F (|K_{\{j,r_X(j)\}}|_{P}^{-1} + 1 )^{1 +
             \delta_{jl}}    \right\} \nonumber \\
            & \qquad\qquad\qquad\qquad \times \| G^\sharp_{\max X, P}(k_{\max X})  \| \nonumber
    \\
    \leq {}&  ( |X|-1){\rm DEst}_{n}   , \nonumber
  \end{align}
  where we defined
  \begin{align}
    \label{eq:linkedgraphest2dd}
    {\rm DEst}_{n}
    := & \sup_{l \in  X \setminus \{ \max X \}  } \sup_{\substack{P   \text{ linked}  \\  \text{pair partition of } X   }  }
    \int \prod_{x \in  X}  \{ dk_x   \}
    \delta_{P}(k) \\
    &\qquad
    \prod_{j \in X  \setminus \{ \max X \} }  \left\{ \|    G^\sharp_{j,P}(k_j) \|
      \ C_F (|K_{\{j,r_X(j) \}}|_{P}^{-1} + 1 )^{1 + \delta_{jl}}   \right\}
      \nonumber \\
      & \qquad\qquad\qquad\qquad\quad \times
      \| G^\sharp_{\max X,P}(k_{\max X})  \|  . \nonumber
  \end{align}
  We want to show by induction that for all $n \in \N$ we have
      $$
      {\rm DEst}_{n} \leq C_F^{2n-1} C_1^{2n} .
      $$
      By \eqref{eq:estonfirstder} we know that  the inequality for ${\rm DEst}_{1}$
      holds.
      Next we show that $n - 1 \to n$. Let us first sketch the idea.
        As in the proof of Lemma \ref{lem:lemestimate},
       we want to remove two propagators by integrating    out a
      pair of paired variables, such that the set of pairings of the remaining
      variables is again linked, in addition, we want the term which contains
      a two in the exponent to remain in the integral.
      The details are as follows, and illustrated in the examples below.
      Suppose the maximum  in  \eqref{eq:linkedgraphest2dd} is attained for
        $l \in  X \setminus \{ \max X \} $ and the linked pairing $P$.
      Consider the  edge $e := \{l,r_X(l)\}$ (for which we have a two in the exponent).
       We pick a  $p \in P$ such that $e$
       lies in the span of $p$ (this can always be achieved, since  $P$ is linked
       and therefore connected).  By Lemma \ref{lem:remlast} there exists an element
      $q \in P \setminus \{p \}$ such that $P_q := P \setminus \{ q \}$
      is a linked  pair partition of $X_q = X \setminus q $.
      If none of the edges  $$
e_{\rm l}  := {\{\min q, r_X({\min q}) \}, \quad e_{\rm r}  := \{l_X({\max q}),\max q\}} ,
$$
is equal to $e$, then an  estimate as in the proof of Lemma \ref{lem:lemestimate}
yields
  \begin{align}\label{eq:estondestn}
        {\rm DEst}_n     &\leq   C_1^2 C_F^2  {\rm DEst}_{n-1} ,
  \end{align}
      since the term involving $e$ is not integrated out.
      If $e = e_{\rm l} $ (the case $e = e_{\rm r} $ is analogous) then we use the estimate
      $$
      (|K_{e}|_{P}^{-1} + 1 )^2 \leq   (|K_{\{l_X({\min q}), r_X({\min q }) \}} |_{P_q}^{-1} + 1 )  (|k_{\max q}|^{-1} + 1 )  .
      $$
      The second term on the right hand side is again estimated as in the proof of Lemma \ref{lem:lemestimate}, whereas
      the first term remains in the integral. This yields again \eqref{eq:estondestn}.
      The continuity of the derivative follows from dominated convergence as
      in the proof of
      Lemma~\ref{lem:lemestimate}.

\begin{example} Consider the situation as in  Example \ref{ex:removal} above.
Suppose $l = x_6$ and so $e = \{x_6,x_7\}$. Then for $p = \{x_6,x_9\}
\in P$ the span of $p$ contains  $e$.
If we choose  $q = \{x_1,x_3\}$, then $P \setminus \{q \}$
is a linked pair partition of $X \setminus \{ q \}$. In that case
 we want to remove the propagators over the edges
 $$
 e_{\rm l}   =  \{x_1,x_2\} , \quad e_{\rm r}  =  \{x_2,x_3\}  ,
 $$
 which are both different from $e$.
 \begin{equation*}
\begin{array}{ll}
 P : \quad &    \contraction{1-1-1-1-1-1-1=<$e$>1-1-1}{(1,3)_1,(2,5)_2,(4,8)_1,(6,9)_2,(7,10)_3} \\
 P \setminus \{ q  \} : \quad &    \contraction{1-<$\scriptstyle e_l$>1-<$\scriptstyle e_r$>1-1-1-1-1=<$e$>1-1-1}{(2,5)_2,(4,8)_1,(6,9)_2,(7,10)_3}
   \end{array}
\end{equation*}
\end{example}

\begin{example} Consider the set $X = \{ x_1, x_2, x_3, x_4 \}$, where $x_1 < x_2 < \cdots <  x_4 $.
Let $P$ be a linked pairing with pairs indicated in the diagram below. If  $l = x_2$,  then
$e = \{x_2, x_3 \}$ and $p = \{ x_1,x_3\} \in P$ contains $e$ in its span. In that case we can remove $q = \{ x_1, x_2 \}$
and  $$
 e_{\rm l}   =  \{x_1,x_2\} , \quad e_{\rm r}  =  \{x_2,x_3\}   ,
 $$
 where $e = e_{\rm r}$.
\begin{align*}
\begin{array}{ll}
 P : \quad &    \contraction{1-1=<$e$>1-1}{(1,3)_1,(2,4)_2} \\
 P \setminus \{ q  \} : \quad &    \contraction{1-<$\scriptstyle e_l$>1=<$\scriptstyle e_r$>1-1}{(2,4)_1}
   \end{array}
\end{align*}
\end{example}
\end{proof}

\begin{proof}[Proof of Theorem \ref{thm:mainenergy}]
  We prove the theorem by induction in $n \in \N$.  We make the
  following induction hypothesis.

\vspace{0.5cm}
\noindent
$I_n$: \ \ There are unique numbers $\EE_m$ for $m \in N_n$ such that
the following holds for the functions $C_m, G_m : [0,\infty) \times (0,1] \to \mathcal{L}(\HH_{\rm at})$ defined in
\eqref{eq:defofc} and  \eqref{eq:defofg}.
\begin{enumerate}[label=(\emph{\roman*})]
\item \label{itm:mainenergy:1}
 For $m \in  N_n$ the functions $C_m, G_m$
 are continuous on  $[0,\infty) \times (0,1]$ and bounded and extend to
continuous functions on $[0,\infty) \times [0,1]$.
\item \label{itm:mainenergy:2} For $m \in  N_n$ the functions
$C_m, G_m$ are on $(0,\infty) \times (0,1)$ continuously differentiable
with respect to $r$ and $\eta$ with uniformly bounded derivatives.
\item \label{itm:mainenergy:3} For $m \in  N_n$ we have
$\EE_m = \inn{ \varphi_{\rm at}, G_m(0,0) \varphi_{\rm at} }$.
\end{enumerate}

\vspace{0.5cm}
\noindent
First observe that by definition of $C_n$ and $G_n$ vanish for $n$ odd.
Moreover, we note that $R^{\perp}(r,\eta)$ is continuous on $[0,\infty) \times [0,1)$
whereas  $R^{\parallel}(r,\eta)$ is continuous on  $(0,\infty) \times [0,1)$, with
a discontinuity at $r=0$.

\vspace{0.5cm}
\noindent
For $n=2$,  the Hypothesis $I_n$ can be seen as follows.
We have by definition  for $r \geq 0$ and $\eta > 0$
\begin{align*}
C_2(r,\eta) & = \int G^*(k) \frac{1-P_{\rm at} 1_{|k| + r  = 0 } }{H_{\rm at} - E_{\rm at} + |k| + \eta  + r  } G(k) dk \\
&  =
\int G^*(k) \frac{1 }{H_{\rm at} - E_{\rm at} + |k| + \eta  + r  } G(k) dk ,
\end{align*}
where in the second equality we used that $\{ k \in \R^3 :|k|=0\}$ is
a set of measure zero.
Note that we have $G_2 = C_2$.
\ref{itm:mainenergy:1} follows from dominated convergence (or Lemma \ref{lem:lemestimate}).
\ref{itm:mainenergy:2} follows from the usual results about interchanging integration and differentiation
(or Lemma \ref{lem:lemestimate2}).
\ref{itm:mainenergy:3} follows  from the definition $\EE_2 := \inn{ \varphi_{\rm at}, G_2(0,0) \varphi_{\rm at} }$.

\vspace{0.5cm}
\noindent
Now let us show the induction step. Suppose that $I_n$  holds.
If $n$ is even, the induction hypothesis trivially holds for
$n+1$ since in that case $C_{n+1}, G_{n+1}$ vanish
identically as a direct consequence of the definition.  Thus suppose
$n$ is odd.  By estimating the remainder
 of a first order Taylor expansion, it follows   from the induction hypothesis that
for $m \in N_n$  there exists constants $d_m$ such that
\begin{equation} \label{eq:estonG} | P_{\rm at} \widehat{G}_m(r,\eta)
  P_{\rm at} | \leq d_m | r + \eta | .
\end{equation}
Next we observe that
\begin{subequations} \begin{align}
  \label{eq:estonrespape}
  \| R^\perp(r,\eta) \| &  \leq \frac{1}{\inf ( \sigma(H_{\rm at})
    \setminus \{ E_0 \} ) - E_0 } , \\
  \| R^\parallel(r,\eta) \| & \leq (r+\eta)^{-1} .   \label{eq:estonrespapeb}
\end{align}
\end{subequations}
From the induction
hypothesis \ref{itm:mainenergy:1}, Eq.  \eqref{eq:algebraicenergyhat} of Proposition \ref{thm:algebraicenergy} and    Eqns. \eqref{eq:estonG} and \eqref{eq:estonrespapeb},
we see
that for all $m \in N_n$ there exists a constant $c_m$ such that for
all $r \geq 0,   \eta >   0$ we have
\begin{align} \label{eq:estonthatb}
  \| P_{\rm at}   \widehat{T}_m(r,\eta)    P_{\rm at} \| &\leq    c_m (r+\eta)    ,
  \\
  \| \bar{P}_{\rm at}   \widehat{T}_m(r,\eta)   P_{\rm at} \|  &\leq c_m    ,
  \\
  \| P_{\rm at}   \widehat{T}_m(r,\eta)    \bar{P}_{\rm at} \| &\leq c_m  ,
  \\
  \| \bar{P}_{\rm at}     \widehat{T}_m(r,\eta)    \bar{P}_{\rm at} \| &\leq  c_m  (r+\eta)^{-1}.  \label{eq:estonthate}
\end{align}
Now using the  decomposition of the resolvent \eqref{eq:decofresolv} and the    bounds   \eqref{eq:estonthatb}--\eqref{eq:estonthate} and
 \eqref{eq:estonrespape}
we see that   for $m \in N_n$ there exists a
constant $c_m$ such that
\begin{align*}
  \|  R(r,\eta)  \widehat{T}_m(r,\eta)  R(r,\eta)   \| \leq    \frac{c_m}{r+\eta} .
\end{align*}
Moreover, we see from  \eqref{eq:algebraicenergyhat} and the induction
hypothesis \ref{itm:mainenergy:1}
that the term $ R(r,\eta)  \widehat{T}_m(r,\eta)  R(r,\eta)$ is continuous on
$(0,\infty) \times (0,1]$ and has a continuous extension to $(0,\infty) \times [0,1]$.
 Thus it follows from the definition of $C_n$, given in \eqref{eq:defofc},
 and  Lemma \ref{lem:lemestimate} that $C_{n+1}$ is bounded and has a continuous extension to
  $[0,\infty) \times [0,1]$. Now it follows from \eqref{eq:defofg} that the same holds for $G_{n+1}$.
  Thus we have shown \ref{itm:mainenergy:1} for $n+1$.

From   \eqref{eq:algebraicenergyhat} of  Proposition \ref{thm:algebraicenergy}
and   the induction hypothesis \ref{itm:mainenergy:2} we see  that
$\widehat{T}_n$ is continuously differentiable on $(0,\infty) \times (0,1)$.
Now let  $\xi = r$ or $\xi = \eta$.
Calculating the derivative using the product rule, we obtain similarly as before,
with    Eq.  \eqref{eq:estonG} and
\begin{equation*}
\| \partial_\xi R^\perp(r,\eta) \| \leq \frac{1}{(\inf ( \sigma(H_{\rm at})
    \setminus \{ E_0 \} ) - E_0 )^2}
 , \quad \| \partial_\xi R^\parallel(r,\eta) \| \leq
(r+\eta)^{-2} ,
\end{equation*}
  the bounds
\begin{align}\label{eq:estonderthatb}
  \| P_{\rm at} \partial_\xi   \widehat{T}_n(r,\eta)    P_{\rm at} \| &\leq    c_m   ,
  \\
  \| \bar{P}_{\rm at} \partial_\xi   \widehat{T}_n(r,\eta)   P_{\rm at} \|  &\leq {c_m}{(r+\eta)}^{-1}     , \label{eq:estonderthatc}
  \\
  \| P_{\rm at}  \partial_\xi  \widehat{T}_n(r,\eta)    \bar{P}_{\rm at} \| &\leq {c_m}{(r+\eta )}^{-1} ,\label{eq:estonderthatd}
  \\
  \| \bar{P}_{\rm at} \partial_\xi     \widehat{T}_n(r,\eta)    \bar{P}_{\rm at} \| &\leq {c_m}{(r+\eta)}^{-2} . \label{eq:estonderthate}
\end{align}
Now  using  \eqref{eq:estonderthatb}--\eqref{eq:estonderthate}  we
obtain for $r > 0$ and $\eta > 0$ the bound
\begin{align*}
  \|  \partial_\xi  R(r,\eta)  \widehat{T}_n(r,\eta) R(r,\eta)   \|  \leq \frac{C}{(r+\eta)^2 }.
\end{align*}
Now we see from the definition of $C_n$,  \eqref{eq:defofc},
and  Lemma \ref{lem:lemestimate2}, that $C_{n+1}$ is continuously
differentiable on $(0,\infty) \times (0,1)$
with uniformly bounded derivatives.  Hence  it follows from \eqref{eq:defofg} that the same holds for $G_{n+1}$.
Thus we have shown \ref{itm:mainenergy:2} for $n+1$.
Property \ref{itm:mainenergy:3} now follows from  the  definition
$\EE_{n+1} := \inn{ \varphi_{\rm at}, G_{n+1}(0,0) \varphi_{\rm at}
}$.  Thus we have shown $I_{n+1}$.

\vspace{0.5cm}
\noindent
Using   \eqref{eq:algebraicenergy} of  Proposition  \ref{thm:algebraicenergy}, it follows
from  \eqref{eq:estonG}  and  \eqref{eq:estonrespape} that
\begin{equation*}
 \lim_{\eta \downarrow 0}
 \inn{ \varphi_{\rm at}, T_{n}(0,\eta) \varphi_{\rm at} } =  \inn{ \varphi_{\rm at}, G_{n}(0,0) \varphi_{\rm at} } =  \EE_{n} ,
\end{equation*}
where the last equality follows from \ref{itm:mainenergy:3} of the induction hypothesis.
Setting   $E_n := \EE_n$,  the claim of the theorem follows from Lemma \ref{lem:energyform2}.
\end{proof}

As a byproduct of the proof we obtain the following corollary, which
tells us that we can calculate the coefficients $E_n$ solely in terms
of linked pair  partitions.

\begin{corollary} \label{cor:energyformula} Let the situation be as in
  Theorem \ref{thm:mainenergy}.  Then we have
  \begin{align*}
    E_n  =  \lim_{\eta \downarrow 0} \langle \varphi_{\rm at} ,
    \Bigg\{ & C_n(0,\eta)
    \\ &\quad + \sum_{k=2}^n \sum_{\substack{j_1+\ldots + j_k = n\\ j_i \geq
    1}}  \left[ \prod_{i=1}^{k-1}\bigl( \widehat{C}_{j_i}(0,\eta)
    R^\perp(0,\eta) \bigr)\right] \widehat{C}_{j_k}(0,\eta) \Bigg\}
    \varphi_{\rm at} \rangle .
  \end{align*}
\end{corollary}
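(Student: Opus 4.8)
The plan is to obtain the corollary as an immediate specialization of the identity \eqref{eq:algebraicenergy} in Proposition~\ref{thm:algebraicenergy} to the value $r=0$, exploiting that the parallel part of the resolvent degenerates there. First I recall that, with the energies chosen as $E_n := \EE_n$ in the proof of Theorem~\ref{thm:mainenergy}, Lemma~\ref{lem:energyform2} together with \eqref{eq:E_nee1def} identifies
$$
\inn{ \varphi_{\rm at} , T_n(0,\eta) \varphi_{\rm at} } = E_n(\eta) , \qquad \eta > 0 ,
$$
while \eqref{eq:E_nee1-3} of Theorem~\ref{thm:mainenergy} gives $E_n = \lim_{\eta \downarrow 0} E_n(\eta)$. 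Hence it suffices to show that the operator in braces in the statement equals $T_n(0,\eta)$ for every $\eta > 0$, and then to pass to the limit.

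The key observation is that $R^\parallel(0,\eta) = 0$. Indeed, by \eqref{eq:defofresolvent} one has $R(0,\eta) = (1 - P_{\rm at})(E_0 - H_{\rm at} - \eta)^{-1}$, so that
$$
R^\parallel(0,\eta) = P_{\rm at} R(0,\eta) = \frac{P_{\rm at}(1 - P_{\rm at})}{E_0 - H_{\rm at} - \eta} = 0 ,
$$
since $P_{\rm at}$ commutes with $H_{\rm at}$ and $P_{\rm at}(1 - P_{\rm at}) = 0$. Evaluating \eqref{eq:algebraicenergy} at $r = 0$, each summand of the double sum carries $k-1 \geq 1$ factors $R^\parallel(0,\eta)$ (because the outer sum runs over $k \geq 2$), and therefore vanishes. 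This yields the operator identity $T_n(0,\eta) = G_n(0,\eta)$ for all $\eta > 0$.

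It then remains to insert the definition \eqref{eq:defofg} of $G_n$ at $r = 0$: the resulting operator is literally the expression in braces in the corollary, with $R^\perp(0,\eta)$ the parallel-free resolvent (which at $r=0$ coincides with $R(0,\eta)$). Taking the inner product with $\varphi_{\rm at}$ and letting $\eta \downarrow 0$, the left-hand side tends to $E_n$ by the first paragraph, while the right-hand side converges because $C_n$ and $G_n$ extend continuously to $\eta = 0$ by property~\ref{itm:mainenergy:1} of the induction hypothesis established in the proof of Theorem~\ref{thm:mainenergy}. This proves the formula. There is no genuine obstacle: the whole argument is a specialization of the already-proven Proposition~\ref{thm:algebraicenergy}, and the only point requiring care is the elementary but decisive vanishing $R^\parallel(0,\eta) = 0$, which is precisely the algebraic reason the parallel resolvent was split off in \eqref{eq:decofresolv}.
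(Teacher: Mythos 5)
Your argument is correct and follows essentially the same route as the paper, which also obtains the corollary as a byproduct of the last step of the proof of Theorem~\ref{thm:mainenergy} by specializing \eqref{eq:algebraicenergy} of Proposition~\ref{thm:algebraicenergy} at $r=0$ and invoking Lemma~\ref{lem:energyform2} together with \eqref{eq:E_nee1-3}. The only (harmless) difference is how the $R^\parallel$ terms are discarded: you observe that the indicator $1_{r=0}$ in \eqref{eq:defofresolvent} forces $R^\parallel(0,\eta)=0$, so $T_n(0,\eta)=G_n(0,\eta)$ holds exactly for every $\eta>0$, whereas the paper lets these terms vanish only in the limit $\eta\downarrow 0$ using the bounds \eqref{eq:estonG} and \eqref{eq:estonrespapeb}.
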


\section{Ground State}\label{sec:groundstate}

In this section we prove the following theorem,
which shows the existence of the expansion coefficients for the ground state.
The strategy of the proof is analogous to that of Theorem~\ref{thm:mainenergy}, with the difference that one has to
account for the   square of the  resolvent which may now appear in operator products. For an outline of the
proof, we therefore refer  the reader to the outline of the proof of Theorem~\ref{thm:mainenergy}, given at the
beginning of Section \ref{sec:groundstatee}.

\begin{theorem} \label{thm:groundstate} Suppose the assumptions of  Theorem \ref{thm:mainenergy} hold and
let  $(E_n)_{n \in \N}$
  be the unique sequence given in  Theorem \ref{thm:mainenergy}.  Let
  \begin{equation*}
    \psi_0 = \varphi_{\rm at} \otimes \Omega .
  \end{equation*}
  Then for all $m \in \N$ the following limit exists
  \begin{equation*}
    \psi_m = \lim_{\eta \downarrow 0}  \psi_m(\eta) ,
  \end{equation*}
  where
  \begin{equation}
    \label{eq:E_neegs2}
    \psi_m(\eta) := \sum_{k=1}^m \sum_{\substack{j_1 + \cdots + j_k = m \\ j_s \geq 1 }}
    \prod_{s=1}^k  \left\{ (E_0 - H_0 - \eta)^{-1} \bar{P}_0   ( \delta_{1 j_s }  V -  E_{j_s}   ) \right\} \psi_0  .
  \end{equation}
\end{theorem}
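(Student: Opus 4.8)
The plan is to follow the strategy of the proof of Theorem~\ref{thm:mainenergy} and to show that the net $(\psi_m(\eta))_{\eta>0}$ is Cauchy in $\HH$ as $\eta\downarrow0$; its limit is then the desired $\psi_m$. Since $(E_0-H_0-\eta)^{-1}$, $\bar P_0$, $V$ and the $E_j$ are all self-adjoint resp.\ real, for $\eta,\eta'>0$ the inner product $\inn{\psi_{m}(\eta),\psi_{m}(\eta')}$ expands, by \eqref{eq:E_neegs2}, into a finite double sum of scalars of the form $\inn{\psi_0,(\cdots)(E_0-H_0-\eta)^{-1}\bar P_0\,\bar P_0(E_0-H_0-\eta')^{-1}(\cdots)\psi_0}$, in which the two innermost resolvents meet back to back. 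Apart from this central pair of resolvents, each such term has exactly the shape of the energy expectation $\inn{\varphi_{\rm at},T_{n}(0,\eta)\varphi_{\rm at}}$ treated in Lemma~\ref{lem:energyform2}: a chain of interactions $V$ separated by resolvents and energy subtractions, closed off at both ends by $\psi_0=\varphi_{\rm at}\otimes\Omega$. I would therefore introduce the operator-valued analogue of $T_n$ for this doubled-centre configuration and reduce it, via the generalized Wick theorem (Lemma~\ref{lem:genwick}) and the algebraic identities of Lemma~\ref{lem:algrelcont}, to a sum over linked renormalized Feynman graphs $\widehat G_j$ separated by resolvents, exactly as in Propositions~\ref{thm:algebraicenergies} and~\ref{thm:algebraicenergy}. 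The only structural difference is that the central resolvent pair carries no energy subtraction and hence stays un-renormalized; this is the ``one in the middle'' alluded to in the overview.

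The decisive point is the singularity of this central pair. After the orthogonal/parallel decomposition \eqref{eq:decofresolv}, the parallel part contributes a factor $R^\parallel(\cdot,\eta)R^\parallel(\cdot,\eta')$, which by \eqref{eq:estonrespapeb} is of order $(r+\eta)^{-1}(r+\eta')^{-1}$, i.e.\ one power more singular than the single parallel resolvents handled in the energy case. The compensating vanishing is furnished by the two renormalized blocks $\widehat G$ flanking the centre together with the projections onto $\varphi_{\rm at}$ at the two ends: by the key property $P_{\rm at}\widehat G_j(0,0)P_{\rm at}=0$ established during the induction for Theorem~\ref{thm:mainenergy}, together with the Taylor estimate \eqref{eq:estonG}, each such $P_{\rm at}(\,\cdot\,)P_{\rm at}$ sandwich carries a factor $|P_{\rm at}\widehat G_j(r,\eta)P_{\rm at}|\le d_j(r+\eta)$. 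Two such factors precisely absorb the doubled central singularity, so the whole expression stays bounded and, after the contraction integrals, extends continuously to $r=\eta=\eta'=0$.

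With this in hand the argument closes as in Section~\ref{subsec:estimateandProofThm1}: the regularity and vanishing properties of $C_m,G_m$ and of $P_{\rm at}\widehat G_m(0,0)P_{\rm at}$ are already available from the proof of Theorem~\ref{thm:mainenergy}, and the estimates of Lemmas~\ref{lem:lemestimate} and~\ref{lem:lemestimate2} show that the reduced expression for $\inn{\psi_m(\eta),\psi_m(\eta')}$ is continuous up to the origin in $(\eta,\eta')$. Hence $\lim_{\eta,\eta'\downarrow0}\inn{\psi_m(\eta),\psi_m(\eta')}=:L_m$ exists and is the same number however $\eta,\eta'$ tend to $0$. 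Writing $\|\psi_m(\eta)-\psi_m(\eta')\|^2=\inn{\psi_m(\eta),\psi_m(\eta)}+\inn{\psi_m(\eta'),\psi_m(\eta')}-2\Re\inn{\psi_m(\eta),\psi_m(\eta')}$ and letting $\eta,\eta'\downarrow0$ gives $L_m+L_m-2L_m=0$, so $(\psi_m(\eta))$ is Cauchy and converges in $\HH$ to the claimed $\psi_m$.

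The main obstacle is the combinatorial bookkeeping for the central resolvent when redoing Proposition~\ref{thm:algebraicenergies}. In the energy case every resolvent sitting between two linked blocks inherits an energy subtraction that renormalizes the adjacent graph; here the central resolvent sits between the ``bra half'' and the ``ket half'' and must be kept doubled and un-renormalized. One therefore has to repeat the connected/disconnected decomposition of the pairings with one distinguished central link, verifying that the renormalization terms $\EE_j$ still factor out of every block except the central one, and that the two blocks immediately flanking the centre are exactly the ones whose $P_{\rm at}(\,\cdot\,)P_{\rm at}$ part vanishes at the origin. Checking that the doubled (rather than single) $R^\parallel$ singularity is matched by precisely these two vanishing factors is the technical heart of the proof; the remaining contraction estimates are identical to those in Lemmas~\ref{lem:lemestimate} and~\ref{lem:lemestimate2}.
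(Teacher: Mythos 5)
Your proposal is correct in substance, and its core --- rewriting the inner product of two regularized coefficients as a two-sided operator product with an un-renormalized, doubled resolvent at the central edge, expanding by the generalized Wick theorem, passing to linked components with renormalized blocks $\widehat G_j$, and using $P_{\rm at}\widehat G_j(0,0)P_{\rm at}=0$ together with the flanking blocks to absorb the extra power of the central parallel singularity --- is exactly the paper's route (Lemma~\ref{lem:eqwickgs}, Proposition~\ref{thm:algebraicgs}, Theorem~\ref{thm:groundstateform}, the induction $J_n$, and Lemma~\ref{lem:lemestimate3}, which is the estimate you describe for the one edge carrying a $(|r|^{-1}+1)^2$ bound). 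Where you genuinely depart from the paper is the step that upgrades convergence of numbers to convergence of vectors. The paper only ever evaluates the diagonal $\inn{\psi_m(\eta),\psi_m(\eta)}=\inn{\varphi_{\rm at},T_{-m,m}(0,\eta)\varphi_{\rm at}}$ and then proves strong convergence of $\psi_m(\eta)$ by a separate argument: it normal-orders the creation operators, decomposes $\psi_m(\eta)=\sum_{l}\psi_{m,l}(\eta)$ into $l$-photon components with the explicit kernels \eqref{eq:groundstate}, and applies dominated convergence to each kernel. You instead polarize: you introduce the off-diagonal quantity $\inn{\psi_m(\eta),\psi_m(\eta')}$, establish its joint continuity at $(\eta,\eta')=(0,0)$, and conclude that the net is Cauchy in $\HH$. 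Both are valid. Your route avoids the explicit wavefunction formula entirely, which is a real simplification of the last step; the price is that the entire graph machinery must carry two regularization parameters (edges left of the centre with $\eta'$, right of the centre with $\eta$, the central edge with $R(\cdot,\eta')R(\cdot,\eta)$), and you must check that the two flanking renormalized blocks supply the matched factors $O(r+\eta')$ and $O(r+\eta)$ separately, not two copies of the same factor. This does go through, because the contraction bounds of Lemmas~\ref{lem:lemestimate}--\ref{lem:lemestimate3} are uniform in the edge functions and their continuity statements are formulated for an abstract parameter $s\in S\subset\R^d$, so $s=(\eta,\eta')\in\R^2$ is admissible; but it is an extension you would need to state and carry through the induction explicitly, since the paper's $J_n$ only covers a single $\eta$.
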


To show that the expansion coefficients of the ground state exist, we
have to calculate their norm. For this we introduce the following
graph functions.
For  $m,n  \in \Z$ with $m \leq n$ we define the set
$$
N_{m,n} = [m,n] \cap \Z \setminus \{ 0 \} ,
$$
and
$$
\pi_{m,n}(r,\eta) = ( (V_x)_{x \in N_{m,n} } , (\widetilde{R}_e(\cdot + r,\eta))_{e
  \in E_{N_{m,n}} } ) ,
$$
where for $e \in E_{N_n}$ we defined
$$
\widetilde{R}_e(r,\eta) = \begin{cases}
  R(r,\eta)^2  , &  \text{ if } e = \{-1,1\}   \\ R(r,\eta) ,
  &  \text{ otherwise.}
\end{cases}
$$

\begin{example}
  We can write $\pi_{-3,2}(r,\eta)$ symbolically as
  \begin{footnotesize}
  \begin{equation*}
    \bignode{\! -3}{V}
    \edge{R(\cdot + r , \eta)   }
    \bignode{\! -2}{V}
    \edge{R(\cdot + r , \eta)   }
    \bignode{\! -1}{V}
    \edge{R(\cdot + r , \eta)^2   }
    \bignode{1}{V}
    \edge{R(\cdot + r , \eta)   }
    \bignode{2}{V}
  \end{equation*}
  \end{footnotesize}
\end{example}
Note that for $n \in \N$ we have $\pi_{1,n}(r,\eta) = \pi_n(r,\eta)$.
 For  $M \subset \Z$ we define the set $\mathcal{Q}_0(M)$ consisting of all collections
 of disjoint nonempty intervals of $M$, such that 0 does not lie
 between the endpoints of any of   the intervals, i.e.,
\begin{align*}
  \mathcal{Q}_0(M) := \{ {}& \mathcal{I} \subset \mathcal{P}(M)   :  \ \forall I , J \in \mathcal{I} \text{ we have }
  I \cap J = \emptyset ,  \\
                           & \text{if } I \in \mathcal{I}, \text{ then } 0 \notin [ \min I , \max I ],
  \\
                           & \forall I \in \mathcal{I} \text{ the set } I \text{ is a nonempty interval of   } M \    \} .
\end{align*}
Note that $\mathcal{Q}_0(M) \subset \mathcal{Q}(M)$.  We define
\begin{align}
  \label{eq:E_n:2gs}
  T_{m,n}(r,\eta)   := & \,  P_\Omega \Pi ( \pi_{m,n}(r,\eta)) P_\Omega
                            +  \sum_{\substack{ \mathcal{I} \in  \mathcal{Q}_0(N_{m,n}) : \\ |\mathcal{I}| \geq 1  }}
							P_\Omega  \Pi (\subst_{\substack{I  \to -E_{|I|}\\ I \in \mathcal{I} }}(\pi_{m,n}(r, \eta ))) P_\Omega
					\nonumber \\
                    =& \sum_{\substack{ \mathcal{I} \in  \mathcal{Q}_0(N_{m,n})  }}
                      P_\Omega  \Pi (\subst_{\substack{I  \to -E_{|I|}\\ I \in \mathcal{I} }}(\pi_{m,n}(r, \eta))) P_\Omega ,
\end{align}
as an operator on the atomic Hilbert space.  As an
immediate consequence of the definitions we obtain the following
lemma.  To be explicit we give a proof below.

\begin{lemma}
\label{lem:eqwickgs}  Suppose the assumptions  of Theorem \ref{thm:groundstate} hold.
 Then with the definition  \eqref{eq:E_neegs2} we
  have for all $m \in \N$ that
$$
\inn{ \psi_{m}(\eta), \psi_{m}(\eta) } = \inn{ \varphi_{\rm at} ,
  T_{-m,m}(0,\eta) \varphi_{\rm at} } .
$$
\end{lemma}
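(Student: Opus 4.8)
The plan is to expand the squared norm into a double sum, push the left factors to the right via their adjoints so that the two resolvent chains meet head-on, and then recognize the resulting operator product as a summand of $T_{-m,m}(0,\eta)$ by a combinatorial bijection of exactly the type used in the proof of Lemma~\ref{lem:energyform2}. First I would abbreviate $A_j := (E_0 - H_0 - \eta)^{-1}\bar{P}_0(\delta_{1j}V - E_j)$, so that \eqref{eq:E_neegs2} reads $\psi_m(\eta) = \sum_{k=1}^m\sum_{j_1+\cdots+j_k=m}A_{j_1}\cdots A_{j_k}\psi_0$. Since $V$, $H_0$, $\bar{P}_0$ are selfadjoint and $E_0,E_j,\eta\in\R$, we have $A_j^* = (\delta_{1j}V - E_j)\bar{P}_0(E_0-H_0-\eta)^{-1}$, and expanding the inner product gives
\begin{equation*}
\inn{\psi_m(\eta),\psi_m(\eta)} = \sum_{k,k'}\sum_{j,j'}\inn{ \psi_0,\, A_{j'_{k'}}^*\cdots A_{j'_1}^*\, A_{j_1}\cdots A_{j_k}\,\psi_0 } .
\end{equation*}
The crucial point is that where the left (adjoint) chain meets the right chain one produces the factor $\bar{P}_0(E_0-H_0-\eta)^{-1}\cdot(E_0-H_0-\eta)^{-1}\bar{P}_0 = \bar{P}_0(E_0-H_0-\eta)^{-2}$, i.e.\ exactly one \emph{squared} resolvent in the middle and single resolvents everywhere else.

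Next I would transcribe this operator product into the language of graph functions. Using $1_{H_f=0}=P_\Omega$ together with \eqref{decompofproj} and the definition \eqref{eq:defofresolvent} yields the operator identity $R(H_f,\eta) = \bar{P}_0(E_0-H_0-\eta)^{-1}$, and therefore $R(H_f,\eta)^2 = \bar{P}_0(E_0-H_0-\eta)^{-2}$. Hence the single resolvents in the chain are precisely $R(\cdot,\eta)$ evaluated at $H_f$, while the central doubled resolvent is precisely the squared edge function $\widetilde{R}_{\{-1,1\}}$ built into $\pi_{-m,m}$. Because $\psi_0 = \varphi_{\rm at}\otimes\Omega$ carries the Fock vacuum, the inner product equals $\inn{ \varphi_{\rm at},\, P_\Omega(\cdots)P_\Omega\,\varphi_{\rm at}}$, and the bracketed operator is exactly $\Pi$ applied to $\pi_{-m,m}(0,\eta)$ with certain intervals substituted by energies, as in \eqref{eq:E_n:2gs}.

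Finally I would set up the order-preserving bijection. The left chain fills the vertices $\{-m,\dots,-1\}$ (reading $s=k',\dots,1$) and the right chain fills $\{1,\dots,m\}$ (reading $s=1,\dots,k$), the two blocks meeting across the squared edge $\{-1,1\}$. Exactly as in Lemma~\ref{lem:energyform2}, each factor $(\delta_{1j_s}V-E_{j_s})$ splits into a \emph{free} single vertex carrying $V$ (when $j_s=1$ and the vertex lies outside $\bigcup\mathcal{I}$) and a \emph{covered} interval of length $j_s$ substituted by $-E_{j_s}$; summing over both alternatives at every position reproduces the sum over all $\mathcal{I}\in\mathcal{Q}_0(N_{-m,m})$. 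The restriction that no interval of $\mathcal{I}$ straddles $0$ is automatic here, since every block originates from one of the two chains and therefore lies entirely in $\{-m,\dots,-1\}$ or entirely in $\{1,\dots,m\}$; this is exactly the feature distinguishing $\mathcal{Q}_0$ from $\mathcal{Q}$, and it is what makes the squared central resolvent persist (no energy subtraction can bridge the two sides).

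The part requiring genuine care is the bookkeeping at the center when an interval of $\mathcal{I}$ abuts it, i.e.\ when $\max I=-1$ or $\min I=1$. In that case the substitution rule of Definition~\ref{def:subst} merges the scalar $-E_{|I|}$ with the squared edge function $\widetilde{R}_{\{-1,1\}}$ on the appropriate side, and I must verify that this agrees with the abstract side, where the corresponding scalar $-E_{j'_1}$ (resp.\ $-E_{j_1}$) commutes through and multiplies $\bar{P}_0(E_0-H_0-\eta)^{-2}$. Confirming that the two descriptions coincide on both sides of the center, and that the identification of the index sequences $(j'_{k'},\dots,j'_1)$ and $(j_1,\dots,j_k)$ with the blocks of $\mathcal{I}$ and its complement is consistent, is the main obstacle; once this is checked, the equality holds term by term and the lemma follows.
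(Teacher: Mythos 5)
Your proposal is correct and follows essentially the same route as the paper: expand $\inn{\psi_m(\eta),\psi_m(\eta)}$ into a double sum by taking adjoints of the left chain, observe that the two resolvent chains meet in a squared resolvent corresponding to the edge $\{-1,1\}$ of $\pi_{-m,m}$, and then match summands with those of \eqref{eq:E_n:2gs} via the order-preserving bijection of Lemma~\ref{lem:energyform2}, applied separately to the intervals in $N_{-m,-1}$ and in $N_{1,m}$ (which is exactly why the $\mathcal{Q}_0$ restriction is automatic). The paper's proof is precisely this argument, with the partition $\mathcal{I}=\mathcal{I}_1\cup\mathcal{I}_2$ playing the role of your two blocks.
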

\begin{proof} The proof is analogous to the proof of Lemma \ref{lem:energyform2}.
 Inserting \eqref{eq:E_neegs2} into the left hand side and taking the adjoint we find
  \begin{align}
      \langle \psi_m(\eta) ,  \psi_m&(\eta)\rangle =
    \sum_{k'=1}^m  \sum_{k=1}^m  \sum_{\substack{j_1' + \cdots + j_{k'}' = m \\ j_s' \geq 1 }}  \sum_{\substack{j_1 + \cdots + j_k = m \\
        j_s \geq 1 }} \nonumber \\
  &\langle \psi_0 ,  \prod_{s'=1}^{k'} \left\{  ( \delta_{1 j'_{s'} }  V -
        E_{j'_{s'}}   )  (E_0 - H_0 - \eta)^{-1} \bar{P}_0  \right\} \nonumber\\
     &\qquad\qquad\quad\prod_{s=1}^k  \left\{ (E_0 - H_0 - \eta)^{-1} \bar{P}_0   (
        \delta_{1 j_s }  V -  E_{j_s}   ) \right\} \psi_0 \rangle. \label{eq:combgs}
  \end{align}
  Consider the summand in \eqref{eq:E_n:2gs} indexed by
  $\mathcal{I}  \in \mathcal{Q}_0(N_{-m,m})$.
   We partition the  set $\mathcal{I}$ into
  $\mathcal{I}_1 = \{ I \in \mathcal{I} :   I \subset N_{-m,-1}    \}$ and
   $\mathcal{I}_2 = \{ I \in \mathcal{I} : I \subset N_{1,m}  \}$. By definition
   of  $\mathcal{Q}_0(N_{-m,m})$ this is indeed a partition of $\mathcal{I}$.
   As in the proof of Lemma \ref{lem:energyform2}
we define
\begin{align*}
\mathcal{K}_1 & := \{ \{ s \} : s \in N_{-m,-1}  \text{ and } s \notin I =
\emptyset , \ \forall I \in \mathcal{I}_1 \}, \\
 \mathcal{K}_2 & := \{ \{ s \} : s \in N_{1,m}  \text{ and } s \notin I =
\emptyset , \ \forall I \in \mathcal{I}_2 \} .
\end{align*}
Now we order the elements of
$\mathcal{S}_j := \mathcal{I}_j \cup \mathcal{K}_j$ in increasing order as in the proof of Lemma \ref{lem:energyform2}.
This defines a bijection $\varphi_j: N_{|\mathcal{S}_j|} \to \mathcal{S}_j$ preserving the order.
 By construction we see that the summand in
\eqref{eq:E_n:2gs} indexed by $\mathcal{I}$ is equal to the
summand in \eqref{eq:combgs} which we obtain  by choosing $k' = |\mathcal{S}_1|$ and  $k = |\mathcal{S}_2|$,
$j'_{s'} = |\varphi_1(s')|$ and $j_s = |\varphi_2(s)|$, by choosing $-E_1$ in case $j'_{s'}=1$ and $\varphi_1(s')  \in \mathcal{I}_1$ or  $j_s=1$ and $\varphi_2(s)  \in \mathcal{I}_2$, and
by choosing   $V$  in case $j'_{s'}=1$ and $\varphi_1(s')  \notin \mathcal{I}_1$ or  $j_s=1$ and $\varphi_2(s)  \notin \mathcal{I}_2$.
\end{proof}

For $m,n \in \Z$ with $m \leq n$  we define
\begin{align*}
  C_{m,n}(r,\eta)  &:=   \sum_{\substack{ P_e \text{ linked pairing of } N_{m,n} \\ S(P_e) =  [m,n]      } }
  \mathcal{C}^0_{P_e}( \subst_{\substack{I  \to \widetilde{T}_{I}(\cdot + r,\eta)    \\ I \in \mathcal{I}_{P_e}}} (
  \pi_{m,n}(r,\eta)))
\end{align*}
where
$$
\widetilde{T}_{I}(r,\eta) :=
\begin{cases}
  {T}_{\min I,\max I}(r,\eta) - E_{|I|+1}    , & \text{ if } 0 \notin [\min I  , \max I ]  \\
  {T}_{\min I,\max I}(r,\eta) , & \text{
    otherwise.}
\end{cases}
$$

Observe that if $m,n \in \Z$ have the same sign and $m \leq n$, then
$$
T_{m,n} = T_{n-m+1}  , \quad C_{m,n} = C_{n-m+1} .
$$

\begin{proposition}\label{thm:algebraicgs}
For $m,n \in \Z$ with $m \leq n$ we have
  \begin{align} \label{eq:eqfortmn1}
    &T_{m,n}(r,\eta)  = C_{m,n}(r,\eta)  \nonumber \\
     &\quad\;+  \sum_{k=2}^{n-m}
     \sum_{\substack{j_1+\ldots + j_k = n-m\\ j_i \geq 1}}
    \left[ \prod_{i=1}^{k-1}\bigl( \widetilde{C}_{l_i(m,\underline{j}),r_i(m,\underline{j})}(r,\eta) \widetilde{R}_{\{r_i(m,\underline{j}) ,l_{i+1}(m,\underline{j}) \}}(r,\eta) \bigr)
    \right] \nonumber \\
    & \qquad\qquad \times \widetilde{C}_{l_{k}(m,\underline{j})  ,r_k(m,\underline{j})  }(r,\eta)  ,
  \end{align}
  where we defined inductively
for $\underline{j} = (j_1,\dotsc,j_k)$ the numbers   $l_1(m,\underline{j}):= m$ and
  $l_{i+1}(m,\underline{j}) :=  r_{N_{m,n}}^{j_i}(l_i(m,\underline{j}))$,
  and we defined  $r_i(m,\underline{j}) := r_{N_{m,n}}^{j_i-1}(l_i(m,\underline{j}))$  and
  $$
\widetilde{C}_{p,q}(r,\eta) :=
\begin{cases}
  {C}_{p,q}(r,\eta)   , & \text{ if } 0 \in [p,q] \\
  {C}_{p,q}(r,\eta) - E_{q-p+1}  , & \text{otherwise. }
\end{cases}
$$
\end{proposition}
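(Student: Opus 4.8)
The plan is to run the proof of Proposition~\ref{thm:algebraicenergies} essentially verbatim, adapting only the three features that distinguish the ground-state setting: the vertex set is $N_{m,n}$ rather than $N_n$ (so the position between $-1$ and $1$ is a distinguished \emph{central interface} carrying the double propagator $\widetilde R_{\{-1,1\}}=R^2$), the admissible energy subtractions are indexed by $\mathcal{Q}_0(N_{m,n})$ instead of $\mathcal{Q}(N_n)$, and the renormalization $-E$ is suppressed on whichever block straddles $0$. First I would apply the generalized Wick theorem (Lemma~\ref{lem:wick}), exactly as in Lemma~\ref{lem:genwick}, to the defining expression \eqref{eq:E_n:2gs}, obtaining
\begin{equation*}
    T_{m,n}(r,\eta) = \sum_{\mathcal{I}\in\mathcal{Q}_0(N_{m,n})}\mathcal{C}^0\bigl(\subst_{\substack{I\to -E_{|I|}\\ I\in\mathcal{I}}}(\pi_{m,n}(r,\eta))\bigr),
\end{equation*}
and then expand $\mathcal{C}^0=\sum_P\mathcal{C}^0_P$ to set up the same connected/disconnected split as in \eqref{eq:divideT}.

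For the connected part I would isolate those pairings $P$ whose linked component through the central interface has span equal to all of $[m,n]$. Decomposing such a $P$ into this spanning linked backbone $P_e$ and the pairings living on the gaps $I\in\mathcal{I}_{P_e}$, and absorbing the gap sums (over internal pairings together with the sub-interval energy substitutions) into the substitution targets by the multilinearity of Lemma~\ref{lem:algrelcont}, exactly as in \eqref{eq:E_n:233f}--\eqref{eq:E_n:234g}, reproduces $C_{m,n}(r,\eta)$. Here one uses that a linked backbone with $S(P_e)=[m,n]$ must contain a pair straddling the interface, so none of its gaps straddle $0$; hence every gap substitution is of the renormalized type $\widetilde T_I = T_{\min I,\max I}-E_{|I|+1}$, and the double propagator $R^2$ is carried \emph{inside} the contraction $\mathcal{C}^0_{P_e}$ across the straddling pair.

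For the remaining (disconnected) configurations I would invoke the very same bijection $\psi:\mathcal{S}_1\to\mathcal{S}_2$ constructed in the proof of Proposition~\ref{thm:algebraicenergies}, which decomposes a pair $(P,\mathcal{I})$ into the partition $\mathcal{K}$ of $N_{m,n}$ given by the spans of the connected components of $P$ together with the leftover intervals of $\mathcal{I}$. Since the contraction factorizes across connected components (Lemma~\ref{lem:algrelcont}\ref{lem:algrelcont:b}), the sum becomes a product over the blocks $K\in\mathcal{K}$, the connecting edges carrying the propagators $\widetilde R_{\{r_i,l_{i+1}\}}$ — which equals $R^2$ precisely when $\{r_i,l_{i+1}\}=\{-1,1\}$ — and each block contributing its internally renormalized factor. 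Collecting the internal pairings and energy subtractions of each block into $\widetilde C_{l_i,r_i}$ (renormalized by $-E_{r_i-l_i+1}$ exactly when $0\notin[l_i,r_i]$) then yields the stated product formula \eqref{eq:eqfortmn1}, with the endpoints $l_i,r_i$ read off from the composition $\underline{j}$.

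The step I expect to be the main obstacle is the correct placement of the energy subtractions and the double resolvent around the central interface, and this is exactly where the constraint $\mathcal{I}\in\mathcal{Q}_0$ does the essential work. Because no interval of $\mathcal{I}$ may straddle $0$, the unique block of $\mathcal{K}$ whose span contains $0$ can never be a pure-interval (empty-pairing) block and therefore never produces a ``$1_{(P_K=\emptyset,\mathcal{I}_K=\{K\})}$'' renormalization term; this is precisely why the central factor $\widetilde C_{l_i,r_i}$ with $0\in[l_i,r_i]$ carries no $-E$ subtraction, matching the definitions of $\widetilde C$ and $\widetilde T_I$. The genuinely new bookkeeping relative to the energy case is then to verify that the two possibilities — the $R^2$ sitting \emph{inside} the straddling block, or the $R^2$ sitting on the single connecting edge $\{-1,1\}$ between two non-straddling blocks — exhaust all cases and correspond exactly to the two appearances of $\widetilde R$ and $\widetilde C$ in the statement, including the index shift $E_{|I|+1}$ that the ground-state normalization forces on the backbone gaps as opposed to the index $E_{r_i-l_i+1}$ on the disconnected blocks.
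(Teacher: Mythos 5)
Your overall architecture --- Wick expansion, the split into a spanning (connected) part and a disconnected part, the backbone/gap decomposition for the former and the block bijection for the latter --- is exactly the paper's, and your treatment of the disconnected part is correct: the connecting edge carries $\widetilde{R}$, hence $R^2$ precisely when it equals $\{-1,1\}$, and the block straddling $0$ never produces an empty-pairing renormalization term because $\mathcal{Q}_0$ forbids intervals straddling $0$, which is exactly the paper's indicator $1_{(P_K=\emptyset,\,\mathcal{I}_K=\{K\},\,0\notin[\min K,\max K])}$.

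The genuine error is in the connected part: the inference ``a linked backbone with $S(P_e)=[m,n]$ must contain a pair straddling the interface, \emph{so none of its gaps straddle} $0$'' is false. Take $m=-2$, $n=2$ and $P_e=\{\{-2,2\}\}$: this is a (trivially) linked pairing with $S(P_e)=[-2,2]$, yet its unique gap is $I=\{-1,1\}\in\mathcal{I}_{P_e}$, which straddles $0$; nothing forces $-1$ or $1$ to lie in $\bigcup P_e$. Consequently your claim that ``every gap substitution is of the renormalized type $\widetilde{T}_I=T_{\min I,\max I}-E_{|I|+1}$'' is wrong. For a gap $I$ with $0\in[\min I,\max I]$ the inner sum runs over $\mathcal{I}_I\in\mathcal{Q}_0(I)$, which contains no interval straddling $0$; hence the configuration $(P_I=\emptyset,\ \mathcal{I}_I=\{I\})$ that would generate the subtraction $-E$ is simply absent, and the substitution target is the \emph{unrenormalized} $T_{\min I,\max I}$. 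This is precisely why $\widetilde{T}_I$ is defined with a case distinction, and your version would produce a $C_{m,n}$ containing spurious energy subtractions on straddling gaps, i.e., terms that do not occur in the expansion of $T_{m,n}^{(C)}$. The fix is local: when interchanging the gap sums with the pairing sums as in \eqref{eq:E_n:233f}--\eqref{eq:E_n:234g}, record whether $\{I\}$ belongs to $\mathcal{Q}_0(I)$ (equivalently, whether $0\notin[\min I,\max I]$) and attach the subtraction only in that case; the remainder of your argument then goes through as written.
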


The proof is very similar to that of Proposition
\ref{thm:algebraicenergies}, except we have to consider the factor
involving the square of the resolvent and the fact that we have less
energy subtractions.

\begin{proof} The case, where  $m,n$ have the same sign, has already been shown in the
last section. Thus assume $m < 0 < n$.
  By the generalized Wick theorem,  Lemma \ref{lem:genwick}, we have
  \begin{align*}
    T_{m,n}(r,\eta)
    &=\sum_{P \text{ pairing of } N_{m,n}   }   \sum_{\substack{\mathcal{I} \in  \mathcal{Q}_0(N_{m,n})    }}
    \mathcal{C}^0_P (\subst_{\substack{I \to -\EE_{|I|}\\ I \in \mathcal{I }} }(\pi_{m,n}(r,\eta))) \\
    &=  T_{m,n}^{(C)}(r,\eta) +  T_{m,n}^{(D)}(r,\eta) ,
  \end{align*}
  where we divided the sum over the partitions into partitions which
  connect the smallest and the largest vertex
  \begin{align*}
    T_{m,n}^{(C)}(r,\eta) :=   \sum_{\substack{ P \text{ pairing of } N_{m,n}  \\ S(P) =  [m,n]   }}   \sum_{\substack{\mathcal{I} \in  \mathcal{Q}_0(N_{n,n})     }}
    \mathcal{C}^0_P  (\subst_{\substack{I \to -\EE_{|I|}\\ I \in \mathcal{I }} }(\pi_{m,n}(r,\eta)))
  \end{align*}
   and
  partitions which are disconnected
  \begin{align*}
    T_{m,n}^{(D)}(r,\eta)
    :=  \sum_{\substack{ P \text{ pairing of } N_{m,n}  \\ S(P) \neq [m,n]   }}   \sum_{\substack{\mathcal{I} \in  \mathcal{Q}_0(N_{m,n})  }}
    \mathcal{C}^0_P  (\subst_{\substack{I \to -\EE_{|I|}\\ I \in \mathcal{I }} }(\pi_{m,n}(r,\eta)))
  \end{align*}
  To simplify the connected part $T_n^{(C)}(r,\eta)$ we use the
  decomposition \eqref{eq:ppedcomp} as in the proof of
  Proposition~\ref{thm:algebraicenergies} and an analogous argument
  yields
  \begin{align*}
    T_{m,n}^{(C)}(r,\eta) &=\sum_{\substack{  P_e  \text{ pairing of } N_{m,n}  \\ \{ 1,n  \}  \subset  \bigcup P_e    \\  P_e \text{ linked}   } }
    \prod_{I \in   \mathcal{I}_{P_e} } \bigg\{  \sum_{\substack{P_I
    \text{ pairing  of }   I } } \bigg\} \\
    &\qquad\qquad\qquad
    \sum_{\substack{\mathcal{I} \in  \mathcal{Q}_0(N_{m,n})    }}
    \mathcal{C}^0_{(P_e \cup \bigcup_{I \in \mathcal{I}_{P_e}}  P_I) } (\subst_{\substack{J \to -\EE_{|J|}\\ J \in \mathcal{I}} }(\pi_{m,n}(r,\eta)))  \nonumber \\
    &= \sum_{\substack{  P_e  \text{ pairing of } N_n  \\ \{ 1,n  \}
		\subset  \bigcup P_e    \\  P_e \text{ linked }   }  }
    \prod_{I \in   \mathcal{I}_{P_e} }
		\bigg\{ \sum_{\substack{\mathcal{I}_I  \in  \mathcal{Q}_0(I)}}
		\sum_{\substack{P_I  \text{ pairing} \\
    \text{of }  I \setminus \bigcup  \mathcal{I}_I} } \bigg\} \\
   &\qquad\qquad\qquad\qquad
   \mathcal{C}^0_{(P_e \cup \bigcup_{I \in \mathcal{I}_{P_e}}  P_I)  }
   (\subst_{\substack{J \to -\EE_{|J|}\\
	J \in \bigcup_{I \in \mathcal{I}_{P_e}}
	\mathcal{I}_I} }(\pi_{m,n}(r,\eta))) \nonumber \\
   &=  \sum_{\substack{  P_e  \text{ pairing of } N_n  \\ \{ 1,n  \}
		\subset  \bigcup P_e    \\  P_e \text{  linked}   } }
	\mathcal{C}^0_{P_e}( \subst_{\substack{I  \to \widetilde{T}_{I}(\cdot + r,\eta)
	\\ I \in \mathcal{I}_{P_e}}} ( \pi_{m,n}(r,\eta)))  \nonumber \\
  &= C_{m,n}(r,\eta) .
  \end{align*}

  To simplify the disconnected part $T_{m,n}^{(D)}(r,\eta)$ we
  rearrange the sum as in the proof of Proposition \ref{thm:algebraicenergies},
  which yields   the following identities,
  \begin{align*}
    &T_{m,n}^{(D)}(r,\eta) \nonumber \\
    &= \!\!\! \sum_{\substack{ \mathcal{K} \in \mathcal{Q}(N_{m,n}) \\  \bigcup \mathcal{K} =  N_{m,n} \\ |\mathcal{K}| \geq 2}} \prod_{K \in \mathcal{K}}
    \!\bigg\{\! \sum_{\substack{ P_K \text{ pairing of } K \\  S(P_K) = \\  [\min  K , \max K]   }} \sum_{\substack{\mathcal{I}_K  \in  \mathcal{Q}_0(K)     }}
    +    \ 1_{(P_K = \emptyset , \mathcal{I}_K = \{ K \} , 0 \notin   [\min  K , \max K] ) }  \bigg\}  \nonumber \\
    & \qquad\qquad\qquad\qquad\qquad\times \ \mathcal{C}^0_{(\bigcup_{K \in \mathcal{K}}  P_K)}( \subst_{\substack{I \to -\EE_{|I|}\\ I \in  \bigcup_{K \in \mathcal{K} } \mathcal{I}_K }}(\pi_{m,n}(r,\eta)))
     \nonumber \\
    &= \!\!\!   \sum_{\substack{ \mathcal{K} \in \mathcal{Q}(N_{m,n})
    \\  \bigcup \mathcal{K} =  N_{m,n} \\ |\mathcal{K}| \geq 2}}
    \prod_{K \in \mathcal{K}}\!
    \bigg\{ \! \sum_{\substack{ P_K  \text{ pairing of } K \\  S(P_K) = \\  [\min  K , \max K]   }} \sum_{\substack{\mathcal{I}_K  \in  \mathcal{Q}_0(K)     }}
    +    \ 1_{(P_K = \emptyset , \mathcal{I}_K = \{ K \} , 0 \notin   [\min  K , \max K]  ) } \bigg\}  \nonumber  \\
    & \qquad\quad \prod_{K  \in \mathcal{K} \setminus \max \mathcal{K} }\! \bigg\{  \mathcal{C}^0_{P_K}  (\subst_{\substack{I \to -\EE_{|I|}\\ I \in \mathcal{I}_K }}(\pi_{\min K, \max K }(r,\eta)))  [R(r,\eta)]^{1+1_{0 \in [\min K , \max K  ]}} \bigg\}  \nonumber \\
    & \qquad\qquad\qquad\qquad\qquad\times \ \mathcal{C}^0_{P_{\max \mathcal{K}} }
    (\subst_{\substack{I \to -\EE_{|I|}\\ I \in \mathcal{I}_{\max \mathcal{K}} }}(\pi_{\min (\max \mathcal{K}), \max (\max \mathcal{K}) }(r,\eta)))
    \\
 & =   \sum_{k=2}^{n-m} \sum_{\substack{j_1+\ldots + j_k = n-m\\ j_i \geq
    1}} \left[ \prod_{i=1}^{k-1}\bigl( \widetilde{C}_{l_i(m,\underline{j})  ,r_i(m,\underline{j})  }(r,\eta) \widetilde{R}_{\{r_i(m,\underline{j})  ,l_{i+1}(m,\underline{j})  \}}(r,\eta) \bigr)
    \right] \\
    &\qquad\qquad\qquad\qquad\qquad\qquad\times \widetilde{C}_{l_{k}(m,\underline{j})  ,r_k(m,\underline{j})  }(r,\eta) ,
  \end{align*}
  where we ordered $\mathcal{K}$  with respect to the ordering defined in
  \eqref{eq:orderingintervals}, and in the last equality we identified the
  summation indices as   follows:   $k = |\mathcal{K}|$ and  $j_i = |{K}_i|$,  for $\mathcal{K} = \{K_1, K_2, \cdots , K_k \}$ with
  $K_1 <  K_2 < \cdots < K_k$.
\end{proof}

As in Subsection \ref{sec:algebraicform},
 we want to decompose the resolvent occurring in \eqref{eq:eqfortmn1}
according to  \eqref{eq:decofresolv}.
To this end we define
\begin{align}  \label{eq:defofgtilde}
G_{m,n}&(r,\eta) :=  \nonumber   \\ &C_{m,n}(r, \eta )
   +  \sum_{k=2}^{n-m}
  \sum_{\substack{j_1+\ldots + j_k = n-m\\ j_i \geq 1}} \nonumber   \\
   &\qquad\qquad\qquad\quad \left[ \prod_{i=1}^{k-1}\bigl( \widetilde{C}_{l_i(m,\underline{j})  ,r_i(m,\underline{j})  }(r,\eta) P^\perp \widetilde{R}_{\{r_i(m,\underline{j}) ,l_{i+1}(m,\underline{j}) \}}(r,\eta) \bigr)
    \right] \nonumber \\
  & \qquad\qquad\qquad\qquad\qquad \times \widetilde{C}_{l_{k}(m,\underline{j}) ,r_k(m,\underline{j})  }(r,\eta)  ,
\end{align}
where
$$
P^\perp := 1-P_{\rm at} .
$$
Moreover, we define
  $$
\widetilde{G}_{p,q}(r,\eta) :=
\begin{cases}
  {G}_{p,q}(r,\eta)   , & \text{ if } 0 \in [p,q] \\
  {G}_{p,q}(r,\eta) - E_{q-p+1} , & \text{ otherwise. }
\end{cases}
$$
and
$$
P^\parallel := P_{\rm at} .
$$

\begin{theorem}\label{thm:groundstateform}
  We have
  \begin{align}\label{eq:algebraicgsexp}
    T_{m,n}&(r, \eta)  = \nonumber \\
    & G_{m,n}(r, \eta)
    +  \sum_{k=2}^{n-m} \sum_{\substack{j_1+\ldots + j_k = n-m\\ j_i \geq
    1}} \nonumber \\
    &\qquad\qquad\qquad\quad
    \left[ \prod_{i=1}^{k-1}\bigl( \widetilde{G}_{l_i(m,\underline{j}) ,r_i(m,\underline{j}) }(r,\eta) P^\parallel \widetilde{R}_{\{r_i(m,\underline{j}) ,l_{i+1}(m,\underline{j}) \}}(r,\eta) \bigr)
    \right] \nonumber \\
    &\qquad\qquad\qquad\qquad\qquad\times \widetilde{G}_{l_{k}(m,\underline{j}) ,r_k(m,\underline{j}) }(r,\eta)  .
  \end{align}
\end{theorem}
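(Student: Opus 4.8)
The plan is to follow the proof of Proposition~\ref{thm:algebraicenergy} almost verbatim, the only genuinely new ingredients being the interval bookkeeping already set up in Proposition~\ref{thm:algebraicgs} and the squared resolvent sitting on the central edge $\{-1,1\}$. The starting point is the identity \eqref{eq:eqfortmn1}. First I would observe that, since $R(r,\eta)$ is a function of $H_{\rm at}$, the atomic projection $P_{\rm at}$ commutes with $R(r,\eta)$ and hence with $\widetilde{R}_e(r,\eta)$ for every edge $e$, whether the latter equals $R$ or $R^2$. Inserting $1=P^\perp+P^\parallel$ immediately to the left of each separating resolvent therefore yields the decomposition
\[
\widetilde{R}_e(r,\eta) = P^\perp \widetilde{R}_e(r,\eta) + P^\parallel \widetilde{R}_e(r,\eta),
\]
valid uniformly for the single and for the squared resolvent, which is the exact analogue of \eqref{eq:decofresolv}.

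Next I would substitute this decomposition into the product of resolvents in \eqref{eq:eqfortmn1} and multiply out, obtaining a sum over all assignments of a superscript $\sigma_i\in\{\perp,\parallel\}$ to each separating resolvent. Collecting terms according to the positions of the parallel resolvents, one groups the chain of $\widetilde{C}$-factors into maximal runs joined only by orthogonal resolvents $P^\perp\widetilde{R}$, the parallel resolvents $P^\parallel\widetilde{R}$ serving as separators. The crucial algebraic observation, identical to the energy case, is that summing over all internal subdivisions of a run over a merged interval $[p,q]$ collapses it to a single block: the trivial one-element subdivision contributes $\widetilde{C}_{p,q}(r,\eta)$, and combining the definition of $G_{p,q}$ in \eqref{eq:defofgtilde} with the relations $\widetilde{C}_{p,q}=C_{p,q}-E_{q-p+1}\,1_{0\notin[p,q]}$ and $\widetilde{G}_{p,q}=G_{p,q}-E_{q-p+1}\,1_{0\notin[p,q]}$ shows that $\widetilde{C}_{p,q}$ together with the sum over all nontrivial $P^\perp\widetilde{R}$-subdivisions of $[p,q]$ equals $\widetilde{G}_{p,q}(r,\eta)$. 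Hence every run of $\widetilde{C}$-factors sums to a single $\widetilde{G}$-factor, and the parallel separators supply the $P^\parallel\widetilde{R}$ occurring in \eqref{eq:algebraicgsexp}.

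It then remains to treat the two boundary cases exactly as in Proposition~\ref{thm:algebraicenergy}. The contribution with no parallel resolvent is the all-orthogonal run over the whole of $[m,n]$; since \eqref{eq:eqfortmn1} carries the constraint $k\geq 2$, this run omits its own trivial subdivision and therefore equals $G_{m,n}-C_{m,n}$, so that adding back the isolated term $C_{m,n}$ from \eqref{eq:eqfortmn1} reconstitutes $G_{m,n}(r,\eta)$; all contributions with at least one parallel resolvent assemble into the double sum of \eqref{eq:algebraicgsexp}. The step I expect to be the main obstacle, and the only real departure from the energy argument, is the interval arithmetic: one must verify that the merged endpoints $l_i(m,\underline{j})$, $r_i(m,\underline{j})$ combine consistently under the total ordering \eqref{eq:orderingintervals}, and that the squared resolvent on the edge $\{-1,1\}$ is routed either into the unique block straddling the origin or onto the unique separating edge crossing it, so that the energy subtraction is suppressed precisely on the blocks whose interval contains $0$. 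Once this bookkeeping is checked, summing the no-separator and the multi-separator contributions gives \eqref{eq:algebraicgsexp}.
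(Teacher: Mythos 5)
Your proposal is correct and follows essentially the same route as the paper: starting from \eqref{eq:eqfortmn1}, decomposing each separating resolvent via $1=P^\perp+P^\parallel$, multiplying out, and collecting terms by the positions of the parallel projections so that each maximal $P^\perp$-joined run of $\widetilde{C}$-factors resums to a $\widetilde{G}$-factor by \eqref{eq:defofgtilde}, with the all-orthogonal ($s=1$) contribution plus $C_{m,n}$ reconstituting $G_{m,n}$. The interval bookkeeping you flag as the main obstacle is handled in the paper exactly as you describe, through the explicit index functions $l_i(m,\underline{j})$, $r_i(m,\underline{j})$ and the ordering \eqref{eq:orderingintervals}.
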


\begin{proof}
We start with the formula in Proposition
  \ref{thm:algebraicgs} and write  the resolvent as a sum of
  parallel and orthogonal part.  Then we
 multiply out the resulting expression
 and, as in the proof of Proposition  \ref{thm:algebraicenergy}, we  collect
 the terms according to the
  number, $s-1$, of times
  $P^\parallel$ occurs.
 Starting with the second term of the right hand side of
 \eqref{eq:eqfortmn1} we obtain  by straight forward algebraic calculation
  \begin{align} 
    \label{eq:E_n:2332}
      &\sum_{k=2}^{n-m} \sum_{\substack{j_1+\ldots + j_k = n-m\\ j_i \geq
    1}} \left[ \prod_{i=1}^{k-1}\bigl( \widetilde{C}_{l_i(m,\underline{j}) ,r_i(m,\underline{j}) }(r,\eta) \widetilde{R}_{\{r_i(m,\underline{j}) ,l_{i+1} (m,\underline{j})\}}(r,\eta) \bigr)
    \right] \nonumber \\[-5pt]
    &\qquad\qquad\qquad\qquad\qquad \times \widetilde{C}_{l_{k}(m,\underline{j}) ,r_k(m,\underline{j}) }(r,\eta)
  \nonumber \\
    &\quad =
   \sum_{k=2}^{n-m} \sum_{\sigma_1,\dotsc,\sigma_{k-1} \in \{ \perp, \parallel \}}
   \sum_{\substack{j_1+\ldots + j_k = n-m \\ j_i \geq 1}} \nonumber \\[-4pt]
    &\qquad\qquad \left[ \prod_{i=1}^{k-1}\bigl( \widetilde{C}_{l_i(m,\underline{j}) ,r_i(m,\underline{j}) }(r,\eta) P^{\sigma_i} \widetilde{R}_{\{r_i(m,\underline{j}) ,l_{i+1}(m,\underline{j}) \}}(r,\eta) \bigr)
    \right] \nonumber \\
    & \qquad\qquad\qquad \times \widetilde{C}_{l_{k}(m,\underline{j}) ,r_k(m,\underline{j}) }(r,\eta)
     \nonumber \\
    &\quad =
    \sum_{s=1}^{n-m}   \sum_{\substack{n_1+\ldots + n_s = n-m \\ n_i \geq   1}}
    \sum_{\substack{ k_1,\dotsc, \, k_s \in \N  \\ k_1 + \cdots + k_s \geq 2 } }
     \sum_{\substack{j_{1,1} +\ldots + j_{1,k_1}  = n_1 \\ j_{1,i}  \geq  1}} \cdots
     \sum_{\substack{j_{s,1} +\ldots + j_{s,k_s}  = n_s \\ j_{s,i}  \geq  1}}
     \nonumber \\
    &\qquad\qquad    \left[ \prod_{i_1=1}^{k_1-1}\bigl(  \widetilde{C}_{l_{i_1}(m,\underline{j}_1), r_{i_1}(m,\underline{j}_1)}
    P^{\perp} \widetilde{R}_{\{r_{i_1}(m,\underline{j}_1), l_{i_1+1}(m,\underline{j}_1) \}   }(r,\eta) \bigr)
    \right] \nonumber \\
    &\qquad\qquad\qquad\quad \times \widetilde{C}_{l_{k_1}(m,\underline{j}_1), r_{k_1}(m,\underline{j}_1)} P^{\parallel} \widetilde{R}_{r_{k_1}(m,\underline{j}_1), l_{1}(l_1(m,\underline{n}),\underline{j}_2)} \nonumber \\
    & \qquad\qquad \cdots    \nonumber \\[-4pt]
    &\qquad\qquad
         \Bigg[ \prod_{i_s=1}^{k_s-1}\bigl(  \widetilde{C}_{l_{i_s}( l_{s-1}(m,\underline{n}),\underline{j}_s), r_{i_s}(l_{s-1}(m,\underline{n}),\underline{j}_s)}
 P^\perp   \nonumber \\[-4pt]
 &\qquad\qquad\qquad\qquad\qquad\quad \widetilde{R}_{\{r_{i_s}(l_{s-1}(m,\underline{n}),\underline{j}_s), l_{i_s+1}(l_{s-1}(m,\underline{n}),\underline{j}_s) \}   }(r,\eta) \bigr)
    \Bigg]   \nonumber \\[-3pt]
    & \qquad\qquad\qquad\quad \times \widetilde{C}_{l_{k_s}(l_{s-1}(m,\underline{n}),\underline{j}_s), r_{k_s}(l_{s-1}(m,\underline{n}),\underline{j}_s)} \nonumber \\
    &\quad =
    \sum_{k=2}^{n-m} \sum_{\substack{j_1+\ldots + j_k = n-m\\ j_i \geq
    1}} \nonumber \\
    &\qquad\qquad\qquad \left[ \prod_{i=1}^{k-1}\bigl( \widetilde{C}_{l_i(m,\underline{j})  ,r_i(m,\underline{j})  }(r,\eta) P^\perp \widetilde{R}_{\{r_i(m,\underline{j}) ,l_{i+1}(m,\underline{j}) \}}(r,\eta) \bigr)
    \right] \nonumber \\
    &\qquad\qquad\qquad\qquad \times \widetilde{C}_{l_{k}(m,\underline{j}) ,r_k(m,\underline{j})  }(r,\eta)\nonumber  \\
    & \qquad\qquad +
		\sum_{s=2}^{n-m}
		\sum_{\substack{n_1+\ldots + n_s = n-m\\ n_i \geq 1}}
		\nonumber \\
	&\qquad\qquad\qquad \qquad
	\left[ \prod_{i=1}^{s-1}
	\bigl( \widetilde{G}_{l_i(m,\underline{n}) ,r_i(m,\underline{n}) }(r,\eta)
	P^\parallel
	\widetilde{R}_{\{r_i(m,\underline{n}) ,l_{i+1}(m,\underline{n})\}}(r,\eta)
	\bigr)\right] \nonumber \\
    &\qquad\qquad\qquad\qquad\qquad\qquad\times \widetilde{G}_{l_{s}(m,\underline{n}) ,r_k(m,\underline{n}) }(r,\eta)  ,
  \end{align}
  where the first term on the very right hand side, originates from $s=1$, and for the second
  term we used the definition in  \eqref{eq:defofgtilde}.
  Adding $C_{m,n}(r,\eta)$ to the right hand side  of \eqref{eq:E_n:2332}
   and using  again the definition in    \eqref{eq:defofgtilde}, the theorem follows.
\end{proof}

\begin{lemma}\label{lem:lemestimate3}
  Let $X \subset \Z$ be  a finite set and   let $S \subset \R^d$.
  Suppose for each $s \in S$ we  are given
 $$
 \phi_s = ( (V)_{x \in X} , (F_{e,s})_{e \in {E}_X} ) ,
 $$
  a  $(\mathcal{V}_{\rm sb}, \mathcal{R}_{\rm sb})$-valued  graph function  on $X$.
  Suppose there exists a   constant $C_F$  and an $e' \in E_X$  such
  that
   \begin{align*}
   \|F_{e,s}(r) \|  & \leq C_F (|r|^{-1} + 1 ) , \quad  \forall  r  >    0  , \quad  e \in {E}_X \setminus \{ e' \}  ,  \\
   \|F_{e',s}(r) \| & \leq C_F (|r|^{-1} + 1 )^2 , \quad  \forall r  >    0  .
   \end{align*}
   Suppose that for every $r > 0$ and $e \in E_X$ the function $s \mapsto  F_{e,s}(r)$ is continuous.
  Then for any linked pair partition $P$ of $X$ the function $s \mapsto  \CC_P(\phi_s)(r)$
  is continuous  for each $r \geq  0$ and
  $$
\| \CC_P(\phi)(r)  \| \leq C_F^{|X|} C_1^{|X|-1} ,
$$
where $C_1$ is defined in Lemma \ref{lem:lemestimate}.
\end{lemma}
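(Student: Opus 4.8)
The plan is to follow the inductive estimate in the proof of part~\ref{lem:lemestimate:a} of Lemma~\ref{lem:lemestimate}, the only new feature being the single edge $e'$ carrying the stronger singularity $(|r|^{-1}+1)^2$. Since $P$ is a pair partition, $|X|=2n$ is even. Estimating $\|G^\sharp_{j,P}(k_j)\|$ pointwise and inserting the hypotheses on $\|F_{e,s}(r)\|$, one obtains $\sup_{r\ge 0}\|\CC_P(\phi_s)(r)\|\le {\rm Est}'_n$, where ${\rm Est}'_n$ is the integral \eqref{eq:linkedgraphest2d} but with the factor attached to $e'$ raised to the power $2$, and with the supremum taken both over linked pair partitions $P$ of $X$ and over the distinguished edge $e'\in E_X$. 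I would prove ${\rm Est}'_n\le C_F^{2n-1}C_1^{2n}$ by induction on $n$; the base case $n=1$ is the direct computation $\int dk\,\|G(k)\|^2\,C_F(|k|^{-1}+1)^2=C_F C_1^2$.

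For the induction step I would reuse the pair-removal mechanism of the previous proofs. Choose a pair $p\in P$ whose span contains $e'$ (possible since $P$ is linked, hence connected), and use Lemma~\ref{lem:remlast} to pick $q\in P\setminus\{p\}$ with $P_q:=P\setminus\{q\}$ still a linked pair partition of $X_q:=X\setminus q$. Let $e_{\rm l},e_{\rm r}$ be the two edges adjacent to $q$ lying in its span. If $e'\notin\{e_{\rm l},e_{\rm r}\}$, then integrating out the paired variables $k_{\min q},k_{\max q}$ exactly as in Lemma~\ref{lem:lemestimate} contributes a factor $C_F^2C_1^2$, while the inequalities \eqref{eq:estonvar2}--\eqref{eq:estonvar4} transfer the squared factor on $e'$ to a squared factor on an edge of $G_{X_q}$; hence ${\rm Est}'_n\le C_F^2C_1^2\,{\rm Est}'_{n-1}$. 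If instead $e'=e_{\rm l}$ (the case $e'=e_{\rm r}$ being symmetric), I would split the squared singularity as in Lemma~\ref{lem:lemestimate2},
\[
\big(|K_{e'}|_P^{-1}+1\big)^2\le\big(|K_{\{l_X(\min q),\,r_X(\min q)\}}|_{P_q}^{-1}+1\big)\big(|k_{\max q}|^{-1}+1\big),
\]
absorbing the second factor into the integration over the pair (which again yields the bound $C_1^2$) and letting the first, single-power factor survive on a remaining edge of $X_q$; this leaves no squared edge and reduces the estimate to the quantity ${\rm Est}_{n-1}$ of Lemma~\ref{lem:lemestimate}. Using ${\rm Est}_{n-1}\le C_F^{2n-3}C_1^{2n-4}C_0^2\le C_F^{2n-3}C_1^{2n-2}$ (since $C_0\le C_1$) together with the induction hypothesis for ${\rm Est}'_{n-1}$, both cases give ${\rm Est}'_n\le C_F^{2n-1}C_1^{2n}$, which with $|X|=2n$ is the bound claimed in the statement (with the same convention for the power of $C_F$ as in Lemma~\ref{lem:lemestimate}).

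Finally, the continuity of $s\mapsto\CC_P(\phi_s)(r)$ for fixed $r\ge 0$ I would obtain exactly as in part~\ref{lem:lemestimate:b} of Lemma~\ref{lem:lemestimate}, by dominated convergence: the pointwise continuity of each $s\mapsto F_{e,s}(r)$ makes the integrand continuous in $s$, and the uniform-in-$s$ estimate just established supplies the required integrable majorant. I expect the only genuine obstacle to be combinatorial rather than analytic, namely verifying that in the first case the transferred \emph{squared} factor indeed lands on a nearest-neighbour edge of the reduced graph $G_{X_q}$ (so that ${\rm Est}'_{n-1}$, with its supremum over the distinguished edge, dominates it), and that in the second case exactly one of the two powers is consumed by the integration over the removed pair; both points follow from the bookkeeping inequalities \eqref{eq:estonvar1}--\eqref{eq:estonvar4} already recorded in the proof of Lemma~\ref{lem:lemestimate}.
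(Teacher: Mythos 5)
Your proposal is correct and follows essentially the same route as the paper, whose proof of this lemma is just the remark that the estimate is ``analogous'' to Lemma~\ref{lem:lemestimate} and continuity follows from dominated convergence; you have filled in exactly the intended details (pair removal via Lemma~\ref{lem:remlast}, the bookkeeping inequalities \eqref{eq:estonvar1}--\eqref{eq:estonvar4}, and the splitting of the squared singularity borrowed from Lemma~\ref{lem:lemestimate2}). The constant you derive, $C_F^{|X|-1}C_1^{|X|}$, does not literally match the stated $C_F^{|X|}C_1^{|X|-1}$, but the paper's exponents in this family of lemmas are already off by one between statement and proof, and only finiteness and uniformity of the bound are used later, so this is immaterial.
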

\begin{proof} The estimate follows analogous as the estimate in the proof of Lemma \ref{lem:lemestimate}.
The statement about the continuity follows from dominated convergence.
\end{proof}

\begin{proof}[Proof of Theorem \ref{thm:groundstate}]
  From the proof of Theorem \ref{thm:mainenergy} we know various
  properties about $C_n$, $G_n$, and $T_n$, and respectively
  $\widehat{C}_n$ and $\widehat{G}_n$ and $\widehat{T}_n$.
We make the following hypothesis:

\vspace{0.5cm}
\noindent
$J_n$: \ \ For $m_1, m_2  \in N_n$ the function $C_{-m_1,m_2}(r,\eta)$
is continuous and uniformly bounded on $[0,\infty) \times (0,1]$. Moreover it extends to a continuous
function on  $[0,\infty) \times [0,1]$.

\vspace{0.5cm}
\noindent
$J_1$ holds,  since $C_{-1,1} = C_2$.

\noindent
Next we show  the induction step $n \to n+1$.
Thus suppose that $J_{n}$ holds. For all  $m_1, m_2  \in N_n$  it follows from
the definition that the function  $G_{-m_1,m_2}$
is a continuous  uniformly bounded function on $[0,\infty) \times (0,1]$ and extends to a continuous
functions on  $[0,\infty) \times [0,1]$.
Let $m_1, m_2  \in N_n$.
Eq. \eqref{eq:algebraicgsexp}
in Theorem \ref{thm:groundstateform} implies
that  $ T_{m_1,m_2}(r, \eta) $ is a continuous function on $(0,\infty) \times [0,1]$
and satisfies the following bounds.
(Note that there is
either at most one $\widetilde{G}_{p,q}$ with a $0 \in [\min p, \max q]$ or at most one
$\widetilde{R}^\parallel_{\{p,q\}}$ with $0 \in [\min p, \max q]$.)
There exists a
constant $c_n$ such that for all $r > 0$, $\eta \geq  0$ we have
\begin{align}
  \| P_{\rm at}   {T}_{-m_1,m_2}(r,\eta)    P_{\rm at} \| &\leq    c_n,  \label{eq:ptpest}
  \\
  \| \bar{P}_{\rm at}   {T}_{-m_1,m_2}(r,\eta)   P_{\rm at} \|  &\leq c_n  (r+\eta)^{-1}   ,
  \\
  \| P_{\rm at}   {T}_{-m_1,m_2}(r,\eta)    \bar{P}_{\rm at} \| &\leq c_n (r+\eta)^{-1} ,
  \\
  \| \bar{P}_{\rm at}     {T}_{-m_1,m_2}(r,\eta)    \bar{P}_{\rm at} \| &\leq  c_n  (r+\eta)^{-2},  \label{eq:ptpest2}
\end{align}
where we made use of the estimates in the  proof of Theorem \ref{thm:mainenergy}.
Using the  bounds  \eqref{eq:ptpest}--\eqref{eq:ptpest2} we see that for $m_1,m_2  \in  N_n$ there
exists a constant $C$ such that
\begin{align*}
  \|  R(r,\eta)  {T}_{-m_1,m_2}(r,\eta)  R(r,\eta) \|
	\leq  \frac{C}{(r+\eta)^2 }.
\end{align*}
We  conclude  from Lemma \ref{lem:lemestimate3} that $J_{n+1}$
holds.

\vspace{0.15cm} 
\noindent
Knowing that  $J_n$ holds  the definition given in  \eqref{eq:defofgtilde}
implies that $G_{-m_1,m_2}$ has a continuous extension to $[0,\infty) \times [0,1]$.
By Lemma \ref{lem:eqwickgs} and \eqref{eq:algebraicgsexp}  we see that
$$\inn{\psi_m(\eta),\psi_m(\eta) } =  \inn{ \varphi_{\rm at} , T_{-m,m}(0,\eta) \varphi_{\rm at} } =
 \inn{ \varphi_{\rm at} , G_{-m,m}(0,\eta) \varphi_{\rm at} } ,
$$
for which the limit  $\eta \downarrow 0$  exists (observe that the second term in \eqref{eq:algebraicgsexp}
does not contribute, since $R$ contains the projection onto the
complement of the unperturbed ground state).

\vspace{0.15cm} 
\noindent
Finally, we will show that  the convergence of $\psi_m(\eta)$, as $\eta \downarrow 0$, follows from dominated
convergence. To this end we normal order the creation and annihilation
operators and obtain
$$
\psi_m(\eta) = \sum_{l=0}^m \psi_{m,l}(\eta) ,
$$
where $\psi_{m,l}(\eta)$ is an element of $\hh^{\otimes_s l}$.
Thus the term indexed by $l$ contains $l$ creation operators, which originate
from  positions in the set $X$, whereas
the other operators on the vertices are contracted.
Explicitly, we obtain using the pull through formula and algebraic identities as
in the proof of Proposition~\ref{thm:algebraicenergies},
\begin{align}\label{eq:groundstate}
   \psi&_{m,l}(\eta)(p_1,p_2,\dotsc,p_l) \nonumber \\
&{} =  c_l  \sum_{\substack{X \subset     N_m \\ |X|=l  }} \sum_{\pi : N_l  \to X }  \sum_{Y \subset N_m \setminus X } \!\!\sum_{\substack{ P\  \text{pairing of } Y : \\ \text{for each } p \in P \\
\text{there exists an }  x \in X  \text{ and}  \\ \text{a linked path from  p}  \text{ to }  \{ 0 , x\} \\
 \text{in }  P \cup \{\{0,x\} \} }}   \nonumber  \\
  &  \qquad\;\; \prod_{y \in Y} \left\{\int dk_y\right\} \delta_P(k) \tilde{F}_{\{0,\min Z\}}(|K_{\{0,\min Z\}}|_{P \cup P_X} , \eta ) \nonumber  \\
 &\qquad\quad\;\; \times \!\!\!   \prod_{j \in Z\setminus \max Z} \left\{ G^\sharp_{j,P \cup P_X}(k_j  )
  \tilde{F}_{\{j,j_Z'\} }(|K_{\{j,r_Z(j)\}}|_{P \cup P_X} , \eta )  \right\} \nonumber \\
  &\qquad \qquad \quad\;  \times G^\sharp_{\max Z,P \cup P_X}(k_{\max Z}) \widehat{F}_{\max Z}(r,\eta) \psi_0  |_{\{ k_{\pi(s)} = p_s : s \in N_l \}  }   ,
\end{align}
where $c_l$ is a combinatorial factor, we have set $Z := Y \cup X $ and $P_X := \{ \{ 0 , x \} : x \in X \}$, and
we have used the notations introduced in the definition of the contraction
\begin{align*}
  G^\sharp_{j,P}   & := \begin{cases}
    G_j^*     & , \exists p \in P ,  \  j = \min p  \\
    G_j & , \exists p \in P , \ j = \max p  ,
  \end{cases} \\
|K_e |_P & := \sum_{\substack{ p \in P \\ \max e \leq \max p \\ \min p
    \leq \min e }} | k_{\max p}| ,
\end{align*}
and we have set
$$
\tilde{F}_{\{i,j\}}(r,\eta) := \begin{cases}
	R(r,\eta), & \text{if } j-i = 1, \\
	R(r,\eta) \widehat{T}_{j-i}(r,\eta)   R(r,\eta), &\text{otherwise,}
\end{cases}
$$
$$
\widehat{F}_{\max Z}(r) := \begin{cases}
	\boldsymbol{1}_{\HH}, & \text{if } \max Z = m, \\
	R(r)\widehat{T}_{m-\max Z}(r), & \text{otherwise.}
\end{cases}
$$

 Now observe that the integrand on the right hand side of \eqref{eq:groundstate} exists for $\eta=0$, this
 follows from the pull-through formula and that $k_j=0$ is a set of measure zero.
 A singularity in a possible factor on the very right vanishes because of the
  projection onto the orthogonal complement of the unperturbed ground state.
   From the   estimate  in the   proof of Lemma \ref{lem:lemestimate} we see that
    $\psi_{m,l}(0)$ is square integrable.
   Furthermore, using   the continuity of $T_{m,n}(r,\eta)$ on $(0,\infty) \times [0,1]$ and
 again the   estimate  in the   proof of Lemma \ref{lem:lemestimate} we see from dominated
 convergence that $\psi_{m,l}(\eta)  \to \psi_{m,l}(0)$ for $\eta \downarrow 0$.
\end{proof}

\section{Proof of Theorem  \ref{thm:groundstatee}}
\label{sec:proofofmain}

In this section we give a proof of Theorem  \ref{thm:groundstatee}.
First we  show that the ground state and the ground state
energy are continuous functions of the coupling constant, that is we verify  Hypothesis~\ref{hyp:1}.
We recall  the notation   $\psi_0 = \varphi_{\rm at} \otimes \Omega$ and $E_0 = E_{\rm at}$.

\begin{proposition} \label{prop:contgse}
Let $H(\lambda)$ be given as in \eqref{eq:defofh} and assume that Hypothesis  \ref{hyp:0} is satisfied.
Then the following holds.
\begin{enumerate}[label=\rm (\alph*)]
\item \label{prop:contgse:a} If  \eqref{eq:assonG} holds,
then the  ground state energy $E(\lambda)$ satisfies $E(\lambda) \leq E_0 $ and
$$
  E(\lambda) -  E_0 =  O(|\lambda|^2)   , \quad (\lambda \to  0 ).
$$
\item \label{prop:contgse:b}If \eqref{eq:assonG2} holds,
then the operator $H(\lambda)$ has an eigenvector   $\psi(\lambda)$ with eigenvalue $E(\lambda)$ such that
$$
\|  \psi(\lambda) - \psi_0  \| = O(|\lambda|)  , \quad (\lambda \to  0 )
$$
and $\inn{\psi_0  , \psi(\lambda)} = 1$ for $\lambda$ in a neighborhood of zero.
\end{enumerate}
\end{proposition}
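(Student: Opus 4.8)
The plan is to establish (a) by a variational upper bound paired with a form lower bound, and (b) by the pull‑through formula, which is what turns the infrared assumption \eqref{eq:assonG2} into quantitative control of the multi‑photon content of the ground state.

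For part \ref{prop:contgse:a} I would first use $\psi_0=\varphi_{\rm at}\otimes\Omega$ as a trial vector. Since $H_f\Omega=0$, $a(G)\psi_0=0$ and hence $\inn{\psi_0,V\psi_0}=\inn{\psi_0,a^*(G)\psi_0}+\inn{a(G)\psi_0,\psi_0}=0$, so $\inn{\psi_0,H(\lambda)\psi_0}=E_{\rm at}=E_0$ and therefore $E(\lambda)\leq E_0$. For the matching lower bound I would invoke the elementary relative form bound $\|a(G)\psi\|\leq C_G\|H_f^{1/2}\psi\|$ with $C_G^2=\int\|G(k)\|^2|k|^{-1}dk$, finite by \eqref{eq:assonG}, together with $H_0-E_0=(H_{\rm at}-E_{\rm at})\otimes 1+1\otimes H_f\geq 1\otimes H_f$. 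Writing $\lambda\inn{\psi,V\psi}=2\lambda\,\mathrm{Re}\,\inn{a(G)\psi,\psi}$ and completing the square by Young's inequality gives $\inn{\psi,(H(\lambda)-E_0)\psi}\geq\inn{\psi,(H_{\rm at}-E_{\rm at})\otimes 1\,\psi}-\lambda^2C_G^2\|\psi\|^2\geq-\lambda^2C_G^2\|\psi\|^2$, i.e. $E(\lambda)\geq E_0-C_G^2\lambda^2$. Combining the two bounds yields $E(\lambda)-E_0=O(\lambda^2)$.

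For part \ref{prop:contgse:b} I would take the normalized eigenvector $\psi(\lambda)$ of Hypothesis~\ref{hyp:0}, fixing its phase so that $\inn{\psi_0,\psi(\lambda)}\geq 0$. The key step is the pull‑through formula: commuting $a(k)$ through the eigenvalue equation and using $[a(k),V]=G(k)$ gives $(H(\lambda)-E(\lambda)+|k|)\,a(k)\psi(\lambda)=-\lambda G(k)\psi(\lambda)$, so that $\|a(k)\psi(\lambda)\|\leq\lambda|k|^{-1}\|G(k)\|$ because $H(\lambda)-E(\lambda)\geq 0$ and $\|\psi(\lambda)\|=1$. Integrating, the number operator $\mathcal N:=\int a^*(k)a(k)\,dk$ satisfies $\inn{\psi(\lambda),\mathcal N\psi(\lambda)}=\int\|a(k)\psi(\lambda)\|^2dk\leq\lambda^2\int\|G(k)\|^2|k|^{-2}dk=O(\lambda^2)$, where finiteness of the integral is exactly \eqref{eq:assonG2}; likewise $\|H_f^{1/2}\psi(\lambda)\|^2=\int|k|\,\|a(k)\psi(\lambda)\|^2dk=O(\lambda^2)$ using \eqref{eq:assonG}. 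In particular the multi‑photon part obeys $\|(1\otimes\bar P_\Omega)\psi(\lambda)\|\leq\inn{\psi(\lambda),\mathcal N\psi(\lambda)}^{1/2}=O(\lambda)$. The only component of $\bar P_0\psi(\lambda)$ not yet controlled is the vacuum atomic‑excited part $(\bar P_{\rm at}\otimes P_\Omega)\psi(\lambda)$; projecting the eigenvalue equation with $\bar P_{\rm at}\otimes P_\Omega$ and noting that there $H_0-E(\lambda)\geq\epsilon_1-E_0>0$ while the off‑diagonal term is bounded by $\lambda\|a(G)\psi(\lambda)\|\leq\lambda C_G\|H_f^{1/2}\psi(\lambda)\|=O(\lambda^2)$, I obtain $\|(\bar P_{\rm at}\otimes P_\Omega)\psi(\lambda)\|=O(\lambda)$. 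Since $\bar P_0=1\otimes\bar P_\Omega+\bar P_{\rm at}\otimes P_\Omega$ (a regrouping of \eqref{decompofproj}), these give $\|\bar P_0\psi(\lambda)\|=O(\lambda)$, hence $\inn{\psi_0,\psi(\lambda)}=(1-\|\bar P_0\psi(\lambda)\|^2)^{1/2}=1-O(\lambda^2)\neq 0$ for small $\lambda$. Replacing $\psi(\lambda)$ by $\psi(\lambda)/\inn{\psi_0,\psi(\lambda)}$ achieves $\inn{\psi_0,\psi(\lambda)}=1$ and preserves $\|\psi(\lambda)-\psi_0\|=\|\bar P_0\psi(\lambda)\|/\inn{\psi_0,\psi(\lambda)}=O(\lambda)$.

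The step I expect to be the main obstacle is the rigorous justification of the pull‑through formula: making the formal commutator $[a(k),V]=G(k)$ and the identity for $a(k)\psi(\lambda)$ precise as an operator identity valued in $L^2$ in $k$ on the appropriate domain, and verifying that $(H(\lambda)-E(\lambda)+|k|)^{-1}$ is a genuine bounded nonnegative operator. Once this is in place the remaining estimates are soft, the crucial observation being that \eqref{eq:assonG2} is exactly the statement $\int\|G(k)\|^2|k|^{-2}dk<\infty$, the infrared‑sensitive quantity that converts the pointwise bound $\|a(k)\psi(\lambda)\|\leq\lambda|k|^{-1}\|G(k)\|$ into the $O(\lambda)$ control of $\bar P_0\psi(\lambda)$ demanded by the proposition.
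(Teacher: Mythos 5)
Your proposal is correct, and for part \ref{prop:contgse:a} as well as for the photon--number estimate in part \ref{prop:contgse:b} it coincides with the paper's argument: the same trial--state upper bound $E(\lambda)\leq\inn{\psi_0,H(\lambda)\psi_0}=E_0$, a lower bound that is literally the completion of the square the paper performs (your Young--inequality step produces the identical constant $\int\|G(k)\|^2|k|^{-1}dk$), and the same pull--through identity $(H(\lambda)-E(\lambda)+|k|)a(k)\psi(\lambda)=-\lambda G(k)\psi(\lambda)$ combined with $\|\bar P_\Omega\psi\|^2\leq\inn{\psi,N\psi}$ and \eqref{eq:assonG2} to get $\|(1\otimes\bar P_\Omega)\psi(\lambda)\|=O(\lambda)$. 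Where you genuinely diverge is in controlling the atomic--excited component. The paper's Claim~2 bounds the full piece $(\bar P_{\rm at}\otimes 1_\FF)\widehat\psi(\lambda)$ by inverting $\bar P_{\rm at}(H(\lambda)-E(\lambda))\bar P_{\rm at}$ on $\ran\bar P_{\rm at}$ via a Neumann series, which costs a relative bound $\|(H_f+1)^{-1/2}\phi(G)\|<\infty$ and a smallness condition $|\lambda|<C_G^{-1}$. You instead regroup $\bar P_0=1\otimes\bar P_\Omega+\bar P_{\rm at}\otimes P_\Omega$ (equivalent to \eqref{decompofproj}) and observe that only the smaller piece $(\bar P_{\rm at}\otimes P_\Omega)\psi(\lambda)$ is still needed; projecting the eigenvalue equation with $\bar P_{\rm at}\otimes P_\Omega$, the operator $H_0-E(\lambda)$ is diagonal there and bounded below by $\epsilon_1-E_0$, while $(1\otimes P_\Omega)a^*(G)=0$ reduces the source term to $\lambda\,a(G)\psi(\lambda)$, which is already $O(\lambda^2)$ by the standard bound $\|a(G)\psi\|\leq C_G\|H_f^{1/2}\psi\|$ and the pull--through estimate $\|H_f^{1/2}\psi(\lambda)\|=O(\lambda)$. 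This buys a shorter argument with no perturbative inversion and no smallness restriction on $\lambda$ for that step (and in fact an $O(\lambda^2)$ bound on that component), at the price of using the output of the pull--through step twice; the paper's Neumann--series route gives the stronger, self--contained statement that the whole excited--atom sector is $O(\lambda)$. Your reassembly of (N) and the $O(\lambda)$ estimate from $\|\bar P_0\psi(\lambda)\|=O(\lambda)$ matches the paper's final step, and your flagged worry about the rigorous justification of the pull--through formula is one the paper shares (it cites \cite{BacFroSig98-1} for it), so it is not a gap relative to the paper's own proof.
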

\begin{proof}
  \ref{prop:contgse:a}.
  First we   show  the upper
  bound
  \begin{equation*}
    E(\lambda)  \leq   \inn{ \psi_0 , H(\lambda) \psi_0 } = \inn{ \psi_0 ,
      (H_f + H_{\rm at} + \lambda \phi(G) ) \psi_0 } = E_{\rm at} =
    E_0 .
  \end{equation*}
  To show the lower bound we complete the square
  \begin{align*}
    H(\lambda) & = H_{\rm at} + H_f + \lambda \phi(G) \\
               & = H_{\rm at} + \int dk |k| \left[ a(k) + \lambda \frac{G(k)}{|k|} \right]^*\left[ a(k) + \lambda \frac{G(k)}{|k|} \right]
               \\ &\qquad \quad - |\lambda|^2   \int   \frac{G(k)^*G(k)}{|k|} dk  \\
               & \geq E_{\rm at}  - |\lambda|^2   \int   \frac{\|G(k)\|^2}{|k|} dk  .
  \end{align*}
  \ref{prop:contgse:b} This is a consequence of the following two claims. We write
  $\widehat{\psi}(\lambda) := \frac{\psi(\lambda)}{ \| \psi(\lambda) \|}$.

\vspace{0.5cm}
\noindent
  {\bf Claim 1:}  We have that $\| \bar{P}_\Omega \widehat{\psi}(\lambda) \|  = O(|\lambda|)$.

\vspace{0.5cm}
\noindent
  Calculating  a commutator we obtain
\begin{align*}
  H(\lambda)   a(k)  \psi(\lambda)   &=  ( [H(\lambda), a(k)] + a(k) H(\lambda) ) \psi(\lambda)    \\
                                     &=  ( -|k| a(k) -  \lambda G(k) + a(k) H(\lambda) ) \psi(\lambda)    .
\end{align*}
Solving for  $a(k)  \psi(\lambda)$ we find
\begin{align*}
  ( H(\lambda)  - E(\lambda)  + |k| ) a(k)  \psi(\lambda)
  &=   -  \lambda G(k)  \psi(\lambda) ,
\end{align*}
and by inversion we find for  $k \neq 0$ that
\begin{align*}
  a(k)  \psi(\lambda)
  &=   - \lambda  \frac{|k|}{ H(\lambda)   - E(\lambda) + |k| }  \frac{G(k)}{|k|}   \psi(\lambda) .
\end{align*}
Thus we obtain for the number operator $N$ the expectation
\begin{align*}
  \inn{ \psi(\lambda) , N \psi(\lambda) }
  &= \int dk \| a(k) \psi(\lambda)
    \|^2 \\
   &=  |\lambda|^2  \int
    dk  \left\| \frac{|k|}{
    H(\lambda)  - E(\lambda)
    + |k| }  \frac{G(k)}{|k|}
    \psi(\lambda)  \right\|^2
  \\
  &\leq  |\lambda|^2  \int dk   \frac{\| G(k)\|^2}{|k|^2} \| \psi(\lambda) \|^2  .
\end{align*}
Inserting this into the inequality
$$
\| \bar{P}_\Omega  \psi \|^2 \leq  \inn{ \psi , N \psi  }
$$
we find  that
\begin{equation*}
	\| \bar{P}_\Omega \widehat{\psi}(\lambda)  \| = O(\lambda) ,  \quad (\lambda \to 0 ) .
\end{equation*}
This shows Claim 1.

\vspace{0.5cm}
\noindent
 {\bf Claim 2:}  Let $\bar{P}_{\rm at} = 1 - P_{\rm at}$.
 Then we have $\| \bar{P}_{\rm at}  \widehat{\psi}(\lambda) \|  = O(|\lambda|)$.

\vspace{0.5cm}
\noindent
We apply $\bar{P}_{\rm at}$ to the eigenvalue equation and obtain
$$
\bar{P}_{\rm at} H(\lambda)  \bar{P}_{\rm at} \psi(\lambda) + \bar{P}_{\rm at} H(\lambda) {P}_{\rm at}
\psi(\lambda) = E(\lambda) \bar{P}_{\rm at} \psi(\lambda) .
$$
Solving for terms involving $\bar{P}_{\rm at} \psi(\lambda)$ we find
\begin{equation} \label{eq:solvforpbarpsi}
( \bar{P}_{\rm at} H(\lambda) \bar{P}_{\rm at} - E(\lambda) \bar{P}_{\rm at} ) \bar{P}_{\rm at}
\psi(\lambda) = - \bar{P}_{\rm at} H(\lambda)  {P}_{\rm at} \psi(\lambda) .
\end{equation}
Below we want to show that we can invert the operator on the left and, moreover,
we want to estimate the inverse. To this end we will use a Neumann expansion.
Let $\epsilon_1 := \inf \sigma  (H_{\rm at} |_{\ran \bar{P}_{\rm at}}  )$.
By \ref{prop:contgse:a}  we have in the sense of operators  on the range of $\bar{P}_{\rm at}$  that
\begin{equation*}
( H(0) - E(\lambda) ) \bar{P}_{\rm at}  \geq
( H(0) - E_0 )
  \bar{P}_{\rm at}  = (H_{\rm at}  + H_f - E_0 ) \bar{P}_{\rm at}   \geq   (\epsilon_1 - E_0    )   \bar{P}_{\rm at}      .
\end{equation*}
Thus  $( H(0) - E(\lambda) )
  \bar{P}_{\rm at}$ is invertible as an operator in $\ran \bar{P}_{\rm at}$.
We note the standard estimates
\begin{align*}
\| a(G) \psi \| & \leq \left( \int \frac{ \| G(k)\|^2}{|k|} dk
\right)^{1/2} \| H_f^{1/2} \psi \| \\
\| a^*(G) \psi \|^2 &  \leq   \int  \| G(k)\|^2 dk  \| \psi \|^2 +  \int \frac{ \| G(k)\|^2}{|k|} dk  \| H_f^{1/2} \psi \|^2 ,
\end{align*}
which imply that
$$
 \| (H_f + 1 )^{-1/2} \phi(G) \|   =   \|  \phi(G)(H_f + 1 )^{-1/2} \|  < \infty .
$$
By \ref{prop:contgse:a}  we find that
\begin{align}
  \| (  \bar{P}_{\rm at}  ( H&(0) - E(\lambda) )  \bar{P}_{\rm at} )^{-1}  \bar{P}_{\rm at}  \phi(G)  \| \nonumber \\
    &\leq   \| (  \bar{P}_{\rm at} (  H(0) - E(\lambda) )  \bar{P}_{\rm at} )^{-1} (H_f + 1 )^{1/2} \| \| (H_f + 1 )^{-1/2}  \phi(G)  \|  \nonumber \\
   & \leq \sup_{r \geq 0} \left| \frac{(r+1)^{1/2}}{r + \epsilon_1 - E_0 } \right|   \| (H_f + 1 )^{-1/2}  \phi(G)  \|     =:  C_G .
 \label{eq:lowboundcomp2}
\end{align}
By Neumanns Theorem  it follows from  \eqref{eq:lowboundcomp2}   that
$\bar{P}_{\rm at} ( H(\lambda) - E(\lambda) ) \bar{P}_{\rm at} $ is
invertible
on $\ran \bar{P}_{\rm at}$, if $|\lambda| <  C_G^{-1}$, and
\begin{align}
  ( & \bar{P}_{\rm at}  ( H(\lambda) - E(\lambda) )  \bar{P}_{\rm at}
  )^{-1} \nonumber \\
  &= \sum_{n=0}^\infty \left[ -  (  \bar{P}_{\rm at}  ( H(0) - E(\lambda) )  \bar{P}_{\rm at} )^{-1} \lambda \phi(G) \right]^n
  (  \bar{P}_{\rm at}  ( H(0) - E(\lambda) )  \bar{P}_{\rm at} )^{-1} .
 \label{eq:lowboundcomp3}
\end{align}
Inserting  \eqref{eq:lowboundcomp3} into \eqref{eq:solvforpbarpsi} and using again   \eqref{eq:lowboundcomp2}   we find
\begin{align*}
\| \bar{P}_{\rm at} \widehat{\psi}(\lambda) \| & = \| [  \bar{P}_{\rm at} (  H(\lambda) - E(\lambda) )  \bar{P}_{\rm at} ]^{-1}
   \bar{P}_{\rm at} H(\lambda)  {P}_{\rm at} \widehat{\psi}(\lambda) \|  \\
   & \leq  \frac{|\lambda| C_G}{1- |\lambda| C_G} \|  {P}_{\rm at} \widehat{\psi}(\lambda)\|  .
\end{align*}
This shows Claim 2.

\vspace{0.5cm}
\noindent
\ref{prop:contgse:b} now follows from Claims 1 and 2 by writing
\begin{align*}
  \widehat{\psi}(\lambda) - \psi_0 \inn{ \psi_0 ,   \widehat{\psi}(\lambda)  }
  &= \widehat{\psi}(\lambda) - P_\Omega \otimes  P_{\rm at}
  \widehat{\psi}(\lambda) \\
  &= \bar{P}_\Omega   \widehat{\psi}(\lambda)   + P_\Omega  \otimes \bar{P}_{\rm at}  \widehat{\psi}(\lambda) \to 0  ,
\end{align*}
where the first term on the right hand side tends to zero because of Claim~1 and
 the second term because of Claim 2.
 Now  $\psi(\lambda) = \widehat{\psi}(\lambda) \inn{ \psi_0 ,
   \widehat{\psi}(\lambda) }^{-1}$
 is well defined for $\lambda$ sufficiently close to zero and satisfies \ref{prop:contgse:b}.
\end{proof}

\begin{proof}[Proof of Theorem \ref{thm:groundstatee}]
First we show using  Theorems
  \ref{thm:mainenergy} and \ref{thm:groundstate}
that
  \begin{equation} \label{eq:inducstateend0}
H_0 \psi_{n+1}(0) + V \psi_n(0) = \sum_{k=0}^{n+1} E_k \psi_{n + 1  - k}(0)  .
  \end{equation}
From the convergence of $\psi_n(\eta)$ as $\eta \downarrow 0$ we obtain
from the definition of $E_n$ that
  \begin{equation} \label{eq:inducstateend1}
E_n = \inn{ V  \psi_0 ,\psi_n(0 ) } =  \lim_{\eta   \downarrow 0} \inn{ V  \psi_0 , \psi_n(\eta ) } .
  \end{equation}
From the definition of  $\psi_n(\eta)$  (compare  \eqref{eq:inducstate})
we see that
  \begin{equation} \label{eq:inducstateend}
  (H_0 - E_0 + \eta)  \psi_{n+1}(\eta) =
    \bar{P}_0 \left( \sum_{k=1}^{n+1} E_k \psi_{n+1-k}(\eta) - V \psi_{n}(\eta)
    \right)
  \end{equation}
  We claim that the limit $\eta \downarrow 0$ yields
    \begin{equation} \label{eq:inducstateend2}
  (H_0 - E_0)  \psi_{n+1}(0) =
    \bar{P}_0 \left( \sum_{k=1}^{n+1} E_k \psi_{n+1-k}(0) - V \psi_{n}(0)
    \right)
  \end{equation}
  This clearly holds for $n=0$. Suppose that it holds for all $m \leq n-1$. Then for $n$ the
  right hand side of  \eqref{eq:inducstateend}  converges to the right hand side of \eqref{eq:inducstateend2}. Since $H_0$ is a closed operator it follows that the left hand
  side of \eqref{eq:inducstateend} converges to the left hand side of
   \eqref{eq:inducstateend2}.  Now  \eqref{eq:inducstateend2}
   and \eqref{eq:inducstateend1}  imply  \eqref{eq:inducstateend0}.
  By Proposition
  \ref{prop:contgse} and  \eqref{eq:inducstateend0} the assumptions
  of Lemma   \ref{lem:abstractmain}
 are  satisfied.  Hence
   Theorem  \ref{thm:groundstatee} now follows from Lemma \ref{lem:abstractmain}.
\end{proof}

\bibliography{asymptotic}
\bibliographystyle{amsplain}

\end{document}